\newtheorem{theorem}{Theorem}[section]
\newtheorem{definition}[theorem]{Definition}
\newtheorem{lemma}[theorem]{Lemma}
\newtheorem{remark}[theorem]{Remark}
\newtheorem{assumption}[theorem]{Assumption}
\newtheorem{corollary}[theorem]{Corollary}
\def\P{\mathbb P}
\def\E{\mathbb E}
\def\R{\mathbb R}
\def\A{\mathcal X}
\def\At{\mathcal X(t,z,x)}
\def\Ata{\mathcal X(t,z,1)}
\def\QH{\mathbf{QH}}
\def\Maxmu{\overline \mu}
\def\Klip{K_{lip}^c}
\def\Klid{K_{lip}^d}
\def\Kmax{K_{max}}
\def\Klipa{K_{lip}^a}
\newcommand{\HZR}[2][]{H^{#1}(\mathbb{R}^{#2})}
\newcommand{\HZpar}[2][]{H^{#1}([0,T]\times \mathbb{R}^{#2})}
\def\semim{semimartingale }
\def\hold{H\"older }
\def\lip{Lipschitz }
\def\lev{L\'evy }
\def\itos{It\^o's }
\newcommand{\op}[2][]{\mathcal{#1}_t #2}
\newcommand{\genop}[2][]{\mathcal{#1} #2}
\def \q {
\begin{flushright}
 $\square$
\end{flushright}
}
\title{Numerical methods for the quadratic hedging problem in Markov
  models with jumps\footnote{We thank Nadia Oudjane for illuminating discussions.}}
\author{Carmine De Franco\footnote{OSSIAM, E-mail:
    carmine.de-franco@ossiam.com}\and Peter Tankov\footnote{LPMA,
    Université Paris Diderot -- Paris 7, E-mail: tankov@math.univ-paris-diderot.fr}\and Xavier Warin\footnote{EDF R\&D, 92141 Clamart, France and Laboratoire de Finance des March\'es de l'Energie, Universit\'e Paris
Dauphine. Email: xavier.warin@edf.fr}}
\begin{document}
\frenchspacing

\maketitle
\begin{abstract}
 We develop algorithms for the numerical computation of the
quadratic hedging strategy in incomplete markets modeled by
pure jump Markov process. Using the Hamilton-Jacobi-Bellman approach,
the value function of the quadratic hedging problem can be related to
a triangular system of parabolic  partial integro-differential
equations (PIDE), which can be shown to possess unique smooth solutions in
our setting. The
first equation is non-linear, but does not depend on the pay-off of
the option to hedge (the pure investment problem), while the other two
equations are linear. We propose convergent finite difference schemes
for the numerical solution of these PIDEs and illustrate our
results with an application to electricity markets, where
time-inhomogeneous pure jump Markov processes appear in a natural
manner. 
\end{abstract}
\medskip

\noindent\textbf{Key words:} Quadratic hedging,  electricity markets,
Markov jump processes, Partial integro-differential equation,
Hamilton-Jacobi-Bellman equation, H\"older spaces, discretization schemes for PIDE.
\medskip

\section{Introduction}\label{intro}
\numberwithin{equation}{section}

In an incomplete market setting, where exact replication of contingent
claims is not possible, quadratic hedging is the most common approach,
among both academics and
practitioners.  This method consists in
minimizing the $\mathbb L^2$ distance between the hedging portfolio and the
claim. Its popularity is due to the fact that the strategy is
linear with respect to the claim, and is relatively easy to compute in
a variety of settings. 

In its most general form, the quadratic hedging problem can be
formulated as follows. Consider a random variable $H\in\mathbb
L^2\left(\mathcal F_T,\P\right)$ (which stands for the option one
wants to hedge) and a set $\A$ of admissible strategies, which is a
subset of the set of adapted processes with caglad paths. The quadratic hedging problem becomes
\begin{align}
\label{quadratic_error}
\text{minimize   } \E^{\P}\left[\left(x+\int_0^T\theta_t dS_t -H \right)^2\right]\, \text{over $x\in\R$ and $\theta\in\A$}
\end{align}
where $S$ is a \semim modeling the stock price. If $(x^*,\theta^*)$
is a minimizer, we call $\theta^*$ the optimal mean-variance
hedging strategy and $x^*$ its price. This problem has been
extensively studied in the literature, starting with the seminal works
of \cite{follmer.sondermann.86} and \cite{follmer.schweizer.91} and
until the complete theoretical solution in the general
semimartingale setting was given in \cite{cern.kall.07}. The case when
$S$ is a $\P$-square integrable martingale is particularly simple and
can be solved using the well known Galtchouk-Kunita-Watanabe
decomposition. The general case is much more involved, and has only
been solved in \cite{cern.kall.07} by means of introducing a specific
non martingale change of measure (the opportunity neutral measure).

The problem of  numerical computation of the hedging strategy is an
important issue in its own right, since various objects appearing in
the theoretical solution (opportunity neutral measure, Galtchouk-Kunita-Watanabe
decomposition, F\"ollmer-Schweizer decomposition) are often not known
in explicit form. When the underlying asset is
modeled by a L\'evy process, a complete semi-explicit solution was
obtained in \cite{huba.kall.kraw.06} using Fourier
methods. Their approach was extended to additive processes in \cite{goutte.oudjane.russo.11}. \cite{laur.pham.99} and
\cite{heath.al.01} characterize the optimal strategy via
an HJB equation in continuous Markovian stochastic volatility models while
\cite{cern_kall_08} and \cite{kall.vier.09} treat affine stochastic
volatility models using Fourier methods. 

In this paper, we propose algorithms for the numerical computation of
the quadratic hedging strategy in general Markovian models with
jumps. We first review the HJB characterization of the value function,
obtained in \cite{defr.12}. We only give a brief
review, referring the readers to \cite{defr.12} for full details and
proofs because in this paper, we are interested in the numerical schemes for the
computation of the hedging strategies and in applications to
electricity markets. The value function of the quadratic hedging problem can be related to
a triangular system of parabolic PIDEs, which can be shown to
possess unique smooth solutions in our setting. The
first equation is a non-linear PIDE of HJB type, but does not depend on the pay-off of
the option to hedge (the pure investment problem), while the other two
equations are linear. We next propose two finite difference schemes
for the numerical solution of the linear and the nonlinear PIDEs. The
convergence of these schemes is carefully analyzed and we provide an
estimate of the global approximation error as function of various
truncation and discretization parameters. For the numerical
schemes, we concentrate on the infinite variation case, which is more
relevant in applications. 

Our main motivation comes from hedging problems in electricity
markets. These markets are structurally incomplete and often illiquid,
owing to a relatively small number of market participants and the
particular nature of electricity, which is a non-storable
commodity. As pointed out in \cite{geman.roncoroni.06} and \cite{mbt.08}, due to these features, the electricity prices exhibit
highly non-Gaussian behavior with jumps and spikes (upward movements
followed by quick return to the initial level)
and several authors have therefore suggested
to model electricity prices by pure jump processes \citep{benth.al.07,deng.jiang.05}.

On the other hand, since the spot electricity is non storable, the
main hedging instruments in electricity markets are futures.  
%With this in mind, we propose a model for the \emph{future contract},
%which is one of the most popular derivative in the electricity
%market, and then we show how to price and hedge an entire class of
%options written on these future contracts. 
A typical future contract with maturity $T$ and duration $d$
guarantees to its holder continuous delivery of electricity during
the period $[T,T+d]$. Maturities, durations and amounts of electricity
are standardized for listed contracts. This continuous delivery
feature implies that even if the spot electricity follows a simple
model, such as the exponential of a L\'evy process, the price of the
future contract will be a general Markov process with jumps,
non-homogeneous in time and space.\footnote{If a single delivery length is
  fixed, it is possible to model the future price directly as the
  exponential of a process with independent increments, as in
  \cite{goutte.oudjane.russo.11}. However this approach does not allow
to treat problems involving futures of different durations, say,
hedging a product with specified duration with the listed contracts.} Therefore, Fourier methods such as
the ones developed in \cite{huba.kall.kraw.06} and \cite{goutte.oudjane.russo.11} cannot be applied in
this setting. For this reason, in Section \ref{sec: appl}, after
introducing a model for the futures prices,  where the spot price is
described by the exponential of the tempered stable (CGMY) or Normal Inverse Gaussian (NIG)
L\'evy process, we derive the associated HJB equations and use the finite
difference schemes to compute the hedging strategies and analyze their
behavior. The numerical results illustrate the performance of our
method and show in particular that the computation of the hedging
strategies under the true historical probability (as opposed to the
martingale probability, which does not require solving non-linear HJB
equations) leads to a considerable improvement in the efficiency of
the hedge.

The paper is structured as follows. After introducing the model and the
quadratic hedging problem in Section \ref{model}, we 
review the HJB characterization of the solution and the
regularity results in Section \ref{case_a_geq1}. The finite difference
schemes for the solution of the HJB equations, which are the main
results of this paper, are presented in Section
\ref{subsection::IntegrodifferentialScheme}. 
In Section \ref{sec: appl} these results are applied to a concrete
hedging problem in electricity markets. The proofs of the convergence
results are given in the Appendix A.

%%%%%%%%%%%%%%%%%%%%%%%%%%%%%%%%%%%%%%%%%%%%%%%%%%
\section{The model and the quadratic hedging problem}\label{model}

%%%%%%%%%%%%%%%%%%%%%%%%%%%%%%%%%%%%%%%%%%%%%%%%%%
Let $J$ be a Poisson random measure on $\left[0,+\infty \right)\times
\R$ defined on a filtered probability space $\left(\Omega, \mathcal F,
  \mathcal F_t,\P\right)$, $\mathcal F_t$ being the natural filtration
of $J$. We suppose that $\mathcal F_0$ contains the null sets and also
$\mathcal F=\mathcal F_T$ where $T>0$ is given. Let also $dt\times
\nu\left(dy\right)$ be the intensity measure of $J$ where $\nu$ satisfies the standard integrability condition $\int_{\R}\left(1\wedge |y|^2\right)\nu\left(dy\right)<\infty$. We denote
\begin{align*}
&\tilde J(dt\times dy ):=J(dt\times dy) - dt\times \nu(dy)   
\end{align*} 
the compensated martingale jump measure. On this probability space we introduce a family of real-valued Markov pure jump process as the solution of the following:
\begin{align}
dZ_r^{t,z}:=&\mu\left(r,Z_r^{t,z}\right)dr +\int_{\R}
\gamma\left(r,Z_{r-}^{t,z},y \right)\tilde J\left(dr\times
  dy\right),\, Z_t^{t,z}=z,\, r\geq t\label{Z}
\end{align}
for $t\in [0,T)$ and $z\in\R$. The asset price process $S$ is given by $S^{t,z}_u=\exp(Z^{t,z}_u)$. We make the following assumptions:
\begin{assumption}
\label{assumptions_1} $\, $ \\
\noindent \textbf {[C]- }\underline{\textbf {The coefficients}}.
\renewcommand{\theenumi}{\roman{enumi})}
\begin{enumerate}
\item There exists $\Maxmu\geq 0$ such that $ \left\|\mu\right\|_{\infty}\leq \Maxmu$.
\item For all $t\in[0,T]$ and $y\in\R$ the functions $z\to \mu(t,z) $ and $z\to \gamma(t,z,y)$ belong to $\mathcal C^1(\R,\R)$.
  \item There exist $\Klip\geq 0$, $\Klid \geq 0$ and a positive locally bounded function $\rho:\R \to \R^+$ such that for all $y\in\R$ and all $t\in[0,T]$ we have
\begin{align*}
&\left| \mu(t,z)-\mu(t,z')\right| \leq \Klip|z-z'|\\
&\left| \gamma(t,z,y)-\gamma(t,z',y)\right| \leq \Klid\rho(y)|z-z'|
\end{align*}
\end{enumerate}

\noindent \textbf {[L]- }\underline{\textbf {The \lev measure}}.
The \lev measures $\nu(dy)$ verifies $\nu(dy)=\nu(y)dy$ where $\nu(y):=g(y)|y|^{-(1+\alpha)}$ for some $\alpha\in(1,2)$, where $g$ is a measurable function bounded in a neighborhood of zero: 
$$
0<m_g\leq g(y) \leq M_g,\, \forall y\in (-y_0, y_0)
$$
for some positive constants $m_g,M_g$ and some $y_0>0$. We also assume that 
$$
\lim_{y\to 0^-}g(y)=g(0^-)\quad\text{and}\quad \lim_{y\to 0^+}g(y)=g(0^+)\quad \text{with $g(0^+),g(0^-)>0$}.
$$
\textbf{Example: } The tempered stable (CGMY) processes, whose L\'evy measure is of the form
$$
\nu(y) := \frac{c_{-}}{|y|^{1+\alpha}} e^{-\lambda_{-}|y|} \mathbf{1}_{y<0} + \frac{c_{+}}{|y|^{1+\alpha}} e^{-\lambda_{+}|y|} \mathbf{1}_{y>0} 
$$
for $c_{-}>0, c_{+}>0, \lambda_{-}>0$ and $\lambda_{+}>0$ satisfies
the above assumption. 

\noindent \textbf {[I]- }\underline{\textbf {Integrability conditions}}.
The function 
\begin{align*}
 \tau\left(y\right):=&\max\left(\sup_{t,z}\left(|\gamma\left(t,z,y\right)|, e^{|\gamma\left(t,z,y\right)|}-1\right),\, \rho(y)\right) 
\end{align*}
verifies
\begin{align*}
 \sup_{0<|y|\leq y_0}\frac{\tau(y)}{|y|}\leq M \,&\qquad\text{and}\qquad & \tau\in\mathbb L^{4}(\{|y|\geq y_0 \},\nu(dy))
\end{align*}
\noindent \textbf {[ND]- }\underline{\textbf {No degeneracy}}.
The function
\begin{align*}
\Gamma(y):=\inf_{t,z}\left( e^{\gamma\left(t,z,y\right)}-1\right)^2&\qquad\text{verifies}\quad& |\Gamma|:=\int_\R \Gamma^2 (y) \nu(dy)>0
\end{align*}
\noindent \textbf {[RG]- }\underline{\textbf {Regularity of the $\gamma$ function}}.
\renewcommand{\theenumi}{\roman{enumi})}
\begin{enumerate}
\item For any $t,z$ the mapping $y\to \gamma(t,z,y)$ is continuous,
  strictly increasing and maps $\mathbb R$ onto $\mathbb R$. Moreover,
  it is
 twice
  continuously differentiable on $|y|\leq y_0$ for some $y_0>0$ and there
  exist positive constants $m_1,m_2$  such that 
\begin{align*}
0<m_1\leq \inf_{t,z,|y|\leq y_0}|\gamma_y(t,z,y)| & \text{and} & \sup_{t,z,|y|\leq y_0}|\gamma_{yy}(t,z,y)|\leq m_2
\end{align*}
In particular $\gamma $ as function of $y$ is invertible: we call $\gamma^{-1}(t,z,y)$ its inverse.
\item For all $t,z\in[0,T)\times \R$, $\gamma_y(t,z,0)=1$
\item The function $\gamma_y$ is \lip continuous in the variable $z$:
$$
\sup_{t,z,|y|\leq y_0}\left| \gamma_y(t,z+h,y)-\gamma_y(t,z,y)\right| \leq m_2|h|
$$
\end{enumerate}
\noindent We denote $\Kmax:=\max(\Klip,\, \Klid)$,
\begin{align}
\tilde \mu:=& \mu+\int_{\R}(e^{\gamma}-1-\gamma)\nu(dy) \text{ and } \left\|\tilde \mu\right\|:=\sup_{t,z}|\tilde \mu(t,z)|\label{tilde_mu}
\end{align}
In the rest of the paper we denote $\left\|\tau \right\|_{1,\nu}:= \int_{|y|\geq 1}\tau(y)\nu(dy)$ whereas $\left\|\tau \right\|_{2,\nu}^2:=\int_{\R}\tau^2(y)\nu(dy)$.
\end{assumption}

It is well known that the SDE \eqref{Z} has a unique strong solution \citep{jaco.shyr.03}. %\comment{Remark that Assumption $\mathbf[ND]$ corresponds to $\sigma(t,z)>\sigma_{min}>0$ for a Markov diffusion process with local volatility given by $\sigma(t,z)$.}
\subsection*{On the Assumption $\mathbf[RG]-ii)$}
Among the assumptions listed above, undoubtedly $\mathbf[RG]-ii)$ seems to be
the most restrictive one: if for example the jump function is of the
form $\gamma(t,z,y)=\hat\gamma(t,z)y$ then the only possible choice
would be $\hat\gamma(t,z)=1$ for all $t,z$. In this paragraph we prove
that this assumption could be relaxed by making a special change of
variable. More precisely, given a process $Z$ which does not satisfy
this assumption, we look for a process $L$ defined by $L_t=\phi(t,Z_t)$ for some smooth function $\phi$ such that
\begin{align}
&dL_s^{t,l}:= \mu^{L}(s,L_s^{t,l}) ds + \int  \gamma^L(s,L_{s-}^{t,l},y)\bar J(dyds) \label{proc_L}
\end{align}
where $\mu^L$ and $\gamma^L$ satisfy Assumptions \ref{assumptions_1},
in particular, $\partial_y\gamma^L(t,l,0)=1$ for all $t,l$. If for such $\phi$, the function $z\to \phi(t,z)$ is invertible and smooth enough to apply \itos formula then 
\begin{align}
\gamma^L(t,l,y):=&  \phi(t,\phi^{-1}(t,l)+\gamma(t,\phi^{-1}(t,l),y))- l \label{gamma_L}\\
\nonumber
\mu^L(t,l):=&\frac{\partial\phi}{\partial t} (t,\phi^{-1}(t,l)) +\mu(t,\phi^{-1}(t,l))\frac{\partial\phi}{\partial z} (t,\phi^{-1}(t,l)) \\
+& \int_{|y|\leq 1} \left(\gamma^L(t,l,y)-\gamma(t,\phi^{-1}(t,l),y)\frac{\partial\phi}{\partial z} (t,\phi^{-1}(t,l))\right)\nu(dy)\label{mu_L}
\end{align}
In particular one has 
$$
\gamma_y^L(t,l,0) =\frac{\partial \phi}{\partial z}(t,\phi^{-1}(t,l)))\gamma_y(t,\phi^{-1}(t,l),0)
$$
If we select for example 
\begin{align}
\phi(t,z):=\int_0^{z} \frac{ds}{\gamma_y(t,s,0)} \label{phi_change}
\end{align}
then trivially $\gamma_y^L(t,l,0)=1$ for all $t,l$. The following Lemma shows that this choice guarantees that the coefficients $\mu^L$ and $\gamma ^L$ verify Assumptions \ref{assumptions_1}
\begin{lemma}
\label{lem_gammaZ_good}
Assume that there exist positive constants $m_1,m_2$ such that
\renewcommand{\theenumi}{\roman{enumi})}
\begin{enumerate}
\item For all $t,z \in[0,T]\times \R$ the mapping $y\to\gamma(t,z,y)$ is differentiable at $y=0$ and 
$$
0 < m_1 \leq |\gamma_y(t,z,0)| \leq m_2 \text{  for all $t,z \in[0,T]\times \R$ }
$$
\item The function $(t,z)\to\gamma_y(t,z,0)$ is differentiable and 
$$
\left|\frac{d}{dt}\gamma_y(t,z,0)\right|+\left|\frac{d}{dz}\gamma_y (t,z,0)\right| \leq m_2 \text{  for all $t,z \in[0,T]\times \R$ }
$$
\item The function $z\to \frac{d}{dt}\gamma_y(t,z,0)$ is \lip continuous:
$$
\left|\frac{d}{dt}\gamma_y(t,z,0)-\frac{d}{dt}\gamma_y(t,z',0)\right|\leq m_2|z-z'| \text{  for all $t\in[0,T],\, z,z'\in\R$ }
$$
\end{enumerate}
Then the functions $\mu^L$ and $\gamma^L$ defined in \eqref{gamma_L}--\eqref{mu_L} with the choice of $\phi$ given by \eqref{phi_change} verify Assumptions  \ref{assumptions_1}.
\end{lemma}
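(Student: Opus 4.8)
The plan is to verify each of the five structural assumptions in Assumption \ref{assumptions_1} for the pair $(\mu^L,\gamma^L)$ defined by \eqref{gamma_L}--\eqref{mu_L}, using the change of variable $\phi$ from \eqref{phi_change}. The key preliminary observation is that $\phi$ is well-defined, smooth, and invertible: since $\gamma_y(t,z,0)$ is bounded between $m_1$ and $m_2$ by hypothesis i), the integrand $1/\gamma_y(t,s,0)$ in \eqref{phi_change} is bounded between $1/m_2$ and $1/m_1$, so $z\to\phi(t,z)$ is strictly increasing, bi-Lipschitz (hence a bijection of $\R$ onto $\R$), and $\phi_z(t,z)=1/\gamma_y(t,z,0)$. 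Differentiating hypotheses ii)--iii) then controls $\phi_t$, $\phi_{tz}$ and the $z$-Lipschitz regularity of these partials, and from $\phi_z\circ\phi^{-1}=\gamma_y(\cdot,\phi^{-1},0)$ together with the bi-Lipschitz bound on $\phi^{-1}$ we get that $\phi^{-1}$ and its relevant derivatives are Lipschitz in $l$. I would establish all these bounds on $\phi$, $\phi^{-1}$ and their derivatives first, as a single lemma-internal computation, since every subsequent verification reduces to plugging them in.

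Next I would check the assumptions on $\gamma^L$. For $\mathbf{[RG]}$-ii), the identity $\gamma_y^L(t,l,0)=\phi_z(t,\phi^{-1}(t,l))\,\gamma_y(t,\phi^{-1}(t,l),0)=1$ is immediate from $\phi_z=1/\gamma_y(\cdot,0)$, so that is essentially free. For the Lipschitz bound $\mathbf{[C]}$-iii) on $\gamma^L$ in $z$, I would differentiate \eqref{gamma_L} and write $\gamma^L(t,l,y)-\gamma^L(t,l',y)$ as a telescoping difference, bounding it via the Lipschitz constant $\Klid$ of the original $\gamma$ composed with the Lipschitz regularity of $\phi$ and $\phi^{-1}$; the local behaviour near $y=0$ needed for $\mathbf{[RG]}$-i) (the bounds $m_1\leq\inf|\gamma^L_y|$ and $\sup|\gamma^L_{yy}|\leq m_2$) follows by differentiating the composition in \eqref{gamma_L} twice in $y$ and using the chain rule together with the original $\mathbf{[RG]}$-i) bounds and smoothness of $\phi$. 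The integrability assumptions $\mathbf{[I]}$ and the no-degeneracy $\mathbf{[ND]}$ transfer because $\gamma^L$ is, up to the bounded bi-Lipschitz factor coming from $\phi$, comparable to $\gamma$, so $\tau^L(y)\leq C\,\tau(y)$ and $\Gamma^L(y)\geq c\,\Gamma(y)$ for constants $C,c>0$ depending only on $m_1,m_2$ and the sup norm of $\phi_z$.

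The genuinely delicate step is verifying the assumptions on $\mu^L$, in particular the boundedness $\mathbf{[C]}$-i) and the Lipschitz property $\mathbf{[C]}$-iii). The drift in \eqref{mu_L} has three contributions: the time-derivative term $\phi_t(t,\phi^{-1}(t,l))$, the transported original drift $\mu\,\phi_z$, and the integral correction $\int_{|y|\le1}\bigl(\gamma^L-\gamma\,\phi_z\bigr)\nu(dy)$. The first two are controlled by the bounds on $\phi_t$, $\phi_z$ and $\|\mu\|_\infty$ already gathered, but the integral term is where the main work lies, since a priori $\gamma^L$ and $\gamma\,\phi_z$ are only integrable against $\nu$ near zero through a cancellation. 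The key is a second-order Taylor expansion of the composition $\phi(t,\phi^{-1}(t,l)+\gamma)$ in the small jump $\gamma$: the zeroth and first order terms cancel exactly against $l$ and $\gamma\,\phi_z$ (this is precisely why \eqref{mu_L} subtracts $\gamma\,\phi_z$), leaving an integrand of order $\gamma^2\phi_{zz}$, which is $O(y^2)$ near zero by $\mathbf{[I]}$ and hence $\nu$-integrable since $\alpha<2$. The same Taylor-remainder structure, differentiated in $l$, yields the Lipschitz estimate, the Lipschitz control on $\phi_{zz}$ and on $\phi^{-1}$ being exactly what hypotheses ii)--iii) were designed to supply. I expect this integral estimate, and the bookkeeping of which derivative of $\phi$ must be Lipschitz to make the remainder Lipschitz in $l$, to be the main obstacle; everything else is routine propagation of the bi-Lipschitz and smoothness bounds on $\phi$.
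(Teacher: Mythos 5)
First, a caveat: the paper itself does not prove this lemma --- it refers to Lemma 7.16 of \cite{defr.12} --- so your proposal can only be checked on its own terms, not against an in-paper argument. Your skeleton is the natural one, and its central mechanism is right: $\gamma^L_y(t,l,0)=1$ is immediate from $\phi_z=1/\gamma_y(\cdot,\cdot,0)$, and the small-jump integral in \eqref{mu_L} is indeed finite because the zeroth- and first-order Taylor terms of $\phi(t,\phi^{-1}(t,l)+\gamma)$ cancel against $l$ and $\gamma\,\phi_z$, leaving a remainder of order $\gamma^2\phi_{zz}$ with $\phi_{zz}=-\partial_z\gamma_y(t,\cdot,0)/\gamma_y(t,\cdot,0)^2$ bounded by hypothesis ii). The genuine gap is at the step you declare routine: the boundedness of $\mu^L$ (Assumption \ref{assumptions_1}-$\mathbf{[C]}$-i)). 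You claim the term $\phi_t(t,\phi^{-1}(t,l))$ is ``controlled by the bounds on $\phi_t$ already gathered'', but hypothesis ii) only controls $\partial_t\gamma_y(t,z,0)$ pointwise, and differentiating \eqref{phi_change} in time gives $\phi_t(t,z)=-\int_0^z \partial_t\gamma_y(t,s,0)\,\gamma_y(t,s,0)^{-2}\,ds$, which is bounded only by $(m_2/m_1^2)|z|$: linear growth, not a uniform bound, and composing with $\phi^{-1}$ does not repair this. Concretely, take $\gamma(t,z,y)=g(t)y$ with $m_1\leq g\leq m_2$, $|g'|\leq m_2$, $g$ non-constant: hypotheses i)--iii) all hold (iii) trivially, since $\partial_t\gamma_y$ does not depend on $z$), $\phi(t,z)=z/g(t)$, the integral correction in \eqref{mu_L} vanishes identically, and one gets $\mu^L(t,l)=-l\,g'(t)/g(t)+\mu(t,lg(t))/g(t)$, which is unbounded in $l$. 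So $\mathbf{[C]}$-i) for $\mu^L$ cannot be obtained the way you propose; any correct proof must use input beyond what your write-up invokes (e.g.\ decay of $\partial_t\gamma_y(t,\cdot,0)$, or a different normalization), and as written your argument fails exactly here.

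A second, smaller gap concerns the transfer of $\mathbf{[I]}$ via ``$\tau^L(y)\leq C\,\tau(y)$''. The function $\tau$ contains the exponential term $e^{|\gamma|}-1$, and from $|\gamma^L|\leq|\gamma|/m_1$ you only obtain $e^{|\gamma^L|}-1\leq e^{|\gamma|/m_1}-1$; when $m_1<1$ this behaves for large jumps like $\bigl(e^{|\gamma|}\bigr)^{1/m_1}$, so a multiplicative constant on $\gamma$ becomes a \emph{power} on $e^{|\gamma|}$. Consequently $\tau\in\mathbb L^4(\{|y|\geq y_0\},\nu)$ does not imply $\tau^L\in\mathbb L^4(\{|y|\geq y_0\},\nu)$ (take $\gamma(t,z,y)=y/2$ and $\nu$ with density $\propto e^{-2.5|y|}$ at infinity: the original $\tau$ is in $\mathbb L^4$ but $\tau^L$ is not). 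This step therefore requires either the normalization $m_1\geq 1$ or strictly stronger integrability than $\mathbf{[I]}$ supplies, and cannot be dismissed as ``routine propagation'' of the bi-Lipschitz bounds on $\phi$.
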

The proof of this Lemma can be found in \cite{defr.12}, Lemma 7.16.

\noindent The message of this Lemma is that, up to some regularity of
the function $\gamma_y$ at $y=0$, it is possible to remove the
assumption $\mathbf[RG]-ii)$: we could work with the process $L$
instead of $Z$ and derive all the results for $L$. By applying the
function $\phi$ we could then obtain the corresponding results for the process $Z$. We refer to Chapter 7 in \cite{defr.12} for further details. 

Nevertheless, we prefer here to work with assumption $\mathbf[RG]-ii)$ because it will make all computations easier to handle.

\subsection*{Admissible strategies and the value functions}
To describe the set of admissible strategies in the quadratic hedging problem we follow the ideas developed in \cite{cern.kall.07}: we first introduce the sets of simple strategies:
\begin{align}
\nonumber
&\mathcal D:= \left\{\theta:= \sum_{i} Y_i \mathbf 1_{[\varsigma_i,\varsigma_{i+1}}),\, Y_i\in\mathbb L^{\infty}(\mathcal F_{\varsigma_i}),\, \varsigma_i\leq \varsigma_{i+1}\textrm{ stopping times}\right\}\\
& \mathcal D_t:=\left\{\theta\mathbf 1_{(t,T]}\left|\right.\, \theta\in\mathcal D\right\}\label{D} 
\end{align}
The set of admissible strategies is a subset of the $\mathbb L^2(\P)$-closure of $\mathcal D$:
\begin{align}
&\A:=\left\{\theta\in\bar{\mathcal D}:\, \int \theta dS \in \mathbb L^2(\P)\right\}\label{set_A}
\end{align}
We define the wealth process for all $t,z,x$ by
\begin{align}
&dX_r^{t,z,x,\theta}:=\theta_{r-}dS_r^{t,z},\quad X_t^{t,z,x,\theta}:=x\label{X}
\end{align}
where $\theta$ represents the number of shares in the portfolio at time $t$. The set of admissible controls is then given by
\begin{align}
\label{At}
&\At:=\left\{ \theta\textbf{1}_{\left(t,T\right]}\left|\right.\, \theta\in\A\quad x+\int_t^{T}\theta_{r-} dS_r^{t,z}\in\mathbb L^2(\P)\right\}
\end{align}

 Consider a European option of the form $f\left(Z_T\right)$ where $f$ is, for the moment, a bounded and measurable function. The quadratic hedging problem can be formulated as follows:
\begin{eqnarray*}
\QH:&& \textrm{minimize }\E^{\P}\left[\left(f\left(Z_T^{0,z}\right)-X_T^{0,z,x,\theta}\right)^2 \right] \\
&&\textrm{over $\theta \in \mathcal X(0,z,x)$}
\end{eqnarray*}
The dynamic version of $\QH$ gives us the value function of the problem:
\begin{align}
& v_f\left(t,z,x\right):=\inf_{\theta\in\At}\E^{\P}\left[\left(f\left(Z_T^{t,z}\right)-X_T^{t,z,x,\theta}\right)^2 \right] \label{v_f_t}\\
\nonumber
& v_f\left(T,z,x\right) = \left(f\left(z\right)-x\right)^2 
\end{align}
 As remarked by several authors, the function $v_f$ has the following structure
\begin{align}
v_f(t,z,x)= a(t,z)x^2 + b(t,z)x+ c(t,z) \label{v_as_quadratic_x}
\end{align}
In particular, taking $f=0$, one has
\begin{align}
&v_0\left(t,z,x\right):= x^2\inf_{\theta\in\At}\E\left[\left(1+\int_t^T \theta_{r-} dS_r^{t,z}\right)^2\right]\label{v0}
\end{align}
because the set $\At$ is a cone. Consequently 
\begin{align}
a\left(t,z\right):=\inf_{\theta\in\Ata}\E\left[\left(1+\int_t^T \theta_{r-} dS_r^{t,z}\right)^2\right] \label{a}
\end{align}
This problem is known in the literature as the \emph{pure investment
  problem}. The dual formulation of this problem relates the function
$a$ to the so called variance optimal martingale measure \citep{cern.kall.07}. We recall here some fundamental properties on the function $a$, whose proof can be found in Chapter 5 of \cite{defr.12}.
\begin{theorem}
\label{theo_a_lip} 
Under Assumptions \ref{assumptions_1}-$\mathbf {[C,I,ND]}$ the function $a$ verifies
\begin{align*}
 e^{-C\left(T-t\right)}\leq a\left(t,z\right)\leq 1 &&\text{where} &&C:=  \frac{ \left\|\tilde\mu\right\|^2}{  \left|\Gamma\right|}
\end{align*} 
Furthermore, there exists $T^*>0$ and $\Klipa \geq 0$ such that if $T<T^*$ then
$$
|a(t,z')-a(t,z)|\leq \Klipa|z-z'| 
$$
for all $t\in[0,T]$ and $z,z'\in\R$. $T^*$ depends on $\Maxmu,\, \tau,\, C$ and $\Kmax$ defined in Assumptions \ref{assumptions_1}. Moreover $T^*\to +\infty$ when $\Kmax\to 0$ and the other constants remain fixed.
\end{theorem}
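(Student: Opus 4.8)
I would split the statement into the two pointwise bounds and the \lip estimate, treating them in this order since the strict positivity of $a$ is what makes the second part non-degenerate.

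\textbf{The bounds.} The upper bound is immediate: $\theta\equiv 0\in\Ata$ gives $\E[(1+0)^2]=1$, so $a(t,z)=\inf_\theta(\cdots)\le 1$. For the lower bound I would use a submartingale (verification) argument rather than the PIDE. Writing the dynamics of $S^{t,z}=e^{Z^{t,z}}$ via \itos formula as $dS_r=S_{r-}\tilde\mu\,dr+S_{r-}\int_\R(e^{\gamma}-1)\tilde J(dr,dy)$ and setting $\pi_r:=\theta_{r-}S^{t,z}_{r-}$, I would consider, for an arbitrary $\theta\in\Ata$, the process $\Phi_s:=\big(X^{t,z,1,\theta}_s\big)^2 e^{-C(T-s)}$. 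A direct computation of its drift (Itô together with compensation of the jump part, using that $e^{-C(T-s)}$ does not depend on $z$) yields a density equal to $e^{-C(T-s)}$ times the quadratic form in $\pi$, namely $CX^2+2X\tilde\mu\,\pi+\pi^2\int_\R(e^{\gamma}-1)^2\nu(dy)$. This quadratic is nonnegative for every $\pi$ exactly when its discriminant is $\le 0$, i.e. $\tilde\mu^2\le C\int_\R(e^\gamma-1)^2\nu(dy)$; this holds because $\int_\R(e^\gamma-1)^2\nu(dy)\ge|\Gamma|$ by $\mathbf{[ND]}$ and $\tilde\mu^2\le\|\tilde\mu\|^2$, with $C=\|\tilde\mu\|^2/|\Gamma|$. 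Hence $\Phi$ is a submartingale; after a standard localisation (stopping times reducing the local-martingale part, then passing to the limit using the $\mathbb L^2$-admissibility of $\theta$), $\E[\Phi_T]\ge\Phi_t$ reads $\E[(X^{t,z,1,\theta}_T)^2]\ge e^{-C(T-t)}$, and taking the infimum over $\theta$ gives $a(t,z)\ge e^{-C(T-t)}$.

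\textbf{The \lip estimate: set-up.} Here I would avoid differentiating $a$ directly and instead run a fixed-point argument on the dynamic programming operator. The HJB equation for $a$ is the nonlinear PIDE $\partial_t a+\mu\,\partial_z a+\int_\R(a(\cdot,z+\gamma)-a-\gamma\,\partial_z a)\nu(dy)=B[a]^2/A[a]$ with $a(T,\cdot)=1$, where $A[a]=\int_\R(e^\gamma-1)^2 a(\cdot,z+\gamma)\nu(dy)$ and $B[a]=a\tilde\mu+\int_\R(e^\gamma-1)(a(\cdot,z+\gamma)-a)\nu(dy)$. Consider the complete metric space $\mathcal E_L$ of functions $w(t,z)$ with $e^{-C(T-t)}\le w\le 1$ and $|w(t,z)-w(t,z')|\le L|z-z'|$, and define $\mathcal T w$ as the solution of the \emph{linear} PIDE obtained by freezing the nonlinearity at $w$ (equivalently, the value of the linear-quadratic problem whose feedback coefficients are built from $w$). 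The goal is to show that, for $T<T^*$, $\mathcal T$ maps $\mathcal E_L$ into itself for a suitable $L$ and is a contraction for the sup norm; its unique fixed point solves the HJB equation and, by the verification theorem, coincides with $a$, which therefore inherits the \lip bound, giving $\Klipa$ from the self-map radius $L$.

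\textbf{Where smallness enters, and the main obstacle.} Two facts make this work and pin down $T^*$. First, the positivity just proved gives $A[w]\ge e^{-C(T-t)}|\Gamma|>0$, so the denominator is uniformly bounded below and $B[w]^2/A[w]$ is a well-defined, locally \lip function of its arguments; without the lower bound the linearised problem would degenerate, which is precisely why the bounds must come first. Second, to estimate the \lip seminorm of $\mathcal T w$ in $z$ I would use the flow estimate for \eqref{Z}, $\E[\sup_{r\le T}|Z^{t,z}_r-Z^{t,z'}_r|^p]\le e^{c(T-t)}|z-z'|^p$ with $c$ governed by $\Kmax$ and the integrability of $\tau$, combined with the \lip-in-$z$ continuity of $\mu$, $\tilde\mu$ and of $y\mapsto e^{\gamma(\cdot,z,y)}$ coming from $\mathbf{[C]}$, $\mathbf{[I]}$ and $\mathbf{[RG]}$. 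Propagating $L$ through $\mathcal T$ produces a bound of the form $L'\le\kappa(T)\,L+b(T)$, with both $\kappa$ and $b$ carrying a factor of $\Kmax$ (the $z$-dependence of the coefficients vanishes as $\Kmax\to0$, in which case $a$ is genuinely $z$-independent); choosing $L=b(T)/(1-\kappa(T))$ makes $\mathcal T$ a self-map and a contraction as soon as $\kappa(T)<1$, i.e. for $T<T^*$, and $T^*\to+\infty$ as $\Kmax\to0$ with $\Maxmu,\tau,C$ fixed. The technical heart — and the step I expect to be hardest — is controlling both the spatial \lip dependence and the dependence on the frozen argument $w$ of the nonlinear term $B[w]^2/A[w]$ uniformly enough to exhibit $\kappa(T)<1$. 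I would note that the tempting purely probabilistic shortcut (take a near-optimal strategy at $z'$, match the invested amount $\pi$ at $z$, and compare the two wealth processes) does not close, because the resulting difference estimate needs higher moments of the strategy, of the type $\E[(\int\pi_r^2\,dr)^2]$, which mere $\mathbb L^2$-admissibility does not provide; routing the regularity through the frozen-coefficient linear problem, where only the flow of $Z$ must be differentiated in $z$, is what circumvents this.
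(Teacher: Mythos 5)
Your statement is one the paper itself never proves: Theorem \ref{theo_a_lip} is quoted from Chapter 5 of \cite{defr.12}, so your argument can only be judged on its own merits rather than against the paper's text.

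The two pointwise bounds are handled correctly. The upper bound via $\theta\equiv 0$ is immediate, and the submartingale/discriminant argument for the lower bound is sound: the drift of $\Phi_s=(X_s^{t,z,1,\theta})^2e^{-C(T-s)}$ is indeed $e^{-C(T-s)}$ times the quadratic $CX^2+2X\tilde\mu\pi+\pi^2\int_\R(e^\gamma-1)^2\nu(dy)$, which is nonnegative precisely under the discriminant condition you state. Two caveats. First, the localization: for a general $\theta\in\Ata$ one cannot pass from $\E[\Phi_{T\wedge\tau_n}]\geq\Phi_t$ to $\E[\Phi_T]\geq\Phi_t$ without uniform integrability, which mere square-integrability of the terminal wealth does not supply (Fatou goes the wrong way); the clean fix is to prove the inequality for simple bounded strategies, where all moments are available under $\mathbf{[I]}$, and then use that $\Ata$ lies in the $\mathbb L^2(\P)$-closure of such strategies, so the inequality survives the limit. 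Second, your step $\int_\R(e^\gamma-1)^2\nu(dy)\geq|\Gamma|$ tacitly reads $|\Gamma|$ as $\int_\R\Gamma(y)\nu(dy)$; the paper's literal definition is $\int_\R\Gamma^2(y)\nu(dy)$, under which this inequality (and hence the stated constant $C$) need not hold. Your reading is surely the intended one, but it is a silent correction of the text.

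The Lipschitz part has a genuine gap, sitting exactly where you place the "technical heart". The operator $\mathcal T$ is not well defined on $\mathcal E_L$: for $w$ merely Lipschitz the frozen coefficients $\genop[Q]{w}$ and $\genop[G]{w}$ do converge (integrands are $O(y^2)$ near zero), but the frozen linear PIDE for $u=\mathcal Tw$ contains $\genop[B]{u}$, and in the infinite-variation case $\alpha\in(1,2)$ this operator diverges on merely Lipschitz functions, since $\int_{|y|\leq 1}|\gamma(t,z,y)|\nu(dy)=\infty$. So "the solution of the frozen PIDE" presupposes classical solvability ($C^{1,\beta}$ in space with $1+\beta>\alpha$, plus differentiability in time), which is precisely the \hold-space theory of Theorems \ref{HJB_v0}--\ref{HJB_v}; and in the paper's architecture that theory is built \emph{after} and \emph{upon} Theorem \ref{theo_a_lip}, because the control bound $\bar\Pi$ in the HJB operator is defined through $\Klipa$. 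Your route is therefore circular unless you rebuild linear PIDE existence from scratch, and even then it cannot yield the theorem as stated: classical solvability needs the structural assumption $\mathbf{[L]}$ on $\nu$, whereas the theorem is asserted under $\mathbf{[C,I,ND]}$ alone — the paper stresses it holds "without assuming any particular structure of the L\'evy measure". The probabilistic reading of $\mathcal Tw$ you offer as "equivalent" (the value of the feedback problem with coefficient $-\genop[Q]{w}/\genop[G]{w}$) does repair well-definedness, but then the identification of the fixed point with $a$ breaks in one direction: "fixed point $\geq a$" is free, since bounded feedback controls are admissible, but "fixed point $\leq a$" is the verification inequality and again requires \itos formula applied to the fixed point, i.e., the regularity you do not have. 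Your structural skeleton — a contraction for small $T$, with the Lipschitz constant and the a priori control bound determined jointly, and $T^*\to\infty$ as $\Kmax\to 0$ — is the right one, but to stay within the stated hypotheses it must be run on the dynamic programming operator itself (probabilistically, one time step at a time), not on a classically-solved frozen PIDE.
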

Remark that these results hold true without assuming any particular
structure of the \lev measure $\nu(dy)$. The next goal is to
characterize the functions $a,b$ and $c$ as the solutions of certain PIDEs.

%%%%%%%%%%%%%%%%%%%%%%%%%%%%%%%
\section{HJB formulation and main regularity results} \label{case_a_geq1}
%%%%%%%%%%%%%%%%%%%%%%%%%%%%%%%%
\paragraph{Remarks on notation} For a function $f:[0,T]\times \R \to
\R$ we denote $\left\|f \right\|_\infty:=\sup_{t\leq T, x\in \R}
|f(t,x)|$. 
%$M,\, M_1,\, M_2,\, \cdots$  denote positive constants which may
%change from line to line. 
For a function $\varphi$ defined on $[0,T]\times \R$ and $k\in\mathbb N$ we denote $D^k\varphi:= \partial^{k} \varphi / \partial x^{k}$ whereas $\partial_t \varphi$ denotes the derivative in the time variable. We adopt the following convention: for any $l\in \R^+$
\begin{align*}
 l& = \lfloor l \rfloor + \{l\}^-,\, \text{where } \{l\}^- \in [0,1)& \\
 l& = \lceil l \rceil + \{l\}^+,\, \text{where } \{l\}^+ \in (0,1] &\\
\end{align*}
\noindent Let us first introduce the functional spaces in which we will work: for $\beta\in(0,1]$ we define
\begin{align*}
\langle \psi \rangle^{(\beta)}:=\sup_{x,0<|h|\leq 1} \frac{|\psi(x+h)-\psi(x)|}{|h|^\beta},&\quad&\langle \varphi \rangle^{(\beta)}_{Q_T}:= \sup_{t,x,0<|h|\leq 1}\frac{|\varphi(t,x+h)-\varphi(t,x)|}{|h|^\beta}
\end{align*}
The elliptic \hold space of order $l$, $\HZR[l]{n}$, is defined as the space of continuously differentiable functions $ \psi$ for all order $j\leq \lceil l \rceil$ with finite norm
\begin{align}
&\left\| \psi\right\|_{l}:= \sum_{j=0}^{\lceil l \rceil } \sum_{(j)}\left\| D^j_x \psi\right\|_{\infty} +\sum_{(\lceil l \rceil)}\langle D^{\lceil l \rceil}_x \psi\rangle^{( \{l\}^+)}\label{holdzyg_norm_ell}
\end{align}
where $\sum_{(j)}$ represents the summation over all possible derivative of order $j$. The parabolic \hold space $\HZpar[l]{n}$ is defined as the set of measurable functions $\varphi:[0,T]\to \HZR[l]{n}$ with finite norm
\begin{align}
&\left\|\varphi\right\|_{l} := \sum_{j=0}^{\lceil l\rceil }\sum_{(j)}\left\| D^{j}_x\varphi \right\|_{\infty} + \sum_{(\lceil l\rceil )}\langle D^{\lceil l \rceil }_x\varphi \rangle_{Q_T}^{(\{l\}^+)}\label{hold_zyg_norm_par}
\end{align}
The spaces defined above are all Banach spaces equipped with their respective norms. For a complete description see for example Chapter I in \cite{adam.four.09}. \\

%\subsection{The infinite variation case}
\bigskip
In the spirit of HJB approach we now introduce the operators associated to the process $Z$:
\begin{definition}\label{operators}
 For a real valued function $\varphi\in \HZpar[\alpha+\delta]{}$, $\delta>0$, we define the following linear operators
\begin{align*}
 \op[A]{\varphi}(t,z) := & -\mu\frac{\partial \varphi}{\partial z}(t,z)\\
\op[B]{\varphi}(t,z)  := &  \int_\R \left(\varphi(t,z+\gamma(t,z,y))-\varphi(t,z)-\gamma(t,z,y)\frac{\partial \varphi}{\partial z}(t,z)\right)\nu(dy) 
\\
\op[Q]{\varphi}(t,z) :=& \tilde\mu\varphi(t,z) +\int_\R \left(e^{\gamma}-1\right)\left(\varphi(t,z+\gamma(t,z,y))-\varphi(t,z)\right)\nu(dy)\\
\op[G]{\varphi}(t,z) :=& \int_\R \left(e^{\gamma}-1\right)^2\varphi(t,z+\gamma(t,z,y))\nu(dy)
\end{align*}
where $\tilde\mu$ stands for $\tilde\mu(t,z)$ and so on. In addition,
$\genop[H]$ denotes the nonlinear operator
\begin{align*}
\op[H]{[\varphi]}(z):=&  \inf_{|\pi|\leq \bar\Pi}\left[2\pi \op[Q]\varphi(t,z)+\pi^2 \op[G]{\varphi}(t,z) \right]
\end{align*}
where
\begin{align}
\bar \Pi :=\frac{e^{CT}}{\left|\Gamma\right|}\max\left(\left\|\tilde\mu\right\|_\infty, 2\left(\left\|\tau\right\|_{4,\nu}^4+\left\|\tau\right\|_{2,\nu}^2\right) \right) \left(1+\Klipa \right).\label{bound_control_HJB}
\end{align}
\end{definition}

The main result concerning the functions $a$ is:
\begin{theorem}
\label{HJB_v0} Let Assumptions \ref{assumptions_1} hold true and consider $T<T^*$  as in Theorem \ref{theo_a_lip}. The function $a$ is the unique solution  of
\begin{align}
&0 =  -\frac{\partial a}{\partial t}+ \op[A]{a}-\op[B]{a}
- \op[H]{[a]}\, , &a(T,z)=1\label{QL_PIDE_a}
\end{align}
in the \hold space $\HZpar[\alpha+\delta]{}$ for $0<\delta<\alpha-1$. The function $t\mapsto a(t,z)$ is also differentiable on $(0,T)$. The optimal strategy for the stochastic control problem \eqref{v0} is
\begin{align*}
&\theta^*_t = e^{-Z_{t-}}\pi^*\left(t,Z_{t-}\right)X^{\theta^*}_{t-} ,&\quad &X_t^{\theta^*}:=x+\int_0^t \theta^*_{r-} dS_{r}
\end{align*}
where
\begin{align}
\pi^*(t,z):=-\frac{\genop[Q]{a}(t,z)}{\genop[G]{a}(t,z)}\label{pi_opt}
\end{align}
\q
\end{theorem}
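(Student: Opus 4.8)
The plan is to combine a PDE-theoretic construction of a smooth solution to \eqref{QL_PIDE_a} with a stochastic verification argument that identifies this solution with the value function $a$ and simultaneously produces the optimal control. First I would record how the equation arises from dynamic programming. Since $\At$ is a cone we have $v_0(t,z,x)=x^2a(t,z)$ as in \eqref{v_as_quadratic_x}--\eqref{v0}, so it suffices to treat the scalar function $a$. Writing the number of shares in proportional form, $\theta_{r-}=e^{-Z_{r-}}\pi_{r-}X_{r-}^{\theta}$ — precisely the structure announced for $\theta^*$ — the wealth dynamics become $dX_r=\pi_{r-}X_{r-}\big(\tilde\mu\,dr+\int(e^{\gamma}-1)\tilde J\big)$, where $\tilde\mu$ is as in \eqref{tilde_mu}. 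Applying \itos formula to $(X_r^{\theta})^2\varphi(r,Z_r)$ for a smooth candidate $\varphi$ and collecting terms, the generator of $Z$ produces $-\op[A]{\varphi}+\op[B]{\varphi}$, while the terms coupling the wealth increment with $\varphi$ collapse, after expanding $(1+\pi(e^\gamma-1))^2$, into exactly $2\pi\op[Q]{\varphi}+\pi^2\op[G]{\varphi}$. Taking the infimum over the instantaneous proportion $\pi$ yields $\op[H]{[\varphi]}$ and hence the PIDE \eqref{QL_PIDE_a} with terminal data $a(T,\cdot)=1$.

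Second, I would establish existence of a solution $\tilde a\in\HZpar[\alpha+\delta]{}$. Under assumption $\mathbf{[L]}$ the operator $\op[B]{}$ is a nonlocal operator of order $\alpha\in(1,2)$ and is the principal part of the equation, so the relevant tool is Schauder/Hölder regularity theory for parabolic integro-differential operators of fractional-Laplacian type. Because the equation is quasilinear through $\op[H]{}$, I would set up a fixed-point iteration: freeze the nonlinearity, solve the resulting linear PIDE in the Hölder space via the Schauder estimates, and show the map is a contraction for $T<T^*$ (or use a continuation-in-time argument). The a priori bounds $e^{-C(T-t)}\le a\le 1$ and the Lipschitz estimate from Theorem \ref{theo_a_lip}, together with the no-degeneracy condition $\mathbf{[ND]}$ (which keeps $\op[G]{\varphi}$ bounded away from zero), are what control the nonlinearity; in particular they guarantee that the unconstrained quadratic minimizer $-\op[Q]{\varphi}/\op[G]{\varphi}$ stays within the bound $\bar\Pi$ of \eqref{bound_control_HJB}, so that the constrained and unconstrained infima in $\op[H]{}$ coincide and the optimizer is interior. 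Differentiability of $t\mapsto\tilde a(t,z)$ on $(0,T)$ then follows by reading \eqref{QL_PIDE_a} as an expression for $\partial_t\tilde a$ in terms of the spatial operators applied to $\tilde a$, which are continuous in $t$ once $\tilde a$ has the stated spatial Hölder regularity.

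Third, I would run the verification. Fix an arbitrary admissible $\theta\in\Ata$ and apply \itos formula to $(X_r^{\theta})^2\tilde a(r,Z_r)$; by construction its drift equals $(X_r^{\theta})^2\big(2\pi\op[Q]{\tilde a}+\pi^2\op[G]{\tilde a}-\op[H]{[\tilde a]}\big)\ge 0$, so the process is a submartingale. After checking integrability — using the $\mathbb L^4$ condition on $\tau$ in $\mathbf{[I]}$ to promote the local-martingale part to a true martingale — this yields $x^2\tilde a(t,z)\le\E[(X_T^{\theta})^2]$, hence $\tilde a\le a$. Choosing the feedback control $\pi^*(t,z)=-\op[Q]{\tilde a}/\op[G]{\tilde a}$ of \eqref{pi_opt}, equivalently $\theta^*_t=e^{-Z_{t-}}\pi^*(t,Z_{t-})X_{t-}^{\theta^*}$, makes the drift vanish identically (here $|\pi^*|\le\bar\Pi$ by the a priori bounds, so $\pi^*$ indeed attains the infimum in $\op[H]{}$), so the process is a martingale and equality holds. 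This simultaneously shows $\tilde a=a$ and that $\theta^*$ is optimal. Uniqueness in $\HZpar[\alpha+\delta]{}$ is then immediate, since any Hölder solution must equal the value function by the same argument (alternatively, via a comparison principle for the PIDE).

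The hard part will be the PDE step: obtaining the Schauder-type estimates and the fixed point for a quasilinear nonlocal equation whose principal part is an integro-differential operator of order $\alpha\in(1,2)$ with $z$-dependent, only \lip, jump coefficients $\gamma$ — this is where assumptions $\mathbf{[RG]}$ and $\mathbf{[I]}$ genuinely enter, controlling the mapping properties of $\op[B]{}$ and the Hölder norm of the nonlinearity $\op[H]{}$ (via the Lipschitz bound on $a$ and hence the smallness condition $T<T^*$). The second delicate point is the integrability bookkeeping needed to turn the local martingale into a true martingale in the verification, which again relies on the fourth-moment control of $\tau$.
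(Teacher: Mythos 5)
The paper does not actually contain a proof of Theorem \ref{HJB_v0}: the result is stated and its proof deferred to Chapter 7 of \cite{defr.12}, so there is no internal argument to compare yours against line by line. That said, your proposal reconstructs precisely the route that this deferred proof follows and that the surrounding text presupposes: (i) reduction to the scalar function $a$ via the cone property \eqref{v0}; (ii) derivation of \eqref{QL_PIDE_a} by computing the drift of $(X_r^\theta)^2\varphi(r,Z_r)$, which indeed equals $X^2\bigl[\partial_t\varphi-\mathcal A_t\varphi+\mathcal B_t\varphi+2\pi\mathcal Q_t\varphi+\pi^2\mathcal G_t\varphi\bigr]$ in the paper's sign conventions; (iii) existence and uniqueness in $\HZpar[\alpha+\delta]{}$ by Schauder-type estimates for the nonlocal operator of order $\alpha\in(1,2)$ plus a fixed point exploiting $T<T^*$; (iv) verification. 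You also correctly isolate the one point that makes the constrained operator $\mathcal H$ usable in the verification: by Theorem \ref{theo_a_lip} and \textbf{[ND]}, the unconstrained minimizer $-\mathcal Q_t a/\mathcal G_t a$ lies in $[-\bar\Pi,\bar\Pi]$ with $\bar\Pi$ as in \eqref{bound_control_HJB}, so the constrained and unconstrained infima coincide and the drift inequality holds for every real $\pi$, not only $|\pi|\le\bar\Pi$.

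Two gaps remain. The substantive one is admissibility of the candidate optimum: the infimum in \eqref{a} runs over $\mathcal X(t,z,1)$, which by \eqref{D}--\eqref{At} is a subset of the $\mathbb L^2(\P)$-closure of simple strategies; your verification shows that the feedback control $\theta^*_t=e^{-Z_{t-}}\pi^*(t,Z_{t-})X^{\theta^*}_{t-}$ attains the value \emph{if} it is admissible, but you never show $\theta^*\in\mathcal X(t,z,1)$. This needs $\mathbb L^2$ bounds on $X^{\theta^*}$ (available since $|\pi^*|\le\bar\Pi$ and the moments of $S$ are controlled through \textbf{[I]}) together with an approximation by simple strategies in the sense of \eqref{set_A}. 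The second, smaller gap: your submartingale step for arbitrary admissible $\theta$ uses the proportional parametrization $\pi=\theta S/X$, which is ill-defined on $\{X_{r-}=0\}$; the inequality survives there because the drift reduces to $(\theta_{r-}S_{r-})^2\,\mathcal G_t\tilde a\ge 0$, but this case must be treated separately rather than absorbed into the feedback form. Neither gap invalidates the plan; both are the kind of bookkeeping the proof in \cite{defr.12} has to carry out, and your identification of the genuinely hard step (the Hölder-space existence theory for the quasilinear nonlocal equation) is accurate.
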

For the general value function $v_f$ we have 
\begin{theorem}
\label{HJB_v}
Let $T<T^*$  as in Theorem \ref{theo_a_lip}. Let also Assumptions \ref{assumptions_1} hold true and $f\in \HZR[\alpha+\delta]{}$ for some $0<\delta<\alpha-1$. The function $v_f$ in \eqref{v_f_t} admits the decomposition
$$
v_f(t,z,x)=a(t,z)x^2 + b(t,z)x + c(t,z)
$$
where $a$ is defined in \eqref{a}, so it does not depend on $f$, and it is the unique solution in $\HZpar[\alpha+\delta]{}$ of \eqref{QL_PIDE_a}, whereas $b$ and $c$ are the unique solutions of the following linear parabolic PIDEs
\begin{align}
0=&-\frac{\partial b}{\partial t} +\genop[A]{b}-\genop[B]{b}-\pi^*\genop[Q]{b}, & \qquad b(T,.)&=-2f \label{PIDE_b};\\
0=&-\frac{\partial c}{\partial t} +\genop[A]{c}-\genop[B]{c}+\frac{1}{4}\frac{(\genop[Q]{b})^2}{ \genop[G]{a}}, & \qquad c(T,.)&=f^2 \label{PIDE_c}
\end{align}
in the \hold space $\HZpar[\alpha+\delta]{}$, where $\pi^*$ is defined
in \eqref{pi_opt}. The functions $t\mapsto a(t,.),b(t,.)$, $c(t,.)$ are also differentiable on $(0,T)$.\\ 
Furthermore the optimal policy in the control problem \eqref{v_f_t} is given by
\begin{align}
\theta^*_t:= e^{-Z_{t-}}\left( \pi^{*}(t,Z_{t-}) X^{\theta^*}_{t-} -\frac{1}{2}\frac{\genop[Q]{b}(t,Z_{t-})}{\genop[G]{a}(t,Z_{t-})}\right),\, X_t^{\theta^*}:=x+\int_0^t \theta^*_{r-} dS_{r}\label{opt_theta_vf}
\end{align}
\q
\end{theorem}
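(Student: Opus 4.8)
The plan is to prove the theorem as a verification result resting on the triangular structure of the control problem: posit the quadratic candidate $\hat v(t,z,x):=a(t,z)x^2+b(t,z)x+c(t,z)$, identify the three PIDEs by inserting $\hat v$ into the dynamic programming equation and matching powers of $x$, construct $a,b,c$ in the \hold scale, and finally verify that $\hat v=v_f$. First I would record the controlled dynamics: by \itos formula $dS_r^{t,z}=S_{r-}\tilde\mu\,dr+S_{r-}\int_\R(e^{\gamma}-1)\tilde J(dr\,dy)$, so that, writing the control in dollar terms as $\eta:=\theta e^{Z}$ (hence $\theta=e^{-Z}\eta$), the generator $\mathcal L^{\eta}$ of $(Z,X^{\theta})$ acting on a smooth $w(t,z,x)$ is a polynomial of degree two in $\eta$, and the dynamic programming equation reads $\partial_t v_f+\inf_{\eta}\mathcal L^{\eta}v_f=0$ with terminal datum $(f(z)-x)^2$.

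Inserting $\hat v$ and integrating against $\nu$, the $\eta$-dependent part of $\mathcal L^{\eta}\hat v$ collapses, thanks to the definitions of $\genop[Q]{}$ and $\genop[G]{}$ in Definition~\ref{operators}, to $\eta^{2}\,\genop[G]{a}+\eta\,(2x\,\genop[Q]{a}+\genop[Q]{b})$. Since $a\ge e^{-CT}$ by Theorem~\ref{theo_a_lip} and $\int_\R(e^{\gamma}-1)^2\nu(dy)>0$ by Assumption~\ref{assumptions_1}-$\mathbf{[ND]}$, the coefficient $\genop[G]{a}$ is bounded away from zero, so the pointwise minimiser is $\eta^{*}=-x\,\genop[Q]{a}/\genop[G]{a}-\tfrac12\,\genop[Q]{b}/\genop[G]{a}=\pi^{*}x-\tfrac12\,\genop[Q]{b}/\genop[G]{a}$ by \eqref{pi_opt}; multiplying by $e^{-Z}$ gives exactly the feedback \eqref{opt_theta_vf}. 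Substituting $\eta^{*}$ and collecting the coefficients of $x^{2}$, $x^{1}$ and $x^{0}$ then reproduces, respectively, the nonlinear equation \eqref{QL_PIDE_a} for $a$, the linear equation \eqref{PIDE_b} for $b$ and the linear equation \eqref{PIDE_c} for $c$, with terminal values $1,-2f,f^{2}$ read off from $(f(z)-x)^2$. Here one must check that the unconstrained minimiser $\pi^{*}=-\genop[Q]{a}/\genop[G]{a}$ satisfies $|\pi^{*}|\le\bar\Pi$, so that the infimum defining $\op[H]{[a]}$ is attained in the interior and the $x^2$-equation indeed coincides with \eqref{QL_PIDE_a}; this is provided by Theorem~\ref{HJB_v0} together with the \lip bound $\Klipa$ of Theorem~\ref{theo_a_lip}, which is the reason for the specific form of $\bar\Pi$ in \eqref{bound_control_HJB}.

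Next I would solve the triangular system in order. The function $a$, its regularity in $\HZpar[\alpha+\delta]{}$ and its time-differentiability are furnished by Theorem~\ref{HJB_v0}; in particular $\pi^{*}$ and the coefficients of \eqref{PIDE_b} are \hold functions and $\genop[G]{a}$ stays bounded below. Equation \eqref{PIDE_b} is then a \emph{linear} parabolic PIDE for $b$ with \hold coefficients and terminal datum $-2f\in\HZR[\alpha+\delta]{}$, so the Schauder-type existence, uniqueness and a priori estimates for these nonlocal parabolic equations (the same linear theory used to construct $a$) yield a unique $b\in\HZpar[\alpha+\delta]{}$, differentiable in $t$ on $(0,T)$. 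With $b$ known, the source term $\tfrac14(\genop[Q]{b})^{2}/\genop[G]{a}$ of \eqref{PIDE_c} is a product and quotient of \hold functions with denominator bounded away from zero, hence itself \hold, and the same linear theory produces a unique $c\in\HZpar[\alpha+\delta]{}$ with terminal datum $f^{2}$.

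Finally I would close by verification. For the classical solution $\hat v=ax^{2}+bx+c$ and an arbitrary $\theta\in\At$, apply \itos formula to $\hat v(r,Z^{t,z}_r,X^{t,z,x,\theta}_r)$ on $[t,T]$: by construction the drift equals $\partial_t\hat v+\mathcal L^{\eta}\hat v\ge\partial_t\hat v+\inf_{\eta}\mathcal L^{\eta}\hat v=0$, with equality precisely for the feedback \eqref{opt_theta_vf}, while the compensated jump integral is a true martingale once the relevant $\mathbb L^{2}$ integrability is checked; taking expectations gives $\hat v(t,z,x)\le\E^{\P}[(f(Z_T)-X_T)^{2}]$ for every admissible $\theta$, with equality for $\theta^{*}$, whence $\hat v=v_f$ and $\theta^{*}$ is optimal, its admissibility $\theta^{*}\in\At$ following from the boundedness of $\pi^{*}$ and $\genop[Q]{b}/\genop[G]{a}$ and the $\mathbb L^{2}$ bounds on $X^{\theta^{*}}$. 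I expect the main obstacle to be the existence-and-regularity step for $b$ and, above all, for $c$: one must show that the nonlocal operators $\genop[Q]{}$ and $\genop[G]{}$ map the \hold scale into itself with the correct loss of order, controlled by $\alpha\in(1,2)$ and Assumptions~$\mathbf{[I]}$ and $\mathbf{[RG]}$, so that the source term of \eqref{PIDE_c} genuinely lies in $\HZpar[\alpha+\delta]{}$ and the Schauder estimates apply. The verification step is then comparatively routine, modulo the integrability bookkeeping needed to justify the martingale property of the jump integral and the admissibility of $\theta^{*}$.
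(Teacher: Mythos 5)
Your proposal is correct and follows essentially the same route as the paper: the paper itself does not reproduce a proof of this theorem (it refers to Chapter 7 of \cite{defr.12}), but the method it describes and builds on is precisely the HJB/verification argument you give --- posit the quadratic ansatz, match powers of $x$ to obtain the triangular system \eqref{QL_PIDE_a}, \eqref{PIDE_b}, \eqref{PIDE_c} with terminal data $1$, $-2f$, $f^2$, solve it in the H\"older scale using Theorem \ref{HJB_v0} and linear Schauder theory for the nonlocal equations, and verify optimality of \eqref{opt_theta_vf} by It\^o's formula. Your computation of the $\eta$-dependent part as $\eta^2\,\mathcal{G}_t a + \eta\,(2x\,\mathcal{Q}_t a + \mathcal{Q}_t b)$, the resulting identification of the three PIDEs, and the observation that one must check $|\pi^*|\le\bar\Pi$ (so that the constrained infimum defining $\mathcal{H}$ in \eqref{QL_PIDE_a} coincides with the unconstrained one, which is also what makes the verification inequality valid against arbitrary admissible strategies) are all consistent with the paper's setup.
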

The proof of these results can be found in Chapter 7 of \cite{defr.12}. From the decomposition \eqref{v_as_quadratic_x} we also obtain the optimal price in \eqref{v_f_t}:
\begin{align}
x^*(f):= \text{arg}\inf_{x\in\R} v_f(t,z,x)=-\frac{b^f(t,z)}{2a(t,z)}\label{qh_price}
\end{align}
which is a linear function of the payoff $f$ since $b^f$ is.

%%%%%%%%%%%%%%%%%%%%%%%%%
\paragraph{Non smooth payoff}
%%%%%%%%%%%%%%%%%%%%%%%%%%%%
Theorem \ref{HJB_v} allows us to characterize the value function $v_f$
when the payoff function $f$ is sufficiently smooth, i.e. $f\in
\HZR[\alpha+\delta]{}$. However, in most cases of interest (for
example put
options, straddles or bear spreads) this
function is not even continuously differentiable. The following lemma proves the stability of the optimal price $x^*(f)$ and the optimal hedging strategy under small perturbations of the function $f$:
\begin{lemma}
\label{lem: stab} Let $f_1,f_2$ be two measurable functions with $f_i(Z_T^{t,z})\in\mathbb L^2(\P)$ for all $t,z$, $i=1,2$. Then for any $t<T$ and $z\in\R$ 
\begin{align*}
 \left| x^*(f_1)(t,z)-x^*(f_2)(t,z)\right|\leq &a(t,z)^{-1/2}\left\|(f_1-f_2)(Z_T^{t,z})\right\|_{\mathbb L^2(\P)}\\
& \\
\left| \left(v^{f_1}-v^{f_2}\right)(t,z,x)\right|\leq &2\left( x +\left\|(f_1+f_2)(Z_T^{t,z})\right\|_{\mathbb L^2(\P)}\right) \left\|(f_1-f_2)(Z_T^{t,z})\right\|_{\mathbb L^2(\P)}
\end{align*}

 Fix now $(t,z,x)$ and let $f_n$ such that
 $\left\|(f_n-f)(Z_T^{t,z})\right\|_2\to 0$, $n\to \infty$. If
 $\theta^n$ is the optimal control in the problem \eqref{v_f_t} with payoff function $f_n$ then, for all $\varepsilon>0$, there exists some $N>0$ such that for any $n\geq N$ one has
\begin{align*}
\left|v_f(t,z,x)-\E^\P\left[\left(f(Z_T^{t,z})-x-\int_t^T\theta^n_{r-}dS_r^{t,z} \right)^2 \right] \right|\leq \varepsilon
\end{align*}
\end{lemma}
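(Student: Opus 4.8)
The plan is to recast the entire statement as elementary geometry in the Hilbert space $\mathbb{L}^2(\P)$, whose scalar product and norm I denote $\langle\cdot,\cdot\rangle$ and $\|\cdot\|$. Set
\[\mathcal{G}:=\left\{\int_t^T\theta_{r-}\,dS_r^{t,z}:\theta\in\At\right\}.\]
From \eqref{D}--\eqref{At} the family $\At$ does not depend on $x$ (the condition $x+G\in\mathbb{L}^2(\P)$ is equivalent to $G\in\mathbb{L}^2(\P)$) and is stable under linear combinations, so $\mathcal{G}$ is a linear subspace of $\mathbb{L}^2(\P)$; let $\bar{\mathcal{G}}$ be its closure and $Q$ the orthogonal projection onto $\bar{\mathcal{G}}^{\perp}$. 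Since $X_T^{t,z,x,\theta}=x+\int_t^T\theta_{r-}\,dS_r^{t,z}$, definition \eqref{v_f_t} says exactly that $v_f(t,z,x)$ is the squared $\mathbb{L}^2$-distance from $f(Z_T^{t,z})-x$ to $\bar{\mathcal{G}}$:
\[v_f(t,z,x)=\inf_{g\in\mathcal{G}}\|f(Z_T^{t,z})-x-g\|^2=\|Q(f(Z_T^{t,z})-x)\|^2.\]
The one identity I record at the outset is $\|Q\mathbf{1}\|^2=a(t,z)$: by \eqref{a} and the symmetry $g\mapsto-g$ of the subspace $\mathcal{G}$ we have $a(t,z)=\inf_{g\in\mathcal{G}}\|\mathbf{1}+g\|^2=\inf_{g\in\mathcal{G}}\|\mathbf{1}-g\|^2=\|Q\mathbf{1}\|^2$, which is strictly positive by Theorem \ref{theo_a_lip}.

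For the price bound I use that $Q$ is linear, so $v_f(t,z,\cdot)$ is the quadratic $x\mapsto\|Qf(Z_T^{t,z})-xQ\mathbf{1}\|^2$, minimised at $x^*(f)=\langle Qf(Z_T^{t,z}),Q\mathbf{1}\rangle/\|Q\mathbf{1}\|^2$ (this reproves \eqref{qh_price} and its linearity in $f$). Hence $x^*(f_1)-x^*(f_2)=\langle Q(f_1-f_2)(Z_T^{t,z}),Q\mathbf{1}\rangle/\|Q\mathbf{1}\|^2$, and Cauchy--Schwarz together with the contraction property $\|Qh\|\le\|h\|$ and $\|Q\mathbf{1}\|^2=a(t,z)$ yields the first inequality immediately.

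For the value-function bound I apply the same projection to $A:=f_1(Z_T^{t,z})-x$ and $B:=f_2(Z_T^{t,z})-x$, using the factorisation of a difference of squared norms:
\[v_{f_1}(t,z,x)-v_{f_2}(t,z,x)=\|QA\|^2-\|QB\|^2=\langle Q(A-B),\,Q(A+B)\rangle.\]
Cauchy--Schwarz and $\|Q\cdot\|\le\|\cdot\|$ give $|v_{f_1}-v_{f_2}|\le\|A-B\|\,\|A+B\|$, and since $A-B=(f_1-f_2)(Z_T^{t,z})$ while $\|A+B\|\le 2|x|+\|(f_1+f_2)(Z_T^{t,z})\|$, this is even slightly stronger than the stated estimate.

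Finally the convergence statement follows by combining the value-function bound with the lower bound built into the distance interpretation. Writing $G^n:=\int_t^T\theta^n_{r-}\,dS_r^{t,z}$ and $e_n:=\|(f_n-f)(Z_T^{t,z})\|\to0$, admissibility of $\theta^n$ for the $f$-problem (the set $\At$ is the same for $f$ and $f_n$) gives $G^n\in\mathcal{G}$, so $\E^\P[(f(Z_T^{t,z})-x-G^n)^2]=\|f(Z_T^{t,z})-x-G^n\|^2\ge v_f(t,z,x)$ for every $n$. Conversely, optimality of $\theta^n$ for $f_n$ gives $\|f_n(Z_T^{t,z})-x-G^n\|^2=v_{f_n}(t,z,x)$, whence the triangle inequality yields $\|f(Z_T^{t,z})-x-G^n\|\le\sqrt{v_{f_n}(t,z,x)}+e_n$; since $e_n\to0$ and $v_{f_n}(t,z,x)\to v_f(t,z,x)$ by the value-function bound, the right-hand side tends to $\sqrt{v_f(t,z,x)}$. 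Squaring and combining with the lower bound gives $\E^\P[(f(Z_T^{t,z})-x-G^n)^2]\to v_f(t,z,x)$, which is the assertion. The only delicate points, which I would cite from the mean--variance theory of \cite{cern.kall.07} (see also \cite{defr.12}) rather than reprove, are that $\mathcal{G}$ is a linear subspace and that the infimum defining $v_{f_n}$ is attained by some admissible $\theta^n\in\At$ (which amounts to the closedness of $\mathcal{G}$); granting these, every step above reduces to Cauchy--Schwarz and the $1$-Lipschitz property of the orthogonal projection.
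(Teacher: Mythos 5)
Your proof is correct. A point of order first: the paper does not actually contain a proof of this lemma --- it defers to Chapter 5 of \cite{defr.12} --- so there is no in-paper argument to compare against; what follows assesses your proposal on its own terms.

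Your Hilbert-space route is clean and essentially complete. The key structural inputs are exactly the ones you isolate: (i) $\At$ is independent of $x$ and the gains set $\mathcal G=\{\int_t^T\theta_{r-}dS_r^{t,z}:\theta\in\At\}$ is a linear subspace of $\mathbb L^2(\P)$, and (ii) $\|Q\mathbf 1\|^2=a(t,z)>0$, the positivity coming from Theorem \ref{theo_a_lip}. Granting these, the price bound is Cauchy--Schwarz plus the contraction property of $Q$, the value bound is the factorization $\|QA\|^2-\|QB\|^2=\langle Q(A-B),Q(A+B)\rangle$, and the convergence statement correctly plays the lower bound (admissibility of $\theta^n$ for the $f$-problem, since $\At$ does not depend on the payoff) against the upper bound from optimality of $\theta^n$ for $f_n$ together with $v_{f_n}\to v_f$; your identification of the minimizer also reproduces \eqref{qh_price}. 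Two caveats, both minor. First, genuine linearity of $\mathcal G$ (closure under addition, not merely the cone property the paper invokes after \eqref{v0}) is needed for the projection argument; it does follow from \eqref{D}--\eqref{At} and additivity of the stochastic integral on $\bar{\mathcal D}$, but you are right to flag it and cite \cite{cern.kall.07} and \cite{defr.12} rather than treat it as free. Note, however, that attainment of the infimum for $f_n$ is a hypothesis of the lemma (supplied by Theorem \ref{HJB_v} for smooth payoffs), so closedness of $\mathcal G$ is not something your argument actually consumes. Second, your estimate $2|x|+\|(f_1+f_2)(Z_T^{t,z})\|_{\mathbb L^2(\P)}$ differs from the stated $2\left(x+\|(f_1+f_2)(Z_T^{t,z})\|_{\mathbb L^2(\P)}\right)$: it is stronger for $x\geq 0$, while for $x<0$ the stated bound cannot be literally correct (it can be negative), so the statement presumably intends $|x|$; your version is the defensible one.
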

The proof of this Lemma can be found in Chapter 5 of
\cite{defr.12}. One can thus approximate a non-smooth payoff function
$f$ with smooth functions $f_n$, controlling the error on the
value function and the cost of the hedging strategy with $\left\|f-
  f_n\right\|_{2}$. Furthermore the corresponding strategies
$(\theta_n)$ become $\varepsilon$-optimal for the pay-off $f$ starting
from sufficiently large $n$.

\section{Numerical solution schemes} \label{subsection::IntegrodifferentialScheme}
We now present a numerical scheme to solve the PIDE introduced in
Section \ref{case_a_geq1} when the \lev measure $\nu$ verifies the
Assumption \ref{assumptions_1}-$\mathbf{[L]}$. From
\eqref{opt_theta_vf} and \eqref{qh_price} we remark that in order to
solve the problem \eqref{v_f_t}, i.e. to find the optimal strategy
$\theta^*$ and the optimal price $x^*$, we only need to compute the
functions $a$ and $b$, solutions, respectively, of PIDEs
\eqref{QL_PIDE_a} and \eqref{PIDE_b}. % We first focus on the PIDE

The finite difference discretization schemes will be constructed using the Markov chain
approximation technique developed in \cite{kush.76}. One of the advantages of the probabilistic treatment
is that it allows to estimate the error due to the truncation of the
domain in a simple manner. 
The Markov chain approximation method works as follows. We first
construct a discrete-time Markov process $(\widehat
Z_{t_{i}})_{i=0,\dots,N_T}$ evolving on a regular space grid $z_j = j\Delta
z,-N<j<N$ and regular time grid $t_i = i \Delta t$, $i=0,\dots,N_T$
with $\Delta t = \frac{T}{N_T}$, approximating the process $Z$ defined in \eqref{Z}. We
then replace the process $Z$ with the Markov chain $\widehat Z$ in the
quadratic hedging problem. The dynamic programming algorithm for the
discretized hedging problem then provides an approximation scheme for
the original control problem and thus for equations \eqref{QL_PIDE_a} and \eqref{PIDE_b}.

\subsection{Definition of the approximating Markov chain}
%We now construct a discrete-time Markov process $(\widehat
%Z_{t_{i}})_{i=1,\dots,N_T}$ evolving on the space grid $\{j\Delta
%z,-N<j<N\}$, approximating the process $Z$ defined in \eqref{Z}. 
%Let
%us start with an informal explanation of this construction. 
The
action of the generator of $Z$ on a test function $\varphi$ is given by
$$
\mathcal L^Z \phi(t,z) = \mu\frac{\partial \varphi}{\partial z}(z)+  \int_\R \left(\varphi(z+\gamma(t,z,y))-\varphi(z)-\gamma(t,z,y)\frac{\partial \varphi}{\partial z}(z)\right)\nu(dy)
$$ 
We now detail the
computation of the integral term in this generator. In order to
avoid interpolation, we use a space and time dependent grid for discretizing the L\'evy
density, with discretization points denoted by $(y_i(t,z))_{-I\leq i
  \leq I}$. We select the discretization point $y_i(t,z)$, which corresponds to
the center of $i$-th discretization interval, as the unique solution
of the equation $\gamma(t,z,y_i(t,z))= i \Delta z$. The boundaries of
the discretization intervals will then correspond to half-integer
values of $i$. Although these discretization points depend on $t$ and
$z$, we will sometimes omit this dependence to simplify notation. 

To treat the singularity of the L\'evy density at zero, we adapt
the methodology of \cite{forsyth} and divide the real line
into four disjoint regions, for an integer
$\kappa\geq 1$
\begin{eqnarray*}
 \Omega_0(t,z) & =  & \left \{ y : y_{-\kappa-\frac{1}{2}}(t,z) \le y \le y_{\kappa+\frac{1}{2}}(t,z) \right \}, \\
 \Omega_1(t,z) & = & \left \{ y : y_{\kappa+\frac{1}{2}}(t,z)< y \leq
   y_{\zeta+\frac{1}{2}}(t,z)\right\}\cup \left \{ y :
   y_{-\kappa-\frac{1}{2}}(t,z)> y \geq 
   y_{-\zeta'-\frac{1}{2}}(t,z)\right\}
,\\
 \Omega_2(t,z) & = & \left \{ y : y_{\zeta+\frac{1}{2}}(t,z)< y \leq
   y_{I+\frac{1}{2}}(t,z)\right\}\cup \left \{ y :
   y_{-\zeta'-\frac{1}{2}}(t,z)> y \geq 
   y_{-I-\frac{1}{2}}(t,z)\right\},\\
 \Omega_3(t,z) & = & \left \{ y :  y\leq y_{-I-\frac{1}{2}}(t,z) \right
\}\cup \left \{ y :  y\geq y_{I+\frac{1}{2}}(t,z) \right
\},
\end{eqnarray*}
where $\zeta = \inf\{i: y_i(t,z)\geq 1\}$ and $\zeta' = \inf\{i:
y_{-i}(t,z)\leq -1\}$. Without loss of generality, we shall always
assume that $\kappa < \zeta < I$ and $\kappa < \zeta'< I$. 

The jumps of the L\'evy measure in the regions
$\Omega_0,\dots,\Omega_3$ are treated as follows:
\begin{itemize}
\item The small jumps in the region $\Omega_0$ are truncated and the
  corresponding part of the integral operator is replaced by a local
  operator (e.g., second derivative). 
\item The jumps in the regions $\Omega_1$ and $\Omega_2$ are
  discretized. 
\item The large jumps in the region $\Omega_3$ are truncated. 
\end{itemize}
For reader's convenience the different truncation and discretization
parameters of our algorithm are listed in the following table. Theorems
\ref{maina.thm} and \ref{mainb.thm} below show how these different parameters affect the
overall approximation error. 

\medskip

\centerline{\begin{tabular}{cl}
Parameter & Meaning \\
$\Delta t$ & Time discretization step  (with $\Delta t = \frac{T}{N_T}$) \\
$\Delta z$ & Space discretization step\\
$N$ & Space grid size \\
$I$ & L\'evy measure truncation (number of points) \\
$\kappa$ & Truncation of small jumps (number of points)
\end{tabular}
}

\medskip

After replacing the small jumps with a local
operator and removing the large jumps, we obtain a generator of the form
$$
\mu\frac{\partial \varphi}{\partial z}+
\frac{D(t,z)}{2}\frac{\partial^2 \varphi}{\partial z^2}+
\int_{\Omega_1\cup \Omega_2} \left(\varphi(z+\gamma(t,z,y))-\varphi(z)-\gamma(t,z,y)\frac{\partial \varphi}{\partial z}(z)\right)\nu(dy),
$$ 
where the coefficient $D$ is defined by
\begin{align}
D(t,z):= \int_{\Omega_0} \gamma(t,z,y)^2 \nu(y)dy.\label{Dvalue}
\end{align}
Discretization of the remaining integral leads to a generator of the
form
\begin{align}
&\mu\frac{\partial \varphi}{\partial z}+
\frac{D(t,z)}{2}\frac{\partial^2 \varphi}{\partial z^2}+
\sum_{\kappa <|j|\leq I} \omega_j(t,z)
\left(\varphi(z+\gamma(t,z,y_j(t,z)))-\varphi(z)-\gamma(t,z,y_j(t,z))\frac{\partial
    \varphi}{\partial z}(z)\right),\notag\\
& = \hat\mu\frac{\partial \varphi}{\partial z}+
\frac{D(t,z)}{2}\frac{\partial^2 \varphi}{\partial z^2}+
\sum_{\kappa <|j|\leq I} \omega_j(t,z)
\left(\varphi(z+j\Delta z)-\varphi(z)\right)\label{gensum}
\end{align}
where
\begin{align*}
\hat \mu(t,z) = \mu(t,z) - \sum_{\kappa<| i| \leq I} \omega_i(t,z)
\gamma(t,z,y_i(t,z)). 
\end{align*}
and we have introduced the weights :
{\begin{equation}
 \omega_i(t,z) =\left\{\begin{array}{ll}
\displaystyle{ \frac{1}{\gamma^2(t,z,y_i(t,z))}  \int_{
    y_{i-1/2}(t,z)}^{y_{i+1/2}(t,z)} \gamma^2(t,z,y) \nu(y) dy}& \text{if  }
y_{i}\in  \Omega_1\\ & \\
\displaystyle{  \int_{y_{i-1/2}(t,z)}^{y_{i+1/2}(t,z) }\nu(y) dy} &\text{if  } y_{i}\in  \Omega_2
\\ &\\
0 & \text{otherwise  }
\end{array}\right.\label{weights.eq}
\end{equation}}

{Depending on the form of $\gamma$ and the L\'evy measure, the
integrals in $D(t,z)$ and $\omega_i(t,z)$ can often be calculated
explicitly.}
%, such as for CGMY process, where they can be expressed in terms of the
%incomplete gamma function. 
Otherwise, they may be calculated
numerically, using, e.g., the 5-point trapezoidal rule, which, in the
second case above yields
\begin{align}
 \hat \omega_i(t,z)  =   \frac{\Delta y_i}{4} \left ( \frac{1}{2} \nu( y_i - \frac{\Delta y_i}{2}) + \nu( y_i - \frac{\Delta y_i}{4}) +  \nu( y_i) + \nu( y_i + \frac{\Delta y_i}{4}) +\frac{1}{2} \nu( y_i+ \frac{\Delta y_i}{2}) \right )\label{hat_om}
\end{align}
with $ \Delta y_i(t,z) = y_{i+\frac{1}{2}}(t,z) - y_{i-\frac{1}{2}}(t,z)$.
In the following, we shall assume that these integrals are calculated
explicitly without error; the additional error introduced by the their
numerical evaluation can be easily estimated along the lines of other
computations in the following section.

Finally, to approximate the local part of the generator, introduce the
weights $\upsilon$ and $\chi$ defined by
\begin{align}
\upsilon(t,z) &=\frac{D(t,z)}{2\Delta z^2} -\frac{\hat\mu(t,z)}{2 \Delta z} \label{centra}\\
\chi(t,z) &= \frac{D(t,z)}{2\Delta z^2} +\frac{\hat\mu(t,z)}{2 \Delta z}\label{centrb}
\end{align}
if both these expressions are positive or and by
\begin{eqnarray*}
\upsilon(t,z) &= &\frac{D(t,z)}{2\Delta z^2} +\max(0,-\frac{\hat\mu(t,z)}{\Delta z}) \\
\chi(t,z) &=& \frac{D(t,z)}{2\Delta z^2} +\max(0,\frac{\hat\mu(t,z)}{\Delta z})
\end{eqnarray*}
otherwise.
Approximating the local part of the generator by central or
non-central differences in the usual way, we obtain the fully
discretized generator
\begin{multline}
\widehat{\mathcal L}^Z \varphi (z) = \chi(t,z) \varphi(z+\Delta z) + \upsilon(t,z)\varphi(z-\Delta z) -
(\chi(t,z)+\upsilon(t,z))\varphi(z) \\+ \sum_{\kappa <|j|\leq I} \omega_j(t,z)
\left(\varphi(z+j\Delta z)-\varphi(z)\right)\label{gendisc}
\end{multline}

Assume that $\Delta t$ is small enough so that for all $(t,z)$,
\begin{align}
\frac{D(t,z)}{\Delta z^2} + \frac{\Maxmu}{\Delta z} + \sum_{\kappa<| i| \leq I} |i|\omega_i(t,z) + \sum_{\kappa<| i| \leq I} \omega_i(t,z)
\leq \frac{1}{\Delta t},\label{cfl_exp}
\end{align}
where $\Maxmu$ is defined in Assumption
\ref{assumptions_1}-[\textbf{C}]-(i). Under this condition, we may introduce a discrete-time Markov chain $(\widehat
Z_{t_{i}})_{i=1,\dots,N_T}$  defined as
follows:
\begin{align*}
\widehat Z_{t_{i+1}} = \left\{\begin{aligned}
&\widehat Z_{t_i} + j \Delta z, j=\kappa+1,\dots, I, \quad \text{with
  probabilities} \quad \omega_j(t,\widehat Z_{t_i}) \Delta t \\
&\widehat Z_{t_i} + j \Delta z, j=-I,\dots, -\kappa-1, \quad \text{with
  probabilities}\quad \omega_j(t,\widehat Z_{t_i}) \Delta t\\
& \widehat Z_{t_i} +  \Delta z\quad \text{with probability}\quad
\chi(t,\widehat Z_{t_i})\Delta
t\\
& \widehat Z_{t_i} -  \Delta z\quad \text{with probability}\quad
\upsilon(t,\widehat Z_{t_i})\Delta
t\\
& \widehat Z_{t_i} \quad \text{with probability}\quad
1-\left(\upsilon(t,\widehat Z_{t_i})+\chi(t,\widehat Z_{t_i}) + \sum_{\kappa<|j|\leq I}
  \omega_j(t,\widehat Z_{t_i})\right)\Delta t\\
\end{aligned}\right.
\end{align*}

When the chain starts from the point $z$ at time $t_n$, its value will
be denoted by $\widehat Z^{z,t_n}_{t_i}$, $i\geq n$. This Markov chain
is related to the discretized generator \eqref{gendisc} in the
following way: 
$$
\mathbb E[\varphi(\widehat Z^{z,t_n}_{t_{n+1}}))] = \varphi(z) +
\Delta t \,\widehat{\mathcal L}^Z \varphi(z)
$$
For this reason,
we choose this Markov chain as the discretized approximation to the
process $Z$.

To make the
notation more compact, we shall denote the probability of transition
from $\widehat Z_{t_i}$ to $\widehat Z_{t_i} + j\Delta z$ by
$p_j(t_i,\widehat Z_{t_{i}})$, for $-I\leq j \leq I$. Also, in order
to restrict the values of the chain to the space grid, it shall be
stopped at the first moment when it exits the grid, defined by 
$$\beta^{z,t_n} = \inf\{i\geq n: \widehat
Z^{z,t_n}_{t_i} \notin [-N\Delta z, N\Delta z]\}.
$$

\subsection{A finite-difference scheme for the function $a(t,z)$} 
\label{sub:scheme_a}
The values of the approximations of the functions $a$ and $b$ at
points $(t_i,z_j)$ will be denoted, respectively, by $a^i_j$ and
$b^i_j$. We also introduce the functions $a^i(z)$ and
$b^i(z)$, defined only for $z = j \Delta z$ with $j\in \mathbb Z$, and given
by  $a^i(z_j) = a^i_j$ and $b^i(z_j) = b^i_j$. 
%For
%$j\notin (-N,N)$, the approximation will be defined by the
%\emph{boundary condition}: $a^i_j = q^a(t_i,z_j)$ and $b^{i}_j = q^b(t_i,z_j)$, where the functions
%$q^a$ and $q^b$ are assumed measurable and bounded. $q^a(T,z_j)$ and
%$q^b(T,z_j)$ for all $j$ will stand for the final boundary condition. In the numerical examples, we take
%$q^a\equiv 1$ --- this choice of boundary condition can be justified by the fact that
%if $S$ is a martingale, this is indeed the exact solution. In Theorems
%\ref{maina.thm} and \ref{mainb.thm} we shall see that the effect of the boundary conditions
%on the approximation becomes negligible for $\Delta z N$ sufficiently
%large. 

Consider the following discrete-time
control problem for $\widehat Z$:
$$
\hat v_f(t_n,z_j,x) = \inf_{\pi_{n+1},\dots,\pi_{N_T}}\mathbb E\left[\left(f(\widehat Z^{t_n,z_j}_{\beta^{t_n,z_j} \wedge N_T})-x\prod_{i=n+1}^{\beta^{t_n,z_j} \wedge N_T}\left(1+
\pi_i (e^{\widehat Z^{t_n,z_j}_{t_i}-\widehat Z^{t_n,z_j}_{t_{i-1}}} -1 )\right)\right)^2\right]
$$
where $\pi_i \in[-\bar \Pi, \bar \Pi]$ for $i=n+1,\dots,N_T$. This control problem is the discretized version of the original
control problem \eqref{v_f_t}. It is obtained by replacing the continuous
time process $Z = \log S$ with the discrete-time Markov chain $\widehat Z$;
assuming that the amount invested into the risky asset $\pi_t =
\theta_t S_t$ remains constant and equal to $\pi_i$ on the $i$-the
discretization interval, and stopping the Markov chain at the time $\beta^{z,t_n}$
when it first exits the space grid. Similarly to the original
continuous-time problem, it is easy to show that $\hat v_f$ has a
quadratic structure: 
\begin{align}
\hat v_f(t_n,z_j,x) = x^2 \hat a(t_n,z_j) + x\hat b(t_n,z_j) + \hat
c(t_n,z_j) \label{disc_quad.eq}
\end{align}
for some functions $\hat a$, $\hat b$ and $\hat c$ defined on the
grid. In particular, the function $\hat a$ is the solution of the pure
investment problem for the discrete process $\hat Z$ and satisfies
\begin{align}
\hat a(t_n,z_j) = \inf_{\pi_{n+1},\dots,\pi_{N_T}}\mathbb E\left[\prod_{i=n+1}^{\beta^{t_n,z_j} \wedge N_T}\left(1+
\pi_i (e^{\widehat Z^{t_n,z_j}_{t_i}-\widehat Z^{t_n,z_j}_{t_{i-1}}} -1 )\right)^2\right],\label{ahat.eq}
\end{align}
For greater generality\footnote{For example, if some a priori approximation
  for the function $a$ is available, it may be used as boundary
  condition to improve the accuracy of the scheme.}, we introduce an arbitrary boundary / terminal
condition and define the function $a^n$, approximating the function
$a$, by
$$
a^n(z_j) = \inf_{\pi_{n+1},\dots,\pi_{N_T}}\mathbb E\left[\prod_{i=n+1}^{\beta^{t_n,z_j} \wedge N_T}\left(1+
\pi_i (e^{\widehat Z^{t_n,z_j}_{t_i}-\widehat Z^{t_n,z_j}_{t_{i-1}}}
-1 )\right)^2q^a(\Delta t(\beta^{t_n,z_j}\wedge
N_T),\widehat Z^{t_n,z_j}_{\beta^{t_n,z_j} \wedge N_T})\right],
$$
where the function $q^a$ is measurable, bounded from above and below
by positive constants, and satisfies
$q^a(T,z) = 1$ for all $z$. In the numerical examples, we take
$q^a\equiv 1$ and in Theorems
\ref{maina.thm} and \ref{mainb.thm} we shall see that the effect of the boundary conditions
on the approximation becomes negligible for $\Delta z N$ sufficiently
large.

The dynamic programming principle for this discrete-time
control problem writes
\begin{align}
a^n(z_j) &= \inf_{\pi \in[-\bar \Pi, \bar \Pi]}\mathbb E\left[\left(1+
    \pi (e^{\widehat Z^{t_n,z_j}_{t_{n+1}} - z_j} -1 )\right)^2
  a^{n+1}(\widehat Z^{t_n,z_j}_{t_{n+1}})\right],\quad \text{if}\quad j \in (-N,N),\label{dpa.eq}
\\
a^n(z_j) &= q^a(t_n, z_j), \quad \text{if}\quad j \notin (-N,N),\notag
\end{align}
or in other words
\begin{align}
a^n_j = \inf_{\pi \in[-\bar \Pi, \bar \Pi]} \sum_{-I \leq l \leq I} p_l(t_n,z_j) (1+ \pi
(e^{l\Delta z} -1))^2 a^{n+1}_{j+l},\label{dpa2.eq}
\end{align}
which gives a fully explicit finite difference scheme for approximating
the function $a$. Using the explicit form of the transition
probabilities, it can be rewritten as 
\begin{align}
&\frac{a^{n+1}_j - a^{n}_j}{\Delta t} - \left(\upsilon(t_n,z_j)+\chi(t_n,z_j) \right)a^{n+1}_j + \chi(t_n,z_j)
a^{n+1}_{j+1} + \upsilon(t_n,z_j) a^{n+1}_{j-1} + \sum_{\kappa < |l| \leq
  I} \omega_l(t_n,z_j)(a^{n+1}_{j+l} - a^{n+1}_j)\notag\\
&+\inf_{\pi \in[-\bar \Pi, \bar \Pi]} \Big\{ \pi^2 \Big(\sum_{\kappa < |l| \leq I} \omega_l(t_n,z_j) 
(e^{l\Delta z} -1)^2 a^{n+1}_{j+l}+\chi(t_n,z_j) (e^{\Delta
  z} -1)^2 a^{n+1}_{j+1} +\upsilon(t_n,z_j) (e^{-\Delta
  z} -1)^2 a^{n+1}_{j-1} \Big)\notag\\ &+ 2\pi \left(\sum_{\kappa < |l| \leq I}
\omega_l(t_n,z_j) (e^{l\Delta z}-1)a^{n+1}_{j+l}+\chi(t_n,z_j) (e^{\Delta
  z} -1) a^{n+1}_{j+1} +\upsilon(t_n,z_j) (e^{-\Delta
  z} -1) a^{n+1}_{j-1} \right) \Big\}=0.\label{schemea.eq}
\end{align}
Hence, this scheme uses the following approximations for the operators
$\mathcal A$, $\mathcal B$, $\mathcal Q$ and $\mathcal G$ appearing in
Equation \eqref{QL_PIDE_a}:
\begin{align*}
&(\mathcal B_t - \mathcal A_t)a(t_n,z_j) \approx - \left(\upsilon(t_n,z_j)+\chi(t_n,z_j) \right)a^{n+1}_j + \chi(t_n,z_j)
a^{n+1}_{j+1} + \upsilon(t_n,z_j) a^{n+1}_{j-1} \\&\qquad \qquad + \sum_{\kappa < |l| \leq
  I} \omega_l(t_n,z_j)(a^{n+1}_{j+l} - a^{n+1}_j)\\
&\mathcal G_ta(t_n,z_j) \approx \sum_{\kappa <|l| \leq I} \omega_l(t_n,z_j) 
(e^{l\Delta z} -1)^2 a^{n+1}_{j+l}+\chi(t_n,z_j) (e^{\Delta
  z} -1)^2 a^{n+1}_{j+1} +\upsilon(t_n,z_j) (e^{-\Delta
  z} -1)^2 a^{n+1}_{j-1} \\
& \mathcal Q_ta(t_n,z_j) \approx \sum_{\kappa <|l| \leq I}
\omega_l(t_n,z_j) (e^{l\Delta z}-1)a^{n+1}_{j+l}+\chi(t_n,z_j) (e^{\Delta
  z} -1) a^{n+1}_{j+1} +\upsilon(t_n,z_j) (e^{-\Delta
  z} -1) a^{n+1}_{j-1}
\end{align*}

\paragraph{Stability analysis}
Under the condition \eqref{cfl_exp}, the transition
probabilities of the Markov chain are positive. Therefore, from
Equation \eqref{dpa2.eq}, by choosing $\pi=0$, it follows that whenever $a^{n+1}_j \geq 0$
for $-N < j < N$ and the boundary condition satisfies $q^a(t,z)\geq 0$
for all $t,z$, the elements of $a^{n}$ satisfy
$$
0\leq a^{n}_i \leq \max\{\max_{-N < j < N} a^{n+1}_j,
\|q^a\|_{\infty}\},\quad -N < i < N. 
$$
In other words, the scheme is $L^\infty$-stable as long as the
terminal and boundary data are non-negative. In the numerical examples
we choose $q^a(t,z)\equiv 1$ which means that $0\leq a^n_j \leq 1$ for
$n=0,\dots, N_T$ and $-N < j < N$. Therefore, condition
\eqref{cfl_exp} plays the role of the CFL condition in this setting,
ensuring the stability of the numerical scheme for $a$. 

Let us now derive a more tractable sufficient condition of stability. 
By definition of the weights $D$ and $\omega$,
\begin{align}
&\frac{D(t,z)}{\Delta z^2} + \frac{\bar \mu}{\Delta z} + \sum_{\kappa<|
  i| \leq I} |i|\omega_i(t,z) + \sum_{\kappa<| i| \leq I}
\omega_i(t,z) \\ &= \frac{1}{\Delta z^2}
\int_{\Omega_0}\gamma(t,z,y)^2 \nu(dy) + \sum_{i: y_i \in \Omega_1}
\frac{(1+|i|)}{i^2 \Delta z^2} \int_{
    y_{i-1/2}(t,z)}^{y_{i+1/2}(t,z)} \gamma^2(t,z,y) \nu(dy) \notag \\
  &\qquad \qquad + \sum_{i: y_i \in \Omega_2}
(1+|i|) \int_{
    y_{i-1/2}(t,z)}^{y_{i+1/2}(t,z)} \nu(dy)+\frac{\bar \mu}{\Delta
    z}\notag\\
&\leq \frac{1}{\Delta z^2}
\int_{\Omega_0}\gamma(t,z,y)^2 \nu(dy) + \sum_{i: y_i \in \Omega_1}
\frac{(1+|i|)(1/2+|i|)}{i^2 \Delta z} \int_{
    y_{i-1/2}(t,z)}^{y_{i+1/2}(t,z)} |\gamma(t,z,y)| \nu(dy)\notag\\
  &\qquad \qquad + \sum_{i: y_i \in \Omega_2}
\frac{1+|i|}{(|i|-1/2)\Delta z} \int_{
    y_{i-1/2}(t,z)}^{y_{i+1/2}(t,z)} |\gamma(t,z,y)|\nu(dy)+\frac{\bar \mu}{\Delta
    z}\notag\\
&\leq \frac{1}{\Delta z^2}
\int_{\Omega_0}\gamma(t,z,y)^2 \nu(dy) +\frac{2}{\Delta z}
  \frac{(\kappa+2)^2}{(\kappa+1)^2} \int_{\Omega_1 \cup \Omega_2}
|\gamma(t,z,y)|\nu(dy) +
\frac{\bar \mu}{\Delta z}\label{3term}
%\\
%& = \frac{1}{\Delta z^2}
%\int_{ |u| \leq \Delta z(\kappa + \frac{1}{2}) } u^2  \nu\circ \gamma^{-1} (du) +\frac{(\kappa+\frac{3}{2})^2}{(\kappa+1)^2}  \int_{\Delta z(\kappa + \frac{1}{2})\leq |u| \leq I\Delta z}
%\nu\circ \gamma^{-1}(du) %\\ &\qquad \qquad +\frac{\kappa+\frac{1}{2}}{\kappa \Delta z}  \int_{\Delta z(\kappa + \frac{1}{2})\leq |u| \leq I\Delta z}
%|u| \nu\circ \gamma^{-1}(du)
%\frac{\bar \mu}{\Delta z},
\end{align}
where the inequalities follow from the fact that $\gamma$ is increasing
in $y$ and $\gamma(t,z,y_{i+1/2})
= (i+1/2)\Delta z$, the fact that $\inf\{|i|: y_{i}\in\Omega_1 \} =
\kappa+1$ and some straightforward simplifications. 

We now proceed to estimate each of the above
terms. For the first term, since $\Delta z$ is small and $\kappa$ is a
constant, we may assume that $(\kappa+1/2) (1\vee \frac{1}{m_1}) \Delta z<
y_0$ with $y_0$ and $m_1$ defined in Assumption
\ref{assumptions_1}-$\mathbf{[RG]-i)}$. Then, from this assumption we
may deduce that for all $t,z$,
$$\gamma^{-1}(t,z,(\kappa+1/2)\Delta z) \leq \frac{(\kappa+1/2)\Delta
  z}{m_1}\quad \text{and} \quad\gamma^{-1}(t,z,-(\kappa+1/2)\Delta z) \geq -\frac{(\kappa+1/2)\Delta
  z}{m_1}$$ and on the other hand, $|\gamma(t,z,y)|\leq (1+m_2 y_0)|y|$
for $y\in \Omega_0$. This implies:
$$
\frac{1}{\Delta z^2}
\int_{\Omega_0}\gamma(t,z,y)^2 \nu(dy) \leq \frac{(1+m_2y_0)^2}{\Delta
z^2} \int_{|y|\leq \frac{(\kappa+1/2)\Delta z}{m_1}} y^2 \nu(dy) \leq
\frac{2M_g(1+m_2y_0)^2}{2-\alpha} \left( \frac{\kappa+\frac{1}{2}}{m_1}\right)^{2-\alpha} \Delta z^{-\alpha}
$$
The second term satisfies
% \begin{align*}
% &\frac{(\kappa+\frac{3}{2})^2}{(\kappa+1)^2} \int_{\Omega_1 \cup \Omega_2}
% \nu(dy) \leq
% \frac{(\kappa+\frac{3}{2})^2}{(\kappa+1)^2}\int_{\gamma^{-1}(t,z,(\kappa+1/2)\Delta
%   z)}^{+\infty}\nu(dz) + \frac{(\kappa+\frac{3}{2})^2}{(\kappa+1)^2}\int_{-\infty}^{-\gamma^{-1}(t,z,(\kappa+1/2)\Delta
%   z)}\nu(dz) \\
% &\leq \frac{(\kappa+\frac{3}{2})^2}{(\kappa+1)^2}\int_{|y|>\frac{(\kappa+1/2)\Delta
%   z}{1+m_2 y_0}}\nu(dy) \leq
% \frac{(\kappa+\frac{3}{2})^2}{(\kappa+1)^2} \left\{\nu(\{y:|y|> y_0\})
% + \frac{M_g}{\alpha}\left(\frac{(\kappa+1/2)}{1+m_2
%     y_0}\right)^{-\alpha} \Delta z^{-\alpha}\right\},
% \end{align*}
% where we have used the fact that 
% $$
% \gamma^{-1}(t,z,(\kappa+1/2)\Delta
%   z) \geq \frac{(\kappa+1/2)\Delta
%   z}{1+m_2 y_0}\quad \text{and}\quad \gamma^{-1}(t,z,-(\kappa+1/2)\Delta
%   z) \leq -\frac{(\kappa+1/2)\Delta
%   z}{1+m_2 y_0},
% $$
% which is deduced from the bound on the second derivative of $\gamma$. 
% The third term in \eqref{3term} satisfies 
\begin{align*}
&\frac{1}{\Delta z} \int_{\Omega_1 \cup \Omega_2}
|\gamma(t,z,y)|\nu(dy) \leq \frac{1}{ \Delta z}
\int_{(\Omega_1 \cup \Omega_2)\cap \{|y|\leq y_0\}}
|\gamma(t,z,y)|\nu(dy) + \frac{1}{\Delta
  z}\int_{|y|> y_0} \sup_{t,z}|\gamma(t,z,y)| \nu(dy)\\
&\leq \frac{1+m_2y_0}{\Delta z}
\int_{|y|>\frac{(\kappa+1/2)\Delta
  z}{1+m_2 y_0}}
|y|\nu(dy) + \frac{1}{\Delta
  z}\int_{|y|> y_0} \sup_{t,z}|\gamma(t,z,y)| \nu(dy) \\
&\leq \frac{(\kappa+\frac{1}{2})M_g}{ \alpha-1}
\left(\frac{\kappa+1/2}{1+m_2 y_0}\right)^{-\alpha} \Delta z^{-\alpha}+  \frac{1}{ \Delta
  z}\int_{|y|> y_0} \sup_{t,z}|\gamma(t,z,y)| \nu(dy),
\end{align*}
where we remark that the integral in the right-hand side is finite by
Assumption \ref{assumptions_1}-$\mathbf{[I]}$.

Finally, we conclude that the stability condition \eqref{cfl_exp} is
implied by the bound
\begin{align}
&\Delta t \leq \frac{\Delta z^{\alpha}}{C_1 + C_2 \Delta z^{\alpha-1}}\quad \text{with}\label{dtbound}\\
&C_1 = \frac{2M_g(1+m_2y_0)^2}{2-\alpha} \left(
  \frac{\kappa+\frac{1}{2}}{m_1}\right)^{2-\alpha} +\frac{2(\kappa+2)^2(\kappa+\frac{1}{2})M_g}{(\kappa+1)^2 (\alpha-1)}
\left(\frac{\kappa+1/2}{1+m_2 y_0}\right)^{-\alpha},\notag\\
&C_2 = \bar \mu + 2\frac{(\kappa+2)^2}{(\kappa+1)^2}\int_{|y|> y_0}
\sup_{t,z}|\gamma(t,z,y)| \nu(dy).\notag
\end{align}
When $\Delta z$ is sufficiently small, to satisfy this condition, it is enough to take $\Delta t
\leq \frac{\Delta z^{\alpha}}{C_1 + \varepsilon}$ for some small
value $\varepsilon>0$. 

In practice, \eqref{dtbound} is only an upper bound and not a
necessary condition for stability. We recommend using the sharper condition
\eqref{cfl_exp}, for example, by decreasing the time step in the
computational procedure whenever this condition is violated.

\paragraph{Accuracy analysis} The following theorem analyzes the error and convergence of our approximation
algorithm.

\begin{theorem}\label{maina.thm}
 Assume that the following technical conditions hold in addition to Assumptions \ref{assumptions_1}:
\begin{enumerate}
\item The L\'evy density $\nu$ is twice continuously differentiable with
  bounded derivatives outside any neighborhood of zero;
\item The function $g$ has bounded derivatives near zero;
\item Assumption
\ref{assumptions_1}-$\mathbf{[RG]-i)}$ holds true with $y_0=+\infty$;
\item The function $\gamma(t,z,y)$ is 3 times continuously
  differentiable with respect to $y$ with bounded derivatives;
\item The data of the problem are such that the functions $a$ and $b$
  together with their derivatives up to
  order 4 with respect to $z$ and up to order 2 with respect to $t$ are
  bounded. 
\end{enumerate}  

Let the condition \eqref{cfl_exp} be satisfied and assume that $N
\Delta z > I\Delta z + \bar
  \mu T$. Then, there exist two positive constants $c,C<\infty$, which
  do not depend on truncation or discretization
parameters, such that for all $\kappa > c$, 
%if the L\'evy measure is locally symmetric near zero, that
%  is, $c_+ = c_-$, 
\begin{align*}
|a_n(z_j) -a(t_n,z_j)| &\leq C \Bigg\{\frac{(1+|z_j|)}{(N - I)\Delta z - \bar
  \mu T} + 
\Delta t + \Delta 
z^{3-\alpha}\kappa^{3-\alpha} + \Delta z^{3-\alpha}\kappa^{1-\alpha} \\ &+  \int_{|y|\geq \frac{I
    \Delta z}{\|\gamma'_y\|_\infty}} (1+|y| + \tau(y) +
\tau^2(y))\nu(dy)\Bigg\},
\end{align*}
 for $-N \leq j \leq N$ and $0\leq n \leq N_T$;
\end{theorem}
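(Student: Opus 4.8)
The plan is to argue by consistency plus stability, exploiting the a priori smoothness of $a$ granted by hypothesis (5) so that no viscosity-solution machinery is needed. I would first write the scheme \eqref{dpa2.eq} in the compact form $a^n=\mathcal S_n(a^{n+1})$, where $\mathcal S_n$ is the one-step dynamic programming operator $\mathcal S_n(w)(z_j)=\inf_{|\pi|\le\bar\Pi}\sum_{l}p_l(t_n,z_j)\,(1+\pi(e^{l\Delta z}-1))^2\,w_{j+l}$. Under the CFL condition \eqref{cfl_exp} the transition probabilities are nonnegative and sum to one, so (taking $\pi=0$) $\mathcal S_n$ is monotone and preserves nonnegativity; this is the discrete maximum principle that will supply stability. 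I then define the local truncation error $\mathcal E^n_j:=a(t_n,z_j)-\mathcal S_n\bigl(a(t_{n+1},\cdot)\bigr)(z_j)$, and the whole proof reduces to (i) bounding $\mathcal E^n_j$ by $\Delta t$ times the sum of the rates on the right-hand side, and (ii) propagating this local error to a global one through the monotonicity of $\mathcal S_n$ and the stopping of the chain at $\beta^{t_n,z_j}$.

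For the consistency step I would insert the exact solution $a(t_{n+1},\cdot)$ into \eqref{schemea.eq} and Taylor-expand. The time increment contributes $\partial_t a\,\Delta t+O(\Delta t^2)$ by the boundedness of $\partial_{tt}a$; the explicit difference operators reproduce $\mu\,\partial_z a+\tfrac12 D\,\partial_{zz}a$ up to $O(\Delta z^2)$ by the boundedness of the fourth $z$-derivatives. The substantive part is the error in replacing the integral operators $\mathcal{B}a$, $\mathcal{Q}a$, $\mathcal{G}a$ by the discrete sums listed after \eqref{schemea.eq}, which I would split region by region. On $\Omega_0$ the small jumps are frozen into the local term through the definition \eqref{Dvalue} of $D$, and the genuinely third-order Taylor remainder integrated against $\nu(y)\sim|y|^{-1-\alpha}$ over $|y|\lesssim(\kappa+\tfrac12)\Delta z/m_1$ yields the $\Delta z^{3-\alpha}\kappa^{3-\alpha}$ contribution; the midpoint quadrature on the $\gamma$-adapted nodes $y_i(t,z)$ in $\Omega_1\cup\Omega_2$, dominated by the innermost intervals adjacent to $\Omega_0$, yields the less singular $\Delta z^{3-\alpha}\kappa^{1-\alpha}$ contribution; and the truncation of large jumps in $\Omega_3$ leaves exactly the tail integral $\int_{|y|\ge I\Delta z/\|\gamma'_y\|_\infty}(1+|y|+\tau(y)+\tau^2(y))\,\nu(dy)$, the polynomial weight $1+|y|+\tau+\tau^2$ collecting the three operators. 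The nonlinearity is harmless: since both \eqref{QL_PIDE_a} and the scheme take $\inf_{|\pi|\le\bar\Pi}$ of the expression $2\pi\,\mathcal{Q}a+\pi^2\mathcal{G}a$, which is affine in $(\mathcal{Q}a,\mathcal{G}a)$, the infimum operation is Lipschitz and the error in $\mathcal H[a]$ is bounded by $2\bar\Pi$ times the consistency error of $\mathcal{Q}a$ plus $\bar\Pi^2$ times that of $\mathcal{G}a$.

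For the stability step I would use the discrete comparison principle implied by the monotonicity of $\mathcal S_n$. Setting $e^n:=\max_{-N<j<N}|a^n_j-a(t_n,z_j)|$, the nonnegativity of the weights $(1+\pi(e^{l\Delta z}-1))^2$ together with the bound $\sup_{|\pi|\le\bar\Pi}\sum_l p_l(1+\pi(e^{l\Delta z}-1))^2\le 1+C\Delta t$ (the $O(\Delta t)$ coming from the $\mathcal Q$- and $\mathcal G$-type sums with $|\pi|\le\bar\Pi$) gives the recursion $e^n\le(1+C\Delta t)\,e^{n+1}+\max_j|\mathcal E^n_j|$, whence by discrete Gronwall $e^n\le e^{CT}\bigl(e^{N_T}+\sum_{k>n}\max_j|\mathcal E^k_j|\bigr)$. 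Since each $\mathcal E^k_j$ carries a factor $\Delta t$ and there are $N_T=T/\Delta t$ steps, the accumulated interior error reproduces $\Delta t+\Delta z^{3-\alpha}\kappa^{3-\alpha}+\Delta z^{3-\alpha}\kappa^{1-\alpha}+\int_{|y|\ge I\Delta z/\|\gamma'_y\|_\infty}(\cdots)\,\nu(dy)$. It remains to control the terminal and boundary mismatch, which enters only when the chain, whose single jumps are bounded by $I\Delta z$, comes within $I\Delta z$ of the grid edge before $N_T$; bounding $a$ and $q^a$ by constants this contributes $2\|a\|_\infty\,\mathbb P[\beta^{t_n,z_j}\le N_T]$, and an optional-stopping estimate for a linear-growth Lyapunov function applied to the stopped chain, using that the drift is at most $\Maxmu$ per unit time, bounds this by a constant multiple of $\frac{1+|z_j|}{(N-I)\Delta z-\Maxmu T}$, matching the first term (the hypothesis $N\Delta z>I\Delta z+\Maxmu T$ keeping the denominator positive).

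The main obstacle I anticipate is the fine analysis of the singular integral near zero. Recovering the sharp exponent $3-\alpha$ requires expanding $a$ to third order so that the first-order term is absorbed by the compensator $-\gamma\,\partial_z a$ and the second-order term by the frozen diffusion $\tfrac12 D\,\partial_{zz}a$, leaving a genuinely cubic remainder to integrate against $|y|^{-1-\alpha}$; the bookkeeping is delicate because the quadrature nodes $y_i(t,z)$ are neither equispaced nor fixed, moving with $(t,z)$ through $\gamma$, so the midpoint error on $\Omega_1\cup\Omega_2$ must be transported back to the uniform $z$-variable using $\gamma_y,\gamma_{yy}$ (hypotheses (3)--(4) and $\mathbf{[RG]}$) before it can be summed. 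Ensuring that every constant is independent of $\Delta t,\Delta z,\kappa,I,N$, and verifying that the threshold $\kappa>c$ is precisely what validates the $\Omega_0$ expansion (so that $(\kappa+\tfrac12)(1\vee m_1^{-1})\Delta z<y_0$ in the relevant regime), is where most of the technical care will go.
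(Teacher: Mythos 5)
Your overall architecture --- monotone one-step operator $\mathcal S_n$, local truncation error, discrete comparison plus Gronwall --- is an analytic reformulation of what the paper actually does probabilistically (a telescoping decomposition \eqref{decomperr} over time steps, a one-step consistency estimate via Lemma \ref{cons.lm}, and the uniform moment bound \eqref{expestim} playing the role of your $(1+C\Delta t)^{N_T}\le e^{CT}$ stability factor), and your region-by-region consistency analysis, including the exponents $\Delta z^{3-\alpha}\kappa^{3-\alpha}$, $\Delta z^{3-\alpha}\kappa^{1-\alpha}$ and the tail integral with weight $1+|y|+\tau+\tau^2$, matches the paper's Lemma \ref{cons.lm}. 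However, your treatment of the domain-truncation term contains a genuine gap. You claim the boundary mismatch contributes $2\|a\|_\infty\,\mathbb P[\beta^{t_n,z_j}\le N_T]$, but in this scheme the error at boundary nodes is not propagated with sub-probability weights: the comparison weights are $p_l\,(1+\pi(e^{l\Delta z}-1))^2$, whose sum exceeds $1$, and pathwise the accumulated product $\prod_i(1+\pi_i(e^{\widehat Z_{t_i}-\widehat Z_{t_{i-1}}}-1))^2$ admits no useful deterministic bound (each factor can be as large as $(1+\bar\Pi(e^{I\Delta z}-1))^2$, so the product is exponentially large in $N_T$). The boundary influence is therefore $\mathbb E\bigl[\mathbf 1_{\{\text{exit}\}}\prod_i(1+\pi_i(\cdot))^2\bigr]$, and to decouple the indicator from the product one needs Cauchy--Schwarz together with a uniform \emph{fourth}-moment bound on the product (this is exactly the paper's estimate \eqref{expestim} with exponent $4$, itself obtained from Lemma \ref{cons.lm}). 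This decoupling produces $\mathbb P[\,\cdot\,]^{1/2}$, not $\mathbb P[\,\cdot\,]$.

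This breaks the second half of your boundary argument as well: a linear-growth Lyapunov function / first-moment optional-stopping bound gives $\mathbb P[\beta\le N_T]\le C(1+|z_j|)/((N-I)\Delta z-\Maxmu T)$, but after the Cauchy--Schwarz step you need this bound for $\mathbb P^{1/2}$, i.e.\ you need the \emph{quadratic} estimate $\mathbb P[\max_i|\widehat Z_{t_i}|\ge K]\le (z^2+C\,n\Delta t)/(K-\Maxmu\, n\Delta t)^2$, which is what the paper proves in Lemma \ref{exit.lm} via the discrete martingale $M_{t_i}=\widehat Z_{t_i}-\sum_{j<i}\mu(t_j,\widehat Z_{t_j})\Delta t$, Doob's $L^2$ inequality, and a conditional-variance bound; with only the first-moment estimate your final rate for the first term degrades to its square root. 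A second, smaller point: you misattribute the role of the threshold $\kappa>c$. It is not needed to fit $\Omega_0$ inside $(-y_0,y_0)$ (vacuous here, since hypothesis (3) imposes $y_0=+\infty$); it is needed so that $D(t,z)/\Delta z^2\sim\kappa^{2-\alpha}\Delta z^{-\alpha}$ dominates $|\hat\mu(t,z)|/\Delta z\lesssim(\bar\mu+C(\kappa\Delta z)^{1-\alpha})/\Delta z$, guaranteeing that the \emph{central}-difference weights \eqref{centra}--\eqref{centrb} are nonnegative. This matters for your own argument twice over: monotonicity of $\mathcal S_n$ (your stability) requires these weights to be nonnegative, and the $O(\Delta z^2)$ consistency of the drift term fails if the scheme falls back to upwind differencing.
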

The proof of this result is given in Appendix A. 
\begin{remark}
This theorem gives the decomposition of the global approximation error
into four sources: the domain truncation error (first term in brackets), the time
discretization error (second term), the error coming from space
discretization and truncation of small jumps
(third and fourth term) and the big jump truncation (last term). It
may be used for the optimal choice of the parameters of our
algorithm.
% for example, we see that in the locally symmetric case, the optimal
%$\kappa$ does not depend on $\Delta z$, and, one can take, for example
%$\kappa \sim 1$. On the other hand, in the general case, choosing
%$\kappa \sim \frac{1}{\sqrt{\Delta z}}$ yields a better performance
%(the error coming from the third and the fourth terms becomes of order
%of $\Delta z^{\frac{3-\alpha}{2}}$). 
\end{remark}
\begin{remark}
The constant $c$ is needed to ensure that for $\Delta z$ sufficiently
small, the weights $\upsilon$ and $\chi$ may be assumed to be given by \eqref{centra}-\eqref{centrb}.
If the L\'evy measure is locally symmetric near zero, that  is, $c_+ =
c_-$, one can take $c=1$.  Indeed, in this case, the integral
$\int_{\varepsilon < x \leq 1} x \nu(dx)$ remains bounded as
$\varepsilon \to 0$, which means that $\hat \mu(t,z)$ remains bounded
as well. On the other hand, $\frac{D(t,z_j)}{\Delta z^2}$ is of order of $(\kappa \Delta
z)^{-\alpha}$, which means that, eventually $\frac{D(t,z_j)}{\Delta
  z^2}$ becomes bigger than $\frac{\hat \mu(t,z)}{\Delta z}$ for all
$\kappa \geq 1$ (since under our assumptions $\alpha>1$). 
\end{remark}

\begin{remark}[Implicit-explicit scheme]\label{impexp.rem}
In practice, it may be preferable to use an implicit scheme for  the
convection diffusion part and an explicit one for the integral part:
\begin{align}
&\frac{a^{n+1}_j - a^{n}_j}{\Delta t} - \left(\upsilon(t_n,z_j)+\chi(t_n,z_j) \right)a^{n}_j + \chi(t_n,z_j)
a^{n}_{j+1} + \upsilon(t_n,z_j) a^{n}_{j-1} + \sum_{\kappa <|l| \leq
  I} \omega_l(t_n,z_j)(a^{n+1}_{j+l} - a^{n+1}_j)\notag\\
&+\inf_{\pi \in[-\bar \Pi, \bar \Pi]} \Big\{ \pi^2 \Big(\sum_{\kappa < |l| \leq I} \omega_l(t_n,z_j) 
(e^{l\Delta z} -1)^2 a^{n+1}_{j+l}+\chi(t_n,z_j) (e^{\Delta
  z} -1)^2 a^{n+1}_{j+1} +\upsilon(t_n,z_j) (e^{-\Delta
  z} -1)^2 a^{n+1}_{j-1} \Big)\notag\\ &+ 2\pi \Big(\sum_{\kappa < |l| \leq I}
\omega_l(t_n,z_j) (e^{l\Delta z}-1)a^{n+1}_{j+l}+\chi(t_n,z_j) (e^{\Delta
  z} -1) a^{n+1}_{j+1} +\upsilon(t_n,z_j) (e^{-\Delta
  z} -1) a^{n+1}_{j-1} \Big) \Big\}=0.\label{schemeaimp.eq}
\end{align}
Assume that $\Delta t$ is small enough so that for all $(t,z)$,
\begin{align}
 \sum_{\kappa<| i| \leq I} \omega_i(t,z)
\leq \frac{1}{\Delta t}.\label{cfl_imp}
\end{align}
Taking $\pi=0$ in the expression to be minimized, we get
$$\frac{a^{n+1}_j - a^{n}_j}{\Delta t} - \left(\upsilon(t_n,z_j)+\chi(t_n,z_j) \right)a^{n}_j + \chi(t_n,z_j)
a^{n}_{j+1} + \upsilon(t_n,z_j) a^{n}_{j-1} + \sum_{\kappa <|l| \leq
  I} \omega_l(t_n,z_j)(a^{n+1}_{j+l} - a^{n+1}_j)\geq 0,
$$
or equivalently
$$
a^n_j \leq a^{n+1}_j \frac{1 - \Delta t \sum_{\kappa <|l| \leq
  I} \omega_l(t_n,z_j)}{1+\Delta t(\chi + \upsilon)} + 
\frac{\Delta t \chi a^n_{j+1}}{1+\Delta t (\chi + \upsilon)} + 
\frac{\Delta t \upsilon a^n_{j-1}}{1+\Delta t (\chi + \upsilon)} + \frac{\Delta t \sum_{\kappa <|l| \leq
  I} \omega_l(t_n,z_j)a^{n+1}_{j+l}}{1+\Delta t(\chi + \upsilon)}
$$
Under \eqref{cfl_imp} this implies
$$
a^n_j \leq \frac{\|a^{n+1}\|_\infty \vee \|q^a\|_\infty}{1+\Delta t(\chi + \upsilon)} + 
\frac{\Delta t \chi a^n_{j+1}}{1+\Delta t (\chi + \upsilon)} + 
\frac{\Delta t \upsilon a^n_{j-1}}{1+\Delta t (\chi + \upsilon)},
$$
which means that %as long as $a^n_j$ are non-negative for all $j$,
$$
a^n_j \leq \|a^{n+1}\|_\infty \vee \|q^a\|_\infty,\quad \text{all $j$.}
$$
This means that the implicit-explicit scheme is $L^\infty$-stable if
the boundary condition is non-negative, and the non-negativity of the
solution is imposed at each step of the scheme. 

From the above argument we see that \eqref{cfl_imp} plays the role of
the CFL condition for the implicit-explicit scheme. An a priori upper
bound for $\Delta t$ in terms of model parameters is therefore given
by the bound on the second term in \eqref{3term}. We conclude that the
stability of the implicit-explicit scheme is guaranteed by the
condition 
\begin{align*}
&\Delta t \leq \frac{\Delta z^{\alpha}}{C_1 + C_2 \Delta z^{\alpha-1}}\quad \text{with}\\
&C_1 = \frac{2(\kappa+2)^2(\kappa+\frac{1}{2})M_g}{(\kappa+1)^2 (\alpha-1)}
\left(\frac{\kappa+1/2}{1+m_2 y_0}\right)^{-\alpha},\notag\\
&C_2 =  2\frac{(\kappa+2)^2}{(\kappa+1)^2}\int_{|y|> y_0}
\sup_{t,z}|\gamma(t,z,y)| \nu(dy).\notag
\end{align*}
When $\Delta z$ is sufficiently small, to satisfy this condition, it is enough to take $\Delta t
\leq \frac{\Delta z^{\alpha}}{C_1 + \varepsilon}$ for some small
value $\varepsilon>0$. The stability condition for the
implicit-explicit scheme is therefore similar to the one for the fully
explicit scheme, but with a smaller constant $C_1$. 
\end{remark}
\paragraph{Computing the optimal strategy} The optimal hedging (pure
investment) strategy may be computed as the value of the maximizer in
\eqref{dpa.eq}:
with 
$$
\pi^{*n}_j \equiv\pi^{*n}(z_j)= -\bar \Pi\vee \left(-\frac{\mathbb
  E\left[\left(e^{\widehat
        Z^{t_n,z_j}_{t_{n+1}}-z_j}-1\right)a_{n+1}(\widehat Z^{t_n,z_j}_{t_{n+1}})\right]}{\mathbb
  E\left[\left(e^{\widehat
        Z^{t_n,z_j}_{t_{n+1}}-z_j}-1\right)^2a_{n+1}(\widehat Z^{t_n,z_j}_{t_{n+1}})\right]}\right)\wedge
\bar \Pi,
$$
The error of evaluating the hedging strategy may be greater than that
of the evaluating the value function. The following corollary gives an
estimate of this error. 
\begin{corollary}\label{hedging.cor}
Under the assumptions of Theorem \ref{maina.thm}, 
\begin{align}
|\pi^{*n}(z_j) - \pi(t_{n-1},z_j)|&\leq \frac{C}{\sqrt{\Delta t}}\Bigg\{\frac{(1+|z_j|)}{(N - I)\Delta z - \bar
  \mu T} + 
\Delta t + \Delta 
z^{3-\alpha}\kappa^{3-\alpha} + \Delta z^{3-\alpha}\kappa^{1-\alpha} \\ &+  \int_{|y|\geq \frac{I
    \Delta z}{\|\gamma'_y\|_\infty}} (1+|y| + \tau(y) +
\tau^2(y))\nu(dy)\Bigg\}.\label{defpin}
\end{align}
\end{corollary}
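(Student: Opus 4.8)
The plan is to recognise the discrete maximiser $\pi^{*n}(z_j)$ as a ratio of the \emph{discrete} operators $\widehat{\mathcal Q}$ and $\widehat{\mathcal G}$ (the approximations of $\genop[Q]{a}$ and $\genop[G]{a}$ displayed after \eqref{schemea.eq}) applied to $a_{n+1}$, and to compare it termwise with $\pi^*=-\genop[Q]{a}/\genop[G]{a}$ from \eqref{pi_opt}. Writing the two expectations defining $\pi^{*n}(z_j)$ with the transition probabilities $p_l$ and using $p_l=\omega_l\Delta t$ together with $p_{\pm1}=\chi\Delta t,\upsilon\Delta t$, a common factor $\Delta t$ appears in numerator and denominator and cancels, so that, denoting by $P$ the projection onto $[-\bar\Pi,\bar\Pi]$,
\[
\pi^{*n}(z_j)=P\!\left(-\frac{\widehat{\mathcal Q}a_{n+1}(z_j)}{\widehat{\mathcal G}a_{n+1}(z_j)}\right),\qquad \pi^*(t,z)=P\!\left(-\frac{\genop[Q]{a}}{\genop[G]{a}}\right),
\]
the second identity holding because $\bar\Pi$ in \eqref{bound_control_HJB} is chosen so that the unconstrained optimiser already lies in $[-\bar\Pi,\bar\Pi]$. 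Since $P$ is $1$-Lipschitz, it suffices to bound the difference of the two unconstrained quotients.

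First I would apply the elementary bound $|A/B-A'/B'|\le|A-A'|/|B|+|A'|\,|B-B'|/(|B|\,|B'|)$. Both denominators are bounded below by a strictly positive constant independent of the discretisation parameters: $\genop[G]{a}\ge e^{-CT}\int_\R(e^\gamma-1)^2\nu(dy)>0$ by Theorem \ref{theo_a_lip} and \textbf{[ND]}, and $\widehat{\mathcal G}a_{n+1}$ inherits the same lower bound since, as the next steps show, $|\widehat{\mathcal G}a_{n+1}-\genop[G]{a}|$ is $O(1)$ times a quantity that is small in the regime considered. The problem is thereby reduced to estimating $|\widehat{\mathcal Q}a_{n+1}-\genop[Q]{a}|$ and $|\widehat{\mathcal G}a_{n+1}-\genop[G]{a}|$.

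For each operator I would split the error, via the triangle inequality, into a \emph{consistency} part $|\widehat{\mathcal Q}[a(t_{n+1},\cdot)]-\genop[Q]{a}|$ and a \emph{stability} part $|\widehat{\mathcal Q}[a_{n+1}]-\widehat{\mathcal Q}[a(t_{n+1},\cdot)]|$ (and likewise for $\widehat{\mathcal G}$), all evaluated at $z_j$. The consistency parts act on the smooth function $a$ (assumption 5 of Theorem \ref{maina.thm}), so the quadrature and truncation errors are estimated exactly as in the proof of Theorem \ref{maina.thm} and are bounded by its bracket \emph{without} amplification, because on a smooth argument $(e^\gamma-1)(\varphi(z+\gamma)-\varphi(z))\sim\gamma^2\varphi'$ is integrable at the origin. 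The stability parts are linear in the nodal error and give
\[
\big|\widehat{\mathcal Q}[a_{n+1}]-\widehat{\mathcal Q}[a(t_{n+1},\cdot)]\big|\le\Big(\sum_{l}(p_l/\Delta t)\,|e^{l\Delta z}-1|\Big)\sup_{|l|\le I}\big|a_{n+1}(z_{j+l})-a(t_{n+1},z_{j+l})\big|,
\]
the supremum being controlled by the bracket of Theorem \ref{maina.thm} (the neighbours $z_{j+l}$, $|l|\le I$, only change $|z_j|$ into $|z_j|+O(I\Delta z)$, which is harmless under $N\Delta z>I\Delta z+\bar\mu T$). Here we can only use the sup norm of the nodal error, not its increments, which is precisely why the weights enter through $|e^{l\Delta z}-1|$ rather than $(e^{l\Delta z}-1)^2$.

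The decisive point — and the main obstacle — is the size of these amplification factors. For $\widehat{\mathcal G}$ the weight sum $\sum_l(p_l/\Delta t)(e^{l\Delta z}-1)^2\approx\int_\R(e^\gamma-1)^2\nu(dy)$ stays bounded, so $\widehat{\mathcal G}$ produces no blow-up. For $\widehat{\mathcal Q}$, using $|e^{l\Delta z}-1|\sim|l|\Delta z$ and the computations already carried out in \eqref{3term}, one gets $\sum_l(p_l/\Delta t)|e^{l\Delta z}-1|\lesssim(\kappa\Delta z)^{1-\alpha}$, which diverges as $\Delta z\to0$ exactly because $\alpha>1$. To turn this into the claimed $\Delta t^{-1/2}$, I would invoke the CFL condition \eqref{cfl_exp} (equivalently \eqref{dtbound}), which forces $\Delta t\lesssim\Delta z^{\alpha}$, hence $\Delta z\gtrsim\Delta t^{1/\alpha}$ and $(\kappa\Delta z)^{1-\alpha}\lesssim\Delta t^{(1-\alpha)/\alpha}$; since $\alpha<2$ we have $(1-\alpha)/\alpha\ge-\tfrac12$, so for $\Delta t<1$ this is at most $\Delta t^{-1/2}$. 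This single mechanism is what produces the $C/\sqrt{\Delta t}$ prefactor. Finally the gap between $\pi^*(t_n,z_j)$ and the comparison point $\pi^*(t_{n-1},z_j)$ costs $O(\Delta t)$ by Lipschitz continuity of $a$ (and hence of $\genop[Q]{a},\genop[G]{a}$) in time, which is already present in the bracket. Collecting the consistency error, the $O(1)$ stability error from $\widehat{\mathcal G}$, the $\Delta t^{-1/2}$ stability error from $\widehat{\mathcal Q}$, and the time shift, and using $\Delta t^{-1/2}\ge1$, gives \eqref{defpin}.
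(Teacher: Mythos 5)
Your argument is correct in substance and reaches the right bound, but the mechanism that produces the $1/\sqrt{\Delta t}$ prefactor is genuinely different from the paper's. The paper never passes to the discrete operators and their $\ell^1$ weight sums: it compares the numerator $\mathbb E\big[(e^{\widehat Z_{t_{n+1}}-z_j}-1)\,a_{n+1}(\widehat Z_{t_{n+1}})\big]$ directly with $\Delta t\,\mathcal Q_t a(t_n,z_j)$, splitting off a nodal-error term, a time-shift term of order $\Delta t^2$, and a consistency term of order $\Delta t\,\overline{\mathcal E}$ (Lemma \ref{cons.lm}); the decisive step is Cauchy--Schwarz on the nodal-error term,
\begin{equation*}
\Big|\mathbb E\big[(e^{\widehat Z_{t_{n+1}}-z_j}-1)\,(a_{n+1}-a(t_{n+1},\cdot))(\widehat Z_{t_{n+1}})\big]\Big|
\;\leq\; \big\|a_{n+1}-a(t_{n+1},\cdot)\big\|_\infty\,
\mathbb E\big[(e^{\widehat Z_{t_{n+1}}-z_j}-1)^2\big]^{\frac12}
\;\leq\; C\,\overline{\mathcal E}\,\sqrt{\Delta t},
\end{equation*}
because the second moment of the one-step increment is $O(\Delta t)$; dividing by the denominator, which is bounded below by $c\,\Delta t$ (Theorem \ref{theo_a_lip} and Assumption \textbf{[ND]}), yields $\overline{\mathcal E}/\sqrt{\Delta t}$ with no appeal to the CFL condition. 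In particular, your parenthetical claim that one ``can only use the sup norm of the nodal error, \ldots which is precisely why the weights enter through $|e^{l\Delta z}-1|$'' is not accurate: Cauchy--Schwarz lets the weights enter quadratically, through $\big(\sum_l p_l (e^{l\Delta z}-1)^2\big)^{1/2}\leq C\sqrt{\Delta t}$, while the sup norm absorbs the error. Your route --- $\ell^1$ amplification of order $(\kappa\Delta z)^{1-\alpha}$, then the CFL relation $\Delta t\lesssim \Delta z^{\alpha}$ together with $\alpha<2$ --- arrives at the same $\Delta t^{-1/2}$, and it has the merit of making explicit that the loss comes from the infinite-variation divergence of $\int |y|\nu(dy)$ at the origin, capped exactly by \eqref{cfl_exp}, with $\alpha=2$ as the borderline; but it is conditional on the CFL coupling, and it needs two small repairs. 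First, the local part contributes $D/\Delta z \sim \kappa\,(\kappa\Delta z)^{1-\alpha}$, i.e.\ an extra factor $\kappa$ beyond your stated amplification; this is absorbed only because $\kappa\Delta z$ stays bounded (as assumed throughout the paper's proofs), under which CFL gives amplification $\times\sqrt{\Delta t}\lesssim (\kappa\Delta z)^{(2-\alpha)/2}$, so the rate is attained with no room to spare. Second, the positive lower bound on the discrete denominator must be secured before the quotient inequality is applied; you note this, and it does follow from your $O(1)$ stability estimate for $\widehat{\mathcal G}$, which is the same observation the paper makes when it writes the denominator as $\mathcal G_t a - C\overline{\mathcal E}$. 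With these points patched, your proof is a valid, more ``finite-difference-flavored'' alternative to the paper's probabilistic one.
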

The proof of this Corollary can be found in Appendix A. 

\begin{remark}
For the error of approximating the optimal strategy to tend to zero,
the space discretization step $\Delta z$ must therefore be
sufficiently small compared to $\Delta t$. At the same time the CFL
condition imposes a lower bound on $\Delta z$, which may not tend to
zero faster than $\Delta t^{1/\alpha}$. Letting $\Delta z \sim \Delta
t^{1/\alpha}$, we get a convergence rate of $\Delta
t^{\frac{1}{2}\wedge (\frac{3}{\alpha} - \frac{3}{2})}$ for the
optimal strategy. 
\end{remark} 
\subsection{A finite-difference scheme for the function $b$}\label{sub:scheme_b}

The function $\hat b$ solution appearing in the representation
\eqref{disc_quad.eq} for the discretized control problem is given by
$$
\hat b(t_n,z_j) = -2\mathbb E\left[\prod_{i=n+1}^{\beta^{t_n,z_j} \wedge N_T}\left(1+
\pi^*_i (e^{\widehat Z^{t_n,z_j}_{t_i}-\widehat Z^{t_n,z_j}_{t_{i-1}}}
-1 )\right)f(\widehat Z^{t_n,z_j}_{\beta^{t_n,z_j}\wedge N_T})\right],
$$
where $\pi^*$ is the optimal strategy for \eqref{ahat.eq}. 

For greater generality we introduce an arbitrary boundary / terminal
condition and define the function $b^n$, approximating the function
$b$, by
\begin{align}
b^n(z_j) = \mathbb E\left[\prod_{i=n+1}^{\beta^{t_n,z_j} \wedge N_T}\left(1+
\pi^{*n}(\widehat Z^{t_n,z_j}_{t_i}) (e^{\widehat
  Z^{t_n,z_j}_{t_i}-\widehat Z^{t_n,z_j}_{t_{i-1}}} -1
)\right)q^b(\Delta t(\beta^{t_n,z_j}\wedge
N_T),\widehat Z^{t_n,z_j}_{\beta^{t_n,z_j} \wedge N_T})\right],\label{discb.eq}
\end{align}
where the function $q^b$ is measurable bounded and satisfies
$q^b(T,z) = -2f(z)$. 

The finite-difference approximation for the function $b$ is therefore
given by the
solution of the following linear dynamic programming problem:
%\begin{itemize}
%\item If the function $a$ is known without error (since this function
%  is computed only once, one can perform a computation with a very
%  high precision), we take, for $j\in(-N,N)$,
%$$
%\tilde b^n(z_j) = \mathbb E[\tilde b_{n+1}(Z^{t_n,z_j}_{t_{n+1}})] - \frac{\mathbb E\left[\left(e^{Z^{t_n,z_j}_{t_{n+1}}-z_j}-1\right)a(t_{n+1},Z^{t_n,z_j}_{t_{n+1}})\right]}{\mathbb E\left[\left(e^{Z^{t_n,z_j}_{t_{n+1}}-z_j}-1\right)^2a(t_{n+1},Z^{t_n,z_j}_{t_{n+1}})\right]}\mathbb E\left[\left(e^{Z^{t_n,z_j}_{t_{n+1}}-z_j}-1\right)\tilde b_{n+1}(Z^{t_n,z_j}_{t_{n+1}})\right]
%$$
%\item If the function $a$ results from a computation with the same
%  truncation / discretization parameters, we use,
$$
b^n(z_j) = \mathbb E\left[\left(1+\pi^{*n}(z_j)(e^{\widehat
      Z^{t_n,z_j}_{t_{n+1}}-z_j}-1)\right)b_{n+1}(\widehat Z^{t_n,z_j}_{t_{n+1}})\right]
$$
%\end{itemize}
 for $j\in (-N,N)$ and $b^n(z_j) =
q^b(t_n,z_j)$ for $j\notin(-N,N)$. In other words, 
\begin{align}
b^n_j = \sum_{-I\leq l \leq I} p_l(t_n,z_j)(1 +
\pi^{*n}_j (e^{l\Delta z}-1)) b^{n+1}_{j+l}.\label{expbprob}
\end{align}

The terminal condition is given by $b^{N_T}_j = -2f(z_j)$ and in the
numerical examples we take $q^b(t,z) = -2f(z_j)$ as well. 

Equation \eqref{expbprob} defines a fully explicit finite difference
scheme for computing the function $b$, which can be rewritten as
\begin{align}
&\frac{b^{n+1}_j - b^{n}_j}{\Delta t} -
(\upsilon(t_n,z_j)+\chi(t_n,z_j))b^{n+1}_j +
\chi(t_n,z_j)b^{n+1}_{j+1} + \upsilon(t_n,z_j)b^{n+1}_{j-1} \notag\\ &+
\sum_{\kappa < |l| \leq I} \omega_l(t_n,z_j)(b^{n+1}_{j+l}-b^{n+1}_j) \notag\\
&+\pi^{*n}_j \left(\sum_{\kappa < |l| \leq I} \omega_l(t_n,z_j) (e^{l\Delta
  z}-1) b^{n+1}_{j+l} + \chi(t_n,z_j) (e^{\Delta z}-1) b^{n+1}_{j+1}
+ \upsilon(t_n,z_j)(e^{-\Delta z}-1) b^{n+1}_{j-1}\right)=0. \label{schemeb.eq}
\end{align}
This scheme uses the same approximations for the operators appearing in
\eqref{PIDE_b} as the scheme for the function $a$ defined in section
\ref{sub:scheme_a}. An implicit-explicit scheme for the function $b$
can be defined along the lines of Remark \ref{impexp.rem}. 

\paragraph{Stability analysis}
The numerical scheme for $b$ is $L^\infty$-stable under the condition
\eqref{cfl_exp}. Indeed, under this condition, and using the
Cauchy-Schwarz inequality, it follows from the representation
\eqref{discb.eq} that for every $n$,
\begin{align*}
&\|b^n\|_\infty \leq \|q^b\|_\infty \mathbb E\left[\prod_{i=n+1}^{\beta^{t_n,z_j} \wedge N_T}\left(1+
\pi^{*n}(\widehat Z^{t_n,z_j}_{t_i}) (e^{\widehat
  Z^{t_n,z_j}_{t_i}-\widehat Z^{t_n,z_j}_{t_{i-1}}} -1
)\right)^2\right]^{\frac{1}{2}}\\
&\leq  \frac{\|q^b\|_\infty}{(\min_{t,z}q^a(t,z))^{\frac{1}{2}}} \mathbb E\Big[\prod_{i=n+1}^{\beta^{t_n,z_j} \wedge N_T}\Big(1+
\pi^{*n}(\widehat Z^{t_n,z_j}_{t_i}) (e^{\widehat
  Z^{t_n,z_j}_{t_i}-\widehat Z^{t_n,z_j}_{t_{i-1}}} -1
)\Big)^2\\ &\qquad \qquad \times q^a(\Delta t(\beta^{t_n,z_j}\wedge
N_T),\widehat Z^{t_n,z_j}_{\beta^{t_n,z_j} \wedge
  N_T})\Big]^{\frac{1}{2}}\\
&=\|q^b\|_\infty
\left(\frac{a^n_j}{\min_{t,z}q^a(t,z)}\right)^{\frac{1}{2}}\leq \|q^b\|_\infty
\left(\frac{\|q^a\|_\infty}{\min_{t,z}q^a(t,z)}\right)^{\frac{1}{2}},
\end{align*}
by definition of $\pi^*$ and $a^n_j$. With the boundary condition $q^a\equiv 1$ for the
function $a$, one simply has $\|b^n\|_\infty \leq \|q^b\|_\infty$. 
\paragraph{Accuracy analysis}
Since the numerical scheme for the function $b$ uses the approximate
pure investment strategy $\pi^*$, the error of approximating $b$ is
determined by the error of approximating the optimal strategy. 
\begin{theorem}\label{mainb.thm}
Under the assumptions of Theorem \ref{maina.thm}, 
\begin{align*}
|b_n(z_j) - b(t_{n},z_j)|&\leq \frac{C}{\sqrt{\Delta t}}\Bigg\{\frac{(1+|z_j|)}{(N - I)\Delta z - \bar
  \mu T} + 
\Delta t + \Delta 
z^{3-\alpha}\kappa^{3-\alpha} + \Delta z^{3-\alpha}\kappa^{1-\alpha} \\ &+  \int_{|y|\geq \frac{I
    \Delta z}{\|\gamma'_y\|_\infty}} (1+|y| + \tau(y) +
\tau^2(y))\nu(dy)\Bigg\}.
\end{align*}
\end{theorem}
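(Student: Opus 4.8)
The plan is to exploit the fact that $b$ solves the \emph{linear} equation \eqref{PIDE_b}, so that the scheme \eqref{schemeb.eq} is nothing but a consistent, stable discretisation of \eqref{PIDE_b} in which the only genuinely new feature, compared with the scheme for $a$, is that the exact optimal proportion $\pi^*$ of \eqref{pi_opt} has been replaced by its computed value $\pi^{*n}$. Write $\mathcal E$ for the bracketed quantity appearing in Theorem \ref{maina.thm} (the sum of the domain-truncation, time-step, space/small-jump and big-jump terms). I would insert the exact smooth solution $B^n(z_j):=b(t_n,z_j)$ into the recursion \eqref{schemeb.eq} and write, for interior $j$, $B^n_j=\Phi^n_j(B^{n+1})+R^n_j$, where $\Phi^n_j$ is the one-step propagator (linear in its argument, carrying the proportion $\pi^{*n}_j$) and $R^n_j$ is the residual. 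Subtracting from $b^n_j=\Phi^n_j(b^{n+1})$ and using linearity gives $e^n_j=\Phi^n_j(e^{n+1})-R^n_j$ with $e^{N_T}\equiv 0$, whose Duhamel solution is
\[
e^n_j=-\sum_{m=n}^{N_T-1}\mathbb E\Big[\Pi^{n}_{m}\,R^m\big(\widehat Z^{t_n,z_j}_{t_m}\big)\Big],\qquad \Pi^{n}_{m}:=\prod_{i=n+1}^{m}\Big(1+\pi^{*n}\big(\widehat Z^{t_n,z_j}_{t_{i-1}}\big)\big(e^{\widehat Z_{t_i}-\widehat Z_{t_{i-1}}}-1\big)\Big).
\]

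The crucial step is the estimate of $R^n_j$, which splits into two parts. The first is the usual truncation error produced by the discretisations of $\mathcal A$, $\mathcal B$, $\mathcal Q$ and of the time derivative, all applied to the \emph{smooth} function $b$; by the regularity hypothesis (item~5 of Theorem \ref{maina.thm}) this part is controlled exactly as in the proof of Theorem \ref{maina.thm}, and being linear the analysis is in fact lighter than for $a$, so its accumulation over the $N_T\sim T/\Delta t$ layers contributes $C\mathcal E$. The second part is the strategy-mismatch term $\Delta t\,(\pi^*(t_n,z_j)-\pi^{*n}_j)\,\widehat{\mathcal Q}\,b(t_{n+1},\cdot)(z_j)$, where $\widehat{\mathcal Q}$ is the discrete operator multiplying $\pi$ in \eqref{schemeb.eq}. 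Here I would use that $\widehat{\mathcal Q}$ applied to the \emph{smooth} $b(t_{n+1},\cdot)$ is uniformly bounded (it is consistent with the bounded $\mathcal Q b$, the cancellation $\chi(e^{\Delta z}-1)+\upsilon(e^{-\Delta z}-1)=O(D)$ in the centred weights being available for $\kappa>c$), and that Corollary \ref{hedging.cor} gives $|\pi^{*n}_j-\pi^*(t_n,z_j)|\le C\,\mathcal E/\sqrt{\Delta t}$, the time shift $t_{n-1}\to t_n$ costing only the Lipschitz correction $O(\Delta t)$. Consequently $\|R^m\|_\infty\le \Delta t\,C_{\mathrm{cons}}+C\sqrt{\Delta t}\,\mathcal E$.

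To close the estimate I would combine the Duhamel formula with Cauchy–Schwarz and the $L^2$-boundedness of the multiplicative functional already exploited in the stability analysis, namely $\mathbb E[(\Pi^n_m)^2]^{1/2}\le C$ uniformly (by comparison with the pure-investment functional that bounds $a$). This yields
\[
|e^n_j|\le \sum_{m=n}^{N_T-1}\mathbb E[(\Pi^n_m)^2]^{1/2}\,\|R^m\|_\infty\le C\,N_T\big(\Delta t\,C_{\mathrm{cons}}+C\sqrt{\Delta t}\,\mathcal E\big)\le C\mathcal E+\frac{C}{\sqrt{\Delta t}}\mathcal E,
\]
and since the strategy term carries the factor $1/\sqrt{\Delta t}$ it dominates, giving the announced bound $|b_n(z_j)-b(t_n,z_j)|\le \frac{C}{\sqrt{\Delta t}}\mathcal E$.

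The delicate point — and the one I expect to be the main obstacle — is the control of the strategy-mismatch term, for two reasons. First, one must keep the singular discrete operator $\widehat{\mathcal Q}$ acting only on the smooth exact solution $b(t_{n+1},\cdot)$, never on the numerical solution whose discrete derivatives are a priori uncontrolled, so that its $O(1)$ bound comes from consistency and the cancellation of the $\Delta z^{1-\alpha}$-singular weights rather than from a crude operator norm, which would blow up. Second, one must use the stability of \eqref{schemeb.eq} in its correct $L^2$ form (through $\mathbb E[(\Pi^n_m)^2]\le C$) rather than through a sup-norm contraction $(1+C\Delta t)^{N_T}$, which is unavailable because $\mathbb E[\,|e^{\widehat Z_{t_i}-\widehat Z_{t_{i-1}}}-1|\,]$ is only $O(\Delta t\,(\kappa\Delta z)^{1-\alpha})$; it is precisely the interplay of the $\Delta t$ weight carried by each residual, the $1/\sqrt{\Delta t}$ from Corollary \ref{hedging.cor}, and the $N_T\sim T/\Delta t$ summation that produces the final $1/\sqrt{\Delta t}$ and no worse.
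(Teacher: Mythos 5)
Your proposal follows, in substance, exactly the route of the paper's proof: your Duhamel identity is the paper's telescoping decomposition of $\hat b^n(z_j)-b(t_n,z_j)$; the per-step residual is handled by the same consistency estimates (Lemma \ref{cons.lm}) combined with the equation \eqref{PIDE_b}, so that at leading order only the strategy-mismatch term $\Delta t\,(\pi^{*k}-\pi^*)\,\mathcal Q_t b$ survives; this mismatch is estimated by Corollary \ref{hedging.cor} (the source of the $1/\sqrt{\Delta t}$) together with the boundedness of $\mathcal Q_t b$ coming from the regularity of $b$; and the propagator products $\Pi^n_m$ are controlled in $L^2$ precisely as the paper does, using that $\pi^{*}$ minimizes the discrete pure-investment functional, so that $\mathbb E[(\Pi^n_m)^2]\leq \|q^a\|_\infty/\min q^a$ ($=1$ for $q^a\equiv 1$). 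Your closing remark on why a sup-norm contraction $(1+C\Delta t)^{N_T}$ is unavailable, and why the $L^2$ form of stability must be used, is also accurate and is implicit in the paper's use of Cauchy--Schwarz.

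There is, however, one step that fails as written: the recursion $e^n_j=\Phi^n_j(e^{n+1})-R^n_j$ with $e^{N_T}\equiv 0$, and the resulting formula $e^n_j=-\sum_m\mathbb E\bigl[\Pi^n_m R^m\bigr]$, ignore the spatial boundary. The one-step propagator $\Phi^n_j$ reads values at the points $z_{j+l}$ for $|l|\leq I$, and whenever $j+l\notin(-N,N)$ the scheme does not apply the recursion but imposes $b^{n+1}_{j+l}=q^b(t_{n+1},z_{j+l})$, so that $e^{n+1}_{j+l}=q^b(t_{n+1},z_{j+l})-b(t_{n+1},z_{j+l})$ is an $O(1)$ quantity not generated by any residual $R^m$; since every interior point communicates with this boundary layer after finitely many steps, your sum is missing these contributions, and nothing in your argument produces the domain-truncation term $(1+|z_j|)/((N-I)\Delta z-\bar\mu T)$ of the final bound (it enters your write-up only parasitically, through the $\mathcal E$ inside Corollary \ref{hedging.cor}). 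This is exactly why the paper introduces the stopping time $\hat\beta^{z,t_n}$ of \eqref{defbetahat}, splits $b^n(z_j)$ into the contributions before and after exit, and bounds the exit part by $\|q^b\|_\infty\,\mathbb P[\hat\beta^{t_n,z_j}<N_T]^{1/2}$ via Cauchy--Schwarz (again using the $L^2$ bound on the product, which here is $\leq 1$ because $\pi^*$ is the minimizer and $\pi=0$ is admissible), the exit probability being controlled by Doob's inequality in Lemma \ref{exit.lm}. Your proof becomes complete once the Duhamel sum is stopped at $\hat\beta^{t_n,z_j}\wedge N_T$ and this boundary term is added; the remaining estimates then go through unchanged and coincide with the paper's.
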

The proof of this result is provided in Appendix A. 

\section{Application to the electricity market}\label{sec: appl}
%In energy markets, the forward curve at date $t$ for one MW delivered  at date $T$ is often modeled with a HJM model as in \cite{clewlow}.
Many studies have shown that price spikes in electricity and gas
markets are incompatible with Gaussian dynamics
\citep{geman.roncoroni.06,mbt.08} and several models based on
L\'evy processes have been
developed to fit the observed fat tails \citep{deng.jiang.05,benth.al.07}. Most of them model the price under the martingale measure, or 
make some assumption on the change of probability resulting in a
similar model under the martingale measure
\citep{benth.al.07}. However, these martingale models are not adapted
for the evaluation of hedging strategies, since the hedging error
should be computed under the historical measure. 

In this paper, we propose a model which describes the deformation of
the forward curve directly under the historical probability, and satisfies Assumptions \ref{assumptions_1} so that we can apply Theorems \ref{HJB_v0}--\ref{HJB_v}. We start by introducing a \lev process $\hat L$ as follows
\begin{align}
\hat L_s=\zeta s + \int_0^s\int_\R y\tilde J(ds\times dy) \label{levy_jump}
\end{align}
where $\zeta\in\R$ and $\tilde J$ is a compensated Poisson random measure, whose \lev measure is denoted by $\nu(dy)$. Fix $c\in\R^+$, $l(s)=e^{-c s}$ and 
\begin{align}
A_t := \int_0^t e^{c s} d\hat L_s\label{spot_price_F}
\end{align}

%If $T\to \psi(0,T)$ denotes the price at time $0$ of a future contract
%with maturity $T$ and instantaneous delivery (which is supposed to be
%known), then 
We model the price at time $t$ of the future contract
with maturity $T$ and instantaneous delivery as a random perturbation
of the initial forward curve $\psi$ which is supposed to be known. Using the above notation we have
$$
\bar F_{0,T,t}=\psi(0,T) e^{l(T)A_t}
$$
By no arbitrage in the futures market, the price at time $t$ of a
future contract with duration $d$ is equal to the average over the
time period $[T,T+d]$ of the future contract prices with instantaneous
delivery. We therefore model the price at time $t$ of a future contract with delivery time $T$ and duration $d>0$ by  
$$
F_{d,T,t}= \frac{1}{d}\int_T^{T+d} \bar F_{0,s,t}ds = \frac{1}{d}\int_T^{T+d} \psi(0,s) e^{l(s)A_t}ds
$$
For reasons which will become clear in the sequel, we prefer the following notation:
\begin{align}
F_{d,T,t}:= \exp(\Phi(A_t))\quad \textrm{   where  }\quad \Phi(A):=\log\left(\frac{1}{d}\int_T^{T+d} \psi(0,s) e^{l(s)A} ds\right) \label{future}
\end{align}
The model \eqref{future} essentially states that the price of a future
contract $F_{d,T,t}$ is the average price on the interval $[T,T+d]$ of
the future contract with instantaneous delivery up to the random
perturbation $e^{l(s)A}$. In this context, the problem of hedging a European
option on $F_{d,T,t}$ with the quadratic hedging approach becomes
\begin{align}
&\mathbf{minimize}\quad \E\left[\left(  H(F_{d,T,t}) - x -\int_t^T \theta_{u-}dF_{d,T,u}\right)^2 \right] \quad\textrm{over $\theta$ and $x\in\R$} \label{elec_prob_1}
\end{align}
for a given map $H$. The process $F_{d,T,t}$ corresponds to $S$ in the
formulation \eqref{quadratic_error}. The following results proves that
$Z=\log(F)$ is a Markov jump process satisfying our assumptions.

\begin{lemma}
\label{lem_transf}
The process $Z_t:= \log(F_{d,T,t})$ verifies 
$$
dZ_t = \mu(t,Z_t)dt + \int \gamma(t,Z_{t-},y) \tilde J(dydt)
$$
where
\begin{align*}
\gamma(t,z,y):= & \Phi( \Phi^{-1}(z) + ye^{c t}) -z\\
\mu(t,z):= &\zeta e^{c t} \Phi'( \Phi^{-1}(z) ) + \int_{\R}\left(\gamma(t,z,y)-ye^{c t}\Phi'( \Phi^{-1}(z) ) \right)\nu(dy)\\
\end{align*}
Assume that the \lev measure $\nu(dy)$ is given by $\nu(dy)= g(y) |y|^{-(1+\alpha)}dy$, for some $\alpha\in(1,2)$ and a bounded, strictly positive and measurable $g$ such that the following conditions hold true:
\renewcommand{\theenumi}{\roman{enumi})}
\begin{enumerate}
\item There exists $m<\infty$ such that for all $y,y'\in (-y_0,0)\cup(0,y_0)$ with $yy'>0$, $|g(y)-g(y')|\leq m |y-y'|$ \\
\item $\displaystyle{\lim_{y\to 0^-}g(y)=g(0^-)\quad\text{and}\quad \lim_{y\to 0^+}g(y)=g(0^+)\quad \text{with $g(0^+),g(0^-)>0$}}$ \\
\item 
$\displaystyle{
\int_{y\leq -1}  y^4 \nu(dy) + \int_{1<y} e^{4y}\nu(dy) <+\infty}$
\end{enumerate}
Then the functions $\mu$ and $\gamma$ verify the Assumptions \ref{assumptions_1}-$\mathbf{[C,L,I,ND, RG_i, RG_{iii}]}$, where the function $\tau$ is given by
$$
 \tau(y)=\max \left( |y|, \left| e^y-1\right| \right)
$$
\end{lemma}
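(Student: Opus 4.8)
The plan is to first derive the stated SDE for $Z=\Phi(A)$ by an application of \itos formula, reading off $\gamma$ and $\mu$, and then to establish the analytic properties of $\Phi$ that make the six listed assumptions hold. For the SDE I would start from the fact that $A_t=\int_0^t e^{cs}d\hat L_s$ is a semimartingale with finite-variation drift $\zeta e^{cs}\,ds$ and purely discontinuous martingale part $\int_0^t\int_\R e^{cs}y\,\tilde J(ds\,dy)$, so a jump of $\hat L$ of size $y$ at time $s$ produces a jump $\Delta A_s=e^{cs}y$. Since $\alpha\in(1,2)$ forces $\int_{|y|\le 1}|y|\nu(dy)=\infty$, the delicate point is that \itos formula must be used in compensated form: for $f=\Phi\in C^2$,
\begin{align*}
f(A_t)=f(A_0)+\int_0^t f'(A_{s-})\,dA_s+\sum_{s\le t}\bigl[f(A_s)-f(A_{s-})-f'(A_{s-})\Delta A_s\bigr],
\end{align*}
with no $\tfrac12 f''$ term because $A$ has no continuous martingale part and the bracketed jumps are $O((e^{cs}y)^2)$, hence summable. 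Expanding $dA_s$ and rewriting the jump sum against $J=\tilde J+\nu\,ds$, the two martingale pieces recombine into $\int_\R[f(A_{s-}+e^{cs}y)-f(A_{s-})]\tilde J=\int_\R\gamma(s,Z_{s-},y)\tilde J$, using $A_{s-}=\Phi^{-1}(Z_{s-})$, giving exactly the stated $\gamma$; the surviving finite-variation terms assemble the drift, the compensator $-\,e^{cs}y\,\Phi'(\Phi^{-1}(z))$ being precisely the subtraction that makes $\int_\R(\gamma-ye^{ct}\Phi'(\Phi^{-1}(z)))\nu(dy)$ converge near zero, which is the announced $\mu$.

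For the second stage the key observation is that $\Phi'(A)=\frac{\int_T^{T+d}\psi(0,s)l(s)e^{l(s)A}ds}{\int_T^{T+d}\psi(0,s)e^{l(s)A}ds}$ is a weighted average of $l(s)=e^{-cs}$ over $s\in[T,T+d]$, so $0<m_1^\Phi:=\min_s l(s)\le\Phi'\le\max_s l(s)=:m_2^\Phi$ uniformly, and $\Phi''$ is the variance of $l(s)$ under the same weights, hence bounded (similarly for higher derivatives). Thus $\Phi$ is an increasing bijection of $\R$, $\Phi^{-1}$ is Lipschitz with $(\Phi^{-1})'=1/\Phi'\le 1/m_1^\Phi$, and $e^{ct}\in[1,e^{cT}]$ for $t\in[0,T]$. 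From $\gamma_y(t,z,y)=\Phi'(\Phi^{-1}(z)+ye^{ct})e^{ct}\ge m_1^\Phi>0$ and $\gamma_{yy}=\Phi''e^{2ct}$ bounded I would read off $\mathbf{[RG_i]}$ (with $y_0=+\infty$), and from $\partial_z\gamma_y=\Phi''(\Phi^{-1})'e^{ct}$ bounded I would get $\mathbf{[RG_{iii}]}$. Note $\gamma_y(t,z,0)=\Phi'(\Phi^{-1}(z))e^{ct}\ne1$ in general, which is exactly why $\mathbf{[RG_{ii}]}$ is absent from the list and the change of variable of Lemma~\ref{lem_gammaZ_good} would be invoked downstream.

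The heart of the proof, and the step I expect to be the main obstacle, is the sharp control of $\gamma$ and of the relative jump $e^{\gamma}-1$ that produces the clean form $\tau(y)=\max(|y|,|e^y-1|)$. Using $e^{\Phi(A)}=\frac1d\int\psi(0,s)e^{l(s)A}ds$, the post-jump ratio is $e^{\gamma(t,z,y)}=\frac{\int\psi(0,s)e^{l(s)A}e^{\lambda(s)y}ds}{\int\psi(0,s)e^{l(s)A}ds}$ with $\lambda(s):=l(s)e^{ct}=e^{c(t-s)}\in(0,1]$ for $c\ge0$ and $t\le T\le s$, so $e^{\gamma}$ is a weighted average of $e^{\lambda(s)y}$ with exponents in $(0,1]$. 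Since $\lambda\le 1$, for $y>0$ one has $1<e^{\gamma}\le e^{y}$ and for $y<0$ one has $e^{y}\le e^{\gamma}<1$, whence $\sup_{t,z}|e^{\gamma}-1|\le|e^{y}-1|$ in both cases; a mean value estimate together with $m_2^\Phi e^{ct}=e^{c(t-T)}\le 1$ (valid precisely because $c\ge0$) gives $\sup_{t,z}|\gamma|\le|y|$. For the $z$-modulus $\rho$ in $\mathbf{[C]}$-iii) I would bound $|\partial_z\gamma|=|\Phi'(\Phi^{-1}(z)+ye^{ct})/\Phi'(\Phi^{-1}(z))-1|$ in two ways, by $\|\Phi''\|_\infty e^{cT}|y|/m_1^\Phi$ for small $y$ and by $(m_2^\Phi-m_1^\Phi)/m_1^\Phi$ for all $y$, so that $\rho(y)=\min(|y|,1)\le\tau(y)$ may be taken and including $\rho$ leaves the maximum defining $\tau$ unchanged.

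These bounds then feed the remaining assumptions. The expansion showing $\gamma-ye^{ct}\Phi'(\Phi^{-1}(z))=O(y^2)$ near zero, together with the $O(|y|)$ tail behaviour controlled by condition iii), gives boundedness and Lipschitz continuity of $\mu$, hence $\mathbf{[C]}$; the strict lower bound $\inf_{t,z}|\gamma(t,z,y)|\ge m_1^\Phi|y|>0$ for $y\ne0$ (using $e^{ct}\ge1$) makes $e^{\gamma}-1$ bounded away from zero off $\{0\}$, yielding $\mathbf{[ND]}$; and $\tau(y)/|y|\to1$ near zero with conditions iii) ($\int_{y\le-1}y^4\nu<\infty$, $\int_{y>1}e^{4y}\nu<\infty$) gives $\mathbf{[I]}$, the $L^4$ tail requirement matching conditions iii) exactly, while $\mathbf{[L]}$ is immediate from the stated form of $\nu$ and $g$.
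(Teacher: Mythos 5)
Your proposal is correct and follows essentially the same route as the paper's proof: It\^o's formula applied to $Z=\Phi(A)$ to read off $\mu$ and $\gamma$, the uniform bounds $e^{-c(T+d)}\le\Phi'\le e^{-cT}$ with $\Phi''$ bounded, the monotonicity/averaging argument giving $\sup_{t,z}|e^{\gamma(t,z,y)}-1|\le|e^y-1|$, the $\partial_z\gamma$ estimate supplying $\rho$, and the same verifications of $\mathbf{[ND]}$ and $\mathbf{[I]}$ against condition iii). The differences are cosmetic (your weighted-average/variance reading of $\Phi'$ and $\Phi''$, and your sharper $\rho=\min(|y|,1)$ which yields the clean $\tau(y)=\max(|y|,|e^y-1|)$, where the paper carries a harmless extra factor $e^{cd}$); the paper's proof additionally checks the bounded third $y$-derivative of $\gamma$ needed for Theorem \ref{maina.thm}, but that is outside the lemma as stated.
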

The proof of this result is given in Appendix B.

In order to apply our results (Theorems \ref{HJB_v0} and \ref{HJB_v}) we also need to verify Assumption \ref{assumptions_1}-$\mathbf{[RG_{ii}]}$ and it is easy to see that the function $\gamma$ does not verify it: however, as we have already said in Section \ref{model}, this can be avoided by making a change of variable $L_t=\phi(t,Z_t)$. We refer to Chapter 7 and 8 in \cite{defr.12} for further details.\\

In terms of the process $Z$, the problem \eqref{elec_prob_1} becomes
\begin{align}
v_f(t,z,x)=\inf_{\theta\in\At}\E\left[\left(  f(Z_T^{t,z}) - x -\int_t^T \theta_{u-}d \exp(Z_u^{t,z})\right)^2 \right] \label{elec_prob_2}
\end{align}
where $\At$ is defined in \eqref{At} and $f(z)=H(e^z)$.

We now describe a special class of pay-offs which is of interest in
problem \eqref{elec_prob_2}. Let  $p(x):=(K-x)^+$ and define
$$
h(A):= \frac{1}{d'}\int_T^{T+d'}\psi(0,s) e^{l(s) A}ds
$$
for $d'\neq d$. From \eqref{future} it follows that $h\circ \Phi^{-1}(Z_t)= F_{d',T,t}$, and then, by defining $f:=p( h\circ \Phi^{-1})$, we obtain $f(Z_t)=(K-F_{d',T,t})^+$, which is a put option written on a future contract with different duration $d'$. Using this specific option  we can rewrite problem \eqref{elec_prob_2} as follows
\begin{align*}
\inf_{\theta\in\At}\E^{t,z,x}\left[\left( (K-F_{d',T,t})^+ - x -\int_t^T \theta_{u-}dF_{d,T,u}\right)^2 \right]
\end{align*}
 The financial meaning of the above problem is particularly
 interesting: one tries to hedge (in the quadratic sense) a put option
 written on a future contract with duration $d'\neq d$ using as
 hedging instrument the future contract with duration $d$. This may be
 useful when, for example, one sells a future contract with a
 non-standardized duration in the OTC market and hedges the resulting
 position using instruments which are liquidly traded. 

\subsection{Numerical Example: the CGMY model} \label{subsec: CGMY}
In this section we study the problem  \eqref{elec_prob_2} when $L$ in \eqref{levy_jump} is a CGMY process (\cite{carr_geman_madan_yor})
\bigskip
 We can write then
$$
L_t= \left(\mu  + C \Gamma(1-Y) (M^{Y-1}- G^{Y-1}) \right)t + \int_0^t \int_\R y \tilde J(dyds)
$$
where  $\tilde J$ is a compensated Poisson random measure with
intensity 
$$\nu(dy) = \nu(y) dy,\quad \nu(y)= C \frac{ e^{-My}}{y^{1+Y}} 1_{y>0} + C \frac{e^{Gy}}{|y|^{1+Y}} 1_{y<0} ,
$$ 
The goal of this paragraph is to solve numerically the equations
 \eqref{QL_PIDE_a} and \eqref{PIDE_b} when the model for $Z$ is given in
 Lemma \ref{lem_transf} and the source of randomness $L$ in
 \eqref{levy_jump} is given by the CGMY process introduced above. We
 first apply the implicit-explicit variant \eqref{schemeaimp.eq} of the
 scheme \eqref{schemea.eq} for the function $a$ with maturity $T =
 7$. The coefficients $\mu$ and $\gamma$ given in Lemma
 \ref{lem_transf}:
\begin{align*}
\mu(t,z) = & \Phi^{'}( \Phi^{-1}(z))  \left(\mu  + C \Gamma(1-Y) (M^{Y-1}- G^{Y-1})\right) e^{c t} \\
+& \int_{\R} \left[\Phi\left( \Phi^{-1}(z)+ ye^{ct}\right) - z -  ye^{c t} \Phi^{'}( \Phi^{-1}(z))\right] \nu(dy) \\
\gamma(t,z,y)  = & \Phi\left( \Phi^{-1}(z)+ ye^{c t}\right) - z, 
\end{align*} and the function $D(t,z)$ introduced in \eqref{Dvalue}, are computed
numerically. To this end, we compute the integration points $y_i$, as in Section \ref{subsection::IntegrodifferentialScheme}, such that $\gamma(t,z,y_i(t,z))=i\Delta z$, or equivalently
$$
y_i(t,z):=e^{-c t} \left(\Phi^{-1}(z+i\Delta z)-\Phi^{-1}(z) \right)
$$
We choose $k = 0$ in the definition of $\Omega_0(t,z)$, and by expanding $\gamma$ around zero, we obtain
\begin{align*}
D(t,z):=& \int_{y_{-k-1/2}(t,z)}^{y_{k+1/2}(t,z)}\gamma(t,z,y)^2 \nu(dy) \simeq e^{2c t}(\Phi^{'}(\Phi^{-1}(z)))^2 \int_{y_{-k-1/2}(t,z)}^{y_{k+1/2}(t,z)} y^2 \nu(dy) \\
\simeq & e^{c t} \frac{C}{2-Y} \left((\Phi^{-1}(z+ \frac{1}{2}\Delta z)- \Phi^{-1}(z))^{2-Y} + (\Phi^{-1}(z) - \Phi^{-1}(z- \frac{1}{2}\Delta z))^{2-Y} \right)
\end{align*}
Using an approach similar to the one of Section \ref{subsection::IntegrodifferentialScheme}, we approximate $$\int_{\R} \left[\Phi\left( \Phi^{-1}(z)+ ye^{c t}\right) - z - ye^{c t} \Phi^{'}( \Phi^{-1}(z))\right] \nu(dy)$$ with
\begin{align*}
& \frac{1}{2} \Delta z (\Phi^{'}( \Phi^{-1}(z)) e^{c t})^2 \frac{\delta}{\pi}
 +\sum_{y_i,\, |i|>k} \hat\omega_i(t,z)\left[\Phi\left( \Phi^{-1}(z)+ y_ie^{c t}\right) - z -  y_ie^{c t} \Phi^{'}( \Phi^{-1}(z))\right]  \\
=
 & \frac{1}{2} \Delta z (\Phi^{'}( \Phi^{-1}(z)) e^{c t})^2 \frac{\delta}{\pi}
+\sum_{y_i,\, |i|>k} \hat\omega_i(t,z)\left[i\Delta z - \left(\Phi^{-1}(z+i\Delta z)-\Phi^{-1}(z) \right) \Phi^{'}( \Phi^{-1}(z))\right]
\end{align*}
where the weights $\hat\omega_i(t,z)$ are given in \eqref{hat_om}. The
effect of approximating the coefficient of the PIDE on the solution of
the original problem, appearing for example in the pricing of European
options, has been studied in \cite{jacobsen1}.

We solve the problem \eqref{elec_prob_2} for European options $f$, with maturity one week and delivery for the 7 days of the week. We recall that the future contract in this case is given by
\begin{align}
F_{7days,1week,t}=\frac{1}{7}\int_{7}^{14}\psi(0,s) e^{l(s)A_t} ds\label{future_7}
\end{align}
and $A_t$ is given in \eqref{spot_price_F} with $\hat L$ being the
CGMY process defined above. The initial forward curve for the seven days of delivery is given in Table \ref{tab: forw_curve}, indicating that prices are lower for the week end.
\begin{table}[!h]
\centering
\begin{tabular}{|l|l|c|}
\hline
Day  & $s$ & Price  $(\psi(0,s))$ \\
\hline
Monday &$s\in[7,8)$ &80 \\
Tuesday & $s\in[8,9)$&90 \\
Wednesday &$s\in[9,10)$ &70 \\
Thursday &$s\in[10,11)$&90 \\
Friday & $s\in[11,12)$&80 \\
Saturday & $s\in[12,13)$&70 \\
Sunday &$s\in[13,14]$ &60 \\
\hline
\end{tabular}  
\caption{The forward curve. Prices are given in Eur.}
\label{tab: forw_curve}
\end{table}
This case with continuous long delivery
corresponds to a non stationary process where the hedge cannot be
calculated efficiently as in \cite{huba.kall.kraw.06} or
\cite{goutte.oudjane.russo.11}. 

We use the scheme \eqref{schemeaimp.eq} and the corresponding
implicit-explicit version of the scheme \eqref{schemeb.eq} to obtain a numerical
approximation of the functions $a$ and $b$. % and the optimal control
%$\pi^*$. 
%If not mentioned, in all numerical experiments in this section we suppose that
%$N=800$, $M=200$, take the mesh step  equal to $\Delta z = 10^{-2}$ and the number of time steps equal to $800$.
The trend is $\mu=0.01 $, and the $C$, $G$, $M$ parameters of the CGMY model are
$C=0.01$, $G=1.1$, and $M=1.1$. The mean reverting coefficient $c$ is equal to $0.1$.
In all experiments, the resolution domain is  $[-10,10]$ and the
integration domain for the Levy density is $[-2,2]$ so that $I=N/5$
and we take $\kappa=1$.

Table \ref{tab: CGMY_a_b_convergence} gives for an at-the-money call
option  and for a number of time steps equal to $800$ the calculated
values  of the $a$ and $b$  with $Y=1.2$, $Y=1.9$ and $Y=1.98$, depending on
$N$ (the parameter $Y$ of the CGMY model corresponds to the $\alpha$
of our main assumptions and must belong to the interval $(1,2)$). We also compute the space discretization error, taking the
approximate value with $N = 3200$ as the reference value. The orders
of convergence $k_a$ and $k_b$ for $a$ and $b$ are computed where possible using the
formula $k_N = \frac{log(e_{N/2}/e_{N })}{log(2)}$, where $e_{N/2}$ and
$e_N$ are the error values for $N/2$ and $N$ discretization steps
respectively. 

We find that for $Y=1.9$ and $Y=1.98$ the order of
convergence as $N\to \infty$ for the functions $a$ and $b$ is somewhat better
than the one predicted by our theoretical result (theorems
\ref{maina.thm} and \ref{mainb.thm}). For $Y=1.2$ the convergence of
the function $a$ is very fast but the
convergence of the function $b$ exhibits oscillations, which make it
difficult to estimate the order of convergence. To make this clear, we
have shown the algebraic values of the error in this case, and the
order of convergence is not computed. 

Table \ref{tab: CGMY_a_b_convergence_time} displays the same
convergence results for time discretization as function of $N_T$,
taking  $N=800$ and using the approximation with $N_T=6400$ as the
reference value.

For all three values of the parameter $Y$, and for both functions, we find first-order
convergence in time. For the function $a$ this is perfectly in line with the
theoretical result (theorem \ref{maina.thm}), but for the function
$b$, the theoretical result (theorem \ref{mainb.thm}) predicts slower
convergence, and can probably be improved.

\begin{table}
%\small
\caption{Space discretization convergence for $CGMY$ model (the
  function $b$ is
  computed for an at-the-money call option)} 

\begin{center}

\begin{tabular}{|c|c|c|c|c|c|c|}
\hline
   &  \multicolumn{6}{c|}{$Y=1.2$}  \\
\hline
N      & 100         & 200     &   400       &  800     &   1600 &  3200    \\
$a$  value & 0.85175 & 0.84544    & 0.84233  & 0.84148 & 0.84144 & 0.84143 \\
error      & 0.0103  & 0.0040  &  0.00090 & 0.00005 &   0.00001  &  $-$   \\
$k_a$ & $-$& 1.36 & 2.15 & 4.16 & $-$ &  $-$\\\hline
$b$ value  & 4.8889 & 4.9196   & 4.8990 & 4.89848  & 4.9021  &   4.9029     \\
error      & -0.0140 & 0.0167  & -0.0038  & 0.0044 &  0.0008        &  $-$\\
\hline
\end{tabular}

\medskip

\begin{tabular}{|c|c|c|c|c|c|}
\hline
   &  \multicolumn{5}{c|}{$Y=1.9$}  \\
\hline
N      & 100         & 200     &   400       &  800     &  3200    \\
$a$  value & 0.82495    & 0.82417  & 0.82405  & 0.82405 & 0.82405\\
error & 0.0009 & 0.00012 & 0 & 0 & $-$\\
$k_a$ & $-$& 2.91&$-$&$-$&$-$\\\hline
$b$ value  & 19.2442   & 19.3131 & 19.3279 &  19.3305 &  19.3303    \\
error & 0.0861 & 0.0172 & 0.0024 & 0.0002 & $-$\\
$k_b$    &       $-$    & 2.31    & 2.74    &   3.7    &       $-$          \\
\hline
\end{tabular}

\medskip

\begin{tabular}{|c|c|c|c|c|c|}
\hline
 & \multicolumn{5}{c|}{ $Y=1.98$}   \\
\hline
N      & 100  & 200 & 400 & 800 &  3200   \\
$a$  value &  0.54068     & 0.53974    & 0.53951    &  0.53946   &
0.53946    \\
error &0.00122& 0.00028& 0.00005& 0 & $-$\\
$k_a$        &       $-$       & 2.12       &  2.48      &  $-$          &        $-$      \\\hline
$b$ value  &  41.4783 &  41.5399 &  41.5553  & 41.5596 & 41.5613 \\
error &0.083& 0.0214&0.006& 0.0017&$-$\\
$k_b$    &    $-$       &  1.9     &  1.66     &  1.33   &     $-$    \\
\hline
\end{tabular}
\end{center}

\label{tab: CGMY_a_b_convergence}
\end{table}

\begin{table}
%\small
\caption{Time discretization convergence for $CGMY$ model ( the
  function $b$ is computed for an at-the-money call option)} 
\begin{center}
\begin{tabular}{|c|c|c|c|c|c|}
\hline
   &  \multicolumn{5}{c|}{$Y=1.2$}  \\
\hline
$N_T$     & 100         & 200     &   400       &  800   &   6400
\\
$a$  value &  0.84390  & 0.84252 &  0.84182 &  0.84147 &  0.84117  \\
error & 0.00273  &  0.00135  &  0.00065 &   0.0003  &$-$\\
$k_a$ &$-$& 1.016 & 1.055& 1.115 &$-$\\ \hline
$b$ value  & 4.93455   & 4.91409  & 4.90371  & 4.89848  &  4.89388  \\
error &0.04067  &  0.02021 &   0.00983 &   0.0046 &$-$\\
$k_b$    &   $-$     &     1.009     &  1.04  &  1.095   &       $-$         \\
\hline
\end{tabular}

\medskip
\begin{tabular}{|c|c|c|c|c|c|}
\hline
   &  \multicolumn{5}{c|}{$Y=1.9$}  \\
\hline
$N_T$     & 100         & 200     &   400       &  800     &  6400
\\
$a$  value &  0.82449  & 0.82423 & 0.82410  & 0.82404  &  0.82399   \\
error & 0.0005& 0.00024& 0.00011& 0.00005&$-$\\
$k_a$ &$-$& 1.06& 1.13& 1.14 &$-$\\ \hline
$b$ value  &  19.3001 & 19.3176 & 19.3263  &  19.3305 &  19.3342   \\
error &0.0341&0.0166&0.0079&0.0037&$-$\\
$k_b$    &   $-$     &  1.038        &   1.07   &   1.09   &       $-$         \\
\hline
\end{tabular}

\medskip

\begin{tabular}{|c|c|c|c|c|c|}
\hline
 & \multicolumn{5}{c|}{ $Y=1.98$}   \\
\hline
$N_T$      & 100  & 200 & 400 & 800 &  6400   \\
$a$  value &  0.5381   & 0.53892   & 0.53928  & 0.53947    &  0.53963
\\
error &0.00153&0.00071&0.00035&0.00016& $-$\\
$k_a$    &     $-$      &  1.10      &  1.02      &  1.12   &    $-$       \\\hline
$b$ value  &  41.288   & 41.4451   & 41.5217   & 41.5596 & 41.5925\\
error &0.3045&0.1474& 0.0708&0.0329&$-$\\
$k_b$    &       $-$     &  1.04     & 1.05     &   1.10         &     $-$    \\
\hline
\end{tabular}
\end{center}
\label{tab: CGMY_a_b_convergence_time}
\end{table}

% Figure \ref{fig: plot_a_cgmy} shows the function $a$ whereas the
% optimal control $\pi^*$ is given in Figure \ref{fig: pi_opt_a_cgmy}. 
% \begin{figure}[!h]
% \centering
% \includegraphics[width=.70\textwidth]{ACurveCGMY.png}
% \caption[The function $a$ for the CGMY process]{The value function $(t,z)\mapsto a(t,z)$ for the CGMY process.}
% \label{fig: plot_a_cgmy}
% \end{figure}
% \begin{figure}[!h]
% \centering
% \includegraphics[width=.80\textwidth]{PiCurveCGMY.png}
% \caption[Optimal strategy for the pure investment problem for the CGMY process]{The optimal control $(t,z)\mapsto \pi^*(t,z)$ for the CGMY process.}
% \label{fig: pi_opt_a_cgmy}
% \end{figure}
% Finally, Figure \ref{fig: plot_b_put_cgmy} shows the optimal price $b$ when we consider an at-the-money put option in \eqref{elec_prob_2}
% \begin{figure}[!h]
% \centering
% \includegraphics[width=.70\textwidth]{BCurvePutCGMY.png}
% \caption[The function $b$ for an at-the-money put option for the CGMY process ]{The value function $b(t,z)$ for an at-the-money put option for the CGMY process.}
% \label{fig: plot_b_put_cgmy}
% \end{figure}

Practitioners usually price options of this type and calculate the
hedging strategy assuming that the underlying process $F$ is a
martingale. It is therefore interesting to evaluate the loss of efficiency when using
the hedging strategy computed in the martingale model.
Assuming that $F$ is a martingale means that we should have
$$
F_{d,T,t}:=\frac{1}{d}\int_T^{T+d}\psi(0,s) \exp\left( M(s,t) +l(s)A_t\right)ds
$$
for some $M$ that makes $F$ a martingale under the historical probability $\P$. Using Lemma 15.1 in \cite{cont.tank.04} we obtain
\begin{align*}
M(t,s) = & - \int_0^t \left(\mu+ C \Gamma(1-Y) (M^{Y-1}- G^{Y-1})\right) e^{-c(s-r)} dr \\
 &  +  \int_0^t C \Gamma(-Y) ((M- e^{-c(s-t)})^Y -M^Y + (G+e^{-c(s-t)})^Y -G^Y ) dr 
\end{align*}
First remark that when the underlying process $F$ is a
martingale, $a\equiv 1$: indeed, from PIDE \eqref{QL_PIDE_a}, we have
\begin{align*}
&0 = -\frac{\partial a}{\partial t}-\mu\frac{\partial a}{\partial z}-\int_\R \left (a(t,z+\gamma)-a(t,z)-\gamma\frac{\partial a}{\partial z}(t,z)\right)\nu(dy) 
 - \inf_{|\pi|\leq \bar\Pi}\left\{2\pi\genop[Q]{a}(z) +\pi^2\genop[G]{a} \right\}\\
&a(T,z)=1
\end{align*}
On the other hand, from Definition \ref{operators}, we have
\begin{align*}
\genop[Q]{a}(z) :=& \int_\R \left(e^{\gamma}-1\right)\left(a(t,z+\gamma(t,z,y))-a(t,z)\right)\nu(dy)
\end{align*}
since $\tilde\mu$, given in \eqref{tilde_mu}, is equal to zero (it is
the drift of the process $F$ which is now a martingale). From this, it
is straightforward to deduce that the function $a=1$ is the unique
solution of PIDE \eqref{QL_PIDE_a}. So that, when $F$ is a martingale,
one only needs to compute the function $b$.

We now evaluate the loss of efficiency when using the martingale
hedging strategies compared to the quadratic hedging strategies under
the true historical measure. Our
efficiency comparison criterion is the following: if $H$ is the (put) option and $\theta^{true},\theta^{mart}$ are, respectively, the
optimal quadratic hedging strategy and the martingale strategy, then the efficiency is measured in terms of the standard deviation of the hedged portfolios:
\begin{align}
\text{efficiency}(\theta^{true})^2:=\text{Var} \left(H(F_{d,T,t})-x^{true}-\int_t^T \theta_{r-}^{true}dF_{d,T,r} \right)\label{effcy_1}
\end{align}
where $x^{true}$ is the true optimal price given in \eqref{qh_price}. Similarly
\begin{align}
\text{efficiency}(\theta^{mart})^2:=\text{Var} \left(H(F_{d,T,t})-x^{mart}-\int_t^T \theta_{r-}^{mart}dF_{d,T,r} \right)\label{effcy_2}
\end{align}
where $x^{mart}$ is the price given in \eqref{qh_price} when one uses
the functions $a$ and $b$ computed in the martingale model.,
i.e. $x^{mart}$ is the risk neutral price of $H$. 
The variances are computed by Monte Carlo over 100000 paths using the
rejection method algorithm described in \cite{madan_yor}, with $800$
rebalancing dates in each path. The number of Monte Carlo trajectories
used is limited due to the cost of the simulation algorithm. The trajectories of $F_{d,T,t}$ are simulated using the true model in both cases.
Table \ref{QuadHedgeRes_cgmy} summarizes the results of simulations
with a number of time steps and a number of space discretization steps
equal to $800$ for $t=0$.
\begin{table}
 \begin{center}
\begin{tabular}{|l|l|c|c|c|c|c|}\hline
$Y$  & Option $H$ & Moneyness &   Option value&  $\text{efficiency}(\theta^{true})$      & $\text{efficiency}(\theta^{mart})$ & Variance \\ 
&  &  &  &   &Reduction \\
\hline
 1.2  &  Call    &   1             &  4.67   &     4.93                   &   5.22            &   -5.6   \%        \\ \hline
 1.2  &  Put     &   1             &  4.65   &     4.93                   &    5.22           &   -5.6       \%            \\ \hline
 1.2  &  Call    &   1.1           &  1.03   &     6.33                   &     6.60          &    -4.2   \%                            \\ \hline
 1.2  &  Put     &   1.1           &  8.73   &     6.33                   &     6.60          &   -4.2     \%                    \\ \hline
 1.2  &  Call    &   0.9           &  10.52  &     3.58                   &      3.81         &   -6.4 \%      \\ \hline
 1.2  &  Put     &   0.9           &  2.32   &     3.61                   &     3.82          &  -5.8            \%   \\ \hline
 1.98 &  Call    &    1            &  41.55  &    2.22                    &   3.048           &   -27.6   \%        \\ \hline
 1.98  &  Put     &   1            &  41.63  &    2.19                    &   3.03            &   -27.7    \%            \\ \hline
 1.98 &  Call    &    1.1          &  39.91  &    2.44                    &   3.335           &   -26.7   \%        \\ \hline
 1.98  &  Put     &   1.1          &  47.71  &    2.41                    &   3.32            &    -27.4      \%            \\ \hline
 1.98 &  Call     &   0.9          &  43.35  &    1.98                    &   2.737           &   -27.6   \%        \\ \hline
 1.98  &  Put     &   0.9          &  35.71  &    1.95                    &    2.72           &  -28.3        \%            \\ \hline
\end{tabular}
\end{center}
\caption{Pricing and standard deviation of hedged portfolio in  the CGMY case}
\label{QuadHedgeRes_cgmy}
\end{table}
The numerical experiment proves that one loses efficiency when using
the martingale hedging strategy. This is consistent with the fact that
$\theta^{true}$ achieves the minimum in problem \eqref{elec_prob_2} and outperforms the strategy $\theta^{mart}$.

\subsection{Numerical example: the Normal Inverse Gaussian process}\label{subsec: NIG}
In this last paragraph we study the problem \eqref{elec_prob_2} when $\hat L$ in \eqref{levy_jump} is a Normal Inverse Gaussian process\index{Normal Inverse Gaussian process} with parameters $ \alpha, \beta ,\delta,u $: $\hat L_t  \sim NIG( \alpha, \beta ,\delta  t  , u t )$. 
\begin{remark}
\label{not: NIG}
The parameter $\alpha$ should not be mistaken for the parameter in Lemma \ref{lem_transf}. We use this notation because it is standard in the literature.
\end{remark}
\bigskip
 We can write then
$$
\hat L_t= \left(u + \frac{\beta\delta}{\sqrt{\alpha^2-\beta^2}} \right)t + \int_0^t \int_\R y \tilde J(dyds)
$$
where $\tilde J$ is a compensated Poisson random measure with
intensity 
$$\nu(dy) = \nu(y) dy,\quad \nu(y)= \frac{\alpha\delta }{\pi |y|}K_1(\alpha|y|)
e^{\beta y} ,
$$ 
where $K_1$ is the modified Bessel function of the second kind
(Section 4.4.3 in \cite{cont.tank.04}). The L\'evy density $\nu(y)$ satisfies
\begin{align*}
\nu(y)\stackrel{y\to 0}{\sim}\frac{\delta}{\pi |y|^2},\qquad
\nu(y)\stackrel{y\to +\infty}{\sim}
\frac{1}{|y|^{3/2}}e^{-(\alpha-\beta)y}, \qquad \nu(y)\stackrel{y\to -\infty}{\sim} \frac{1}{|y|^{3/2}}e^{-(\alpha+\beta)|y|}.
\end{align*}

\begin{remark}
\label{rem: no_NIG}
The NIG is a infinite variation  \lev process with stable-like
behavior of small jumps, and since the Blumenthal-Getoor index is
equal to $1$, we cannot formally apply Lemma \ref{lem_transf} and
Theorems \ref{HJB_v0}--\ref{HJB_v}. It is nevertheless a case of
interest because the NIG model is popular among practitioners, and we
shall see in the sequel that our numerical schemes yield acceptable results
for this model. 
\end{remark}
We want to solve numerically the equations
 \eqref{QL_PIDE_a} and \eqref{PIDE_b} where the model for $Z$ is given in
 Lemma \ref{lem_transf} and the source of randomness $\hat L$ in
 \eqref{levy_jump} is given by the NIG process introduced above. We
 apply the scheme \eqref{schemeaimp.eq} for the function $a$ with
 maturity $T = 7$. Once again, the coefficients 
\begin{align*}
\mu(t,z) = & \Phi^{'}( \Phi^{-1}(z))  \left(u + \frac{\beta\delta}{\sqrt{\alpha^2-\beta^2}}\right) e^{c t} 
+ \int_{\R} \left[\Phi\left( \Phi^{-1}(z)+ ye^{ct}\right) - z -  ye^{c t} \Phi^{'}( \Phi^{-1}(z))\right] \nu(dy) \\
\gamma(t,z,y)  = & \Phi\left( \Phi^{-1}(z)+ ye^{c t}\right) - z 
\end{align*}
and the function $D(t,z)$ introduced in \eqref{Dvalue} are computed by
numerical integration over the points $y_i$ such that $\gamma(t,z,y_i(t,z))=i\Delta z$, or equivalently
$$
y_i(t,z):=e^{-c t} \left(\Phi^{-1}(z+i\Delta z)-\Phi^{-1}(z) \right)
$$
By expanding $\gamma$ around zero we obtain 
\begin{align*}
D(t,z):=& \int_{y_{-\kappa-1/2}(t,z)}^{y_{\kappa+1/2}(t,z)}\gamma(t,z,y)^2 \nu(dy) \simeq e^{2c t}(\Phi^{'}(\Phi^{-1}(z)))^2 \int_{y_{-\kappa-1/2}(t,z)}^{y_{\kappa+1/2}(t,z)} y^2 \nu(dy) \\
\simeq & e^{c t}\left(\Phi^{-1}(z+ (\kappa+\frac{1}{2})\Delta z)-\Phi^{-1}(z-(\kappa+\frac{1}{2}) \Delta z)\right) (\Phi^{'}(\Phi^{-1}(z)))^2  \frac{\delta}{\pi}
\end{align*} 
since, around zero, we have $y^2 \nu(dy)  \simeq \frac{\delta}{\pi}  +
\frac{\delta \beta }{\pi} y + O(y^2)$. (See for example
\cite{raible2000}). We proceed as for the CGMY case, even though we do not
have a priori results on the existence of a smooth solution.

We consider once again the problem \eqref{elec_prob_2} for European options $f$, with maturity one week and delivery for the 7 days of the week as in \eqref{future_7}. The parameters of the NIG process are  $u=0.08$, $\alpha =6.23$,
$\beta =0.06$, $\delta=0.1027$. The mean reverting coefficient $c$ is
taken equal to $0.19$. 
In all experiments, the resolution domain is  $[-10,10]$ and the integration domain for the Levy density is $[-2,2]$ so that $I=N/5$ and $\kappa=1$.

Table \ref{tab: NIG_a_b_convergence} displays the values of $a$ and
$b$ for an at-the-money call option as function of the space mesh size
$N$ for the number of time steps $N_T = 800$ and as function of the
number of time steps for the space mesh size $N=800$. Errors and
orders of convergence $k_a$  for $a$ and  $k_b$  for  $b$ are computed
taking $N=3200$ and $N_T=3200$ as the reference values. 

In this case, there are no theoretical results to which the
simulations may be compared. Numerically we do observe convergence in time and
in space, but it seems that the convergence as $N\to \infty$ is slower
than for the CGMY model, in particular, for the function $a$ the space
discretization error seems to be much higher than the time
discretization error. 

\begin{table}
%\small
\caption{Space and time discretization convergence for $NIG$ model
  (The value $b$ is computed for an at-the-money call option)} 
\begin{center}
\begin{tabular}{|c|c|c|c|c|c|c|}
\hline
   &  \multicolumn{6}{c|}{ Space}  \\
\hline
$N$     & 100         & 200     &   400       &  800     &  1600  & 3200    \\
$a$  value &  0.398 & 0.266  & 0.182   &  0.136 &  0.113   &  0.1012
\\
error &0.2968   & 0.1648  &  0.0808  &  0.0348  &  0.0118 &$-$  \\
$k_a$    &  $-$  &  0.85   &  1.03  &  1.21       &  1.56 &  $-$ \\
$b$ value  &  4.716 & 4.608  &  4.446  &  4.324 &  4.247    &  4.204
\\
error &0.512  &  0.404   & 0.242   & 0.12   & 0.043& $-$\\
$k_b$    &     $-$   &  0.34        &  0.74  &  1.001   & 1.48  &       $-$       \\
\hline
\end{tabular}

\medskip

\begin{tabular}{|c|c|c|c|c|c|c|}
\hline
 & \multicolumn{6}{c|}{ Time}   \\
\hline
$N_T$      & 100  & 200 & 400 & 800                  &   1600  &  3200   \\
$a$  value &  0.136  & 0.136   & 0.136   &  0.136      &  0.136   &    0.136      \\
$b$ value  &  4.475  & 4.388   &  4.345  &  4.324     &  4.313    &
4.308      \\
error & 0.167  &  0.08  &  0.037  &  0.016   & 0.005 &$-$\\
$k_b$    &   $-$  &  1.06 & 1.11   &   1.2 &  $-$    &   \\
\hline
\end{tabular}
\end{center}
\label{tab: NIG_a_b_convergence}
\end{table}

% Figures \ref{fig: plot_a} and Figure \ref{fig: pi_opt_a} show the
% values of the function $a$ and the optimal control $\pi^*$.
% \begin{figure}[!h]
% \centering
% \includegraphics[width=.70\textwidth]{ACurve.png}
% \caption[The function $a$ for the NIG process]{The value function $(t,z)\mapsto a(t,z)$ for the NIG process.}
% \label{fig: plot_a}
% \end{figure}
% \begin{figure}[!h]
% \centering
% \includegraphics[width=.80\textwidth]{PiCurve.png}
% \caption[Optimal strategy for the pure investment problem]{The optimal control $(t,z)\mapsto \pi^*(t,z)$ for the NIG process.}
% \label{fig: pi_opt_a}
% \end{figure}
% Figure \ref{fig: plot_b_call} shows the optimal price $b$ for an
% at-the-money call option and  Figure \ref{fig: plot_b_put} shows the
% function $b$ for an at-the-money put option.
% \begin{figure}[!h]
% \centering
% \includegraphics[width=.70\textwidth]{BCurveCall.png}
% \caption[The function $b$ for an at-the-money call option]{The value function $b(t,z)$ for an at-the-money call option.}
% \label{fig: plot_b_call}
% \end{figure}
%  \begin{figure}[!h]
% \centering
% \includegraphics[width=.70\textwidth]{BCurvePut.png}
% \caption[The function $b$ for an at-the-money put option]{The value function $b(t,z)$ for an at-the-money put option.}
% \label{fig: plot_b_put}
% \end{figure}

As in the CGMY case, we estimate the loss in the efficiency of the hedge when using a martingale model in terms of the standard deviation of the hedged portfolios as in \eqref{effcy_1}--\eqref{effcy_2}.
For $F$ to be  a martingale, we should have
$$
F_{d,T,t}:=\frac{1}{d}\int_T^{T+d}\psi(0,s) \exp\left( M(s,t) +l(s)A_t\right)ds
$$
where 
\begin{align*}
&M(t,s) = - \int_0^t \left( \left(u+\frac{\beta\delta}{\sqrt{\alpha^2-\beta^2}}\right) e^{-c(s-r)} + \delta \left( \sqrt{\alpha^2-\beta^2} - \sqrt{\alpha^2 - (\beta + e^{-c(s-r)})^2}\right)\right) dr 
\end{align*}
We already know that in this case we only need to compute the function $b$ ($a=1$ in the martingale case). Table \ref{QuadHedgeRes} summarizes the results of simulations, for $t=0$.
\begin{table}
 \begin{center}
\begin{tabular}{|l|l|c|c|c|c|}\hline
Option $H$ & Moneyness &   Option value&  $\text{efficiency}(\theta^{true})$      & $\text{efficiency}(\theta^{mart})$ & Variance \\ 
&  &  &  &   &Reduction \\
\hline
Call    &   1             &    4.247   &     1.084                      &         1.343     &  -19,28\%              \\ \hline
Put     &   1             &    4.230    &     1.084                     &         1.343     &  -19,28\%    \\ \hline
Call    &   1.5          &     0.120         &     0.142                          &   0.171         &      -16,9           \%       \\ \hline
Put     &   1.5          &     38.809         &    0.145                           &   0.171          &   -15,2             \%   \\ \hline
\end{tabular}
\end{center}
\caption{Pricing and standard deviation of hedged portfolio in the NIG ( $N=1600$, $N_T=800$)
case}
\label{QuadHedgeRes}
\end{table}
The numerical experiment proves that using
the martingale hedging strategy is inaccurate. The loss of efficiency is of order -20\%.

\bibliographystyle{chicago}
%\bibliography{quadratic_hedge_bib}

\appendix
\section{Convergence analysis}

\begin{proof}[Proof of Theorem \ref{maina.thm}]
Under the additional assumptions of this theorem,
$$
D(t,z) = \int_{\Omega_0}\gamma^2(t,z,y)\nu(dy) \geq
m_1\int_{\Omega_0}y^2\nu(dy) \geq
m_1 \int_{|y|\leq \frac{(\kappa+1/2)\Delta z}{\|\gamma_y\|_\infty}}
y^2\nu(dy) \sim C(\kappa \Delta z)^{2-\alpha},\quad \Delta z\to 0.
$$
where, throughout this appendix, $C$ denotes a finite constant, which
does not depend on any truncation / discretization parameters and
whose exact definition may change from line to line. On the other hand,
$$
|\hat \mu(t,z)| \leq \bar \mu + \Delta z\sum_{k<|i|\leq I} |i|\omega_i(t,z).
$$
Using the mean value theorem and our assumptions on $\gamma$, we can
show that 
$$
|\hat \mu(t,z)| \leq \bar \mu + C \int_{\Omega_1\cup
  \Omega_2}|y|\nu(dy) \leq \bar \mu + C |\kappa \Delta z|^{1-\alpha}.
$$
Therefore, one may choose a constant $c$ which does not depend on
$\Delta z$ (for $\Delta z$ small enough), such that for $\kappa >c$,
the weights $\upsilon$ and $\chi$ are both positive and given by \eqref{centra}--\eqref{centrb}.
Throughout this proof we shall assume that such a
choice of $\kappa$ has been made.

Let 
\begin{align}\hat \beta^{z,t_n} = \inf\{i\geq n: \widehat
Z^{z,t_n}_{t_i} + I\Delta z \geq  N\Delta z\ \text{or}\ \widehat
Z^{z,t_n}_{t_i} - I\Delta z \leq -N\Delta z\}.\label{defbetahat}
\end{align}
We have,
\begin{align*}
a^n(z_j) &= \inf_{\pi_i \in[-\bar \Pi, \bar \Pi]}\mathbb E\Bigg[\prod_{i=n+1}^{\hat \beta^{t_n,z_j} \wedge N_T}\left(1+
\pi_i (e^{\widehat Z^{t_n,z_j}_{t_i}-\widehat Z^{t_n,z_j}_{t_{i-1}}} -1 )\right)^2\\
&\times \prod_{i=\hat \beta^{t_n,z_j} \wedge N_T+1}^{\beta^{t_n,z_j} \wedge N_T}\left(1+
\pi_i (e^{\widehat Z^{t_n,z_j}_{t_i}-\widehat Z^{t_n,z_j}_{t_{i-1}}} -1 )\right)^2q^a(t_{\beta^{t_n,z_j}\wedge
N_T},\widehat Z^{t_n,z_j}_{t_{\beta^{t_n,z_j} \wedge N_T}})\Bigg] \\
& = \inf_{\pi_i \in[-\bar \Pi, \bar \Pi]}\mathbb E\Bigg[\mathbf 1_{\hat \beta^{t_n,z_j} <
  N_T}\prod_{i=n+1}^{ \beta^{t_n,z_j}\wedge N_T }\left(1+
\pi_i (e^{\widehat Z^{t_n,z_j}_{t_i}-\widehat Z^{t_n,z_j}_{t_{i-1}}} -1 )\right)^2 q^a(\beta^{t_n,z_j}\wedge
N_T,\widehat Z^{t_n,z_j}_{\beta^{t_n,z_j} \wedge N_T}) \\
 &+ \mathbf 1_{\hat \beta^{t_n,z_j} \geq N_T}\prod_{i=n+1}^{N_T }\left(1+
\pi_i (e^{\widehat Z^{t_n,z_j}_{t_i}-\widehat Z^{t_n,z_j}_{t_{i-1}}} -1 )\right)^2 a(t_{N_T},\widehat Z^{t_n,z_j}_{ t_{N_T}})\Bigg]
\end{align*}
Let 
$$
\hat a^n(z_j) = \inf_{\pi_i \in[-\bar \Pi, \bar \Pi]} \mathbb E\left[\mathbf 1_{\hat \beta^{t_n,z_j} > N_T-1}\prod_{i=n+1}^{N_T }\left(1+
\pi_i (e^{\widehat Z^{t_n,z_j}_{t_i}-\widehat Z^{t_n,z_j}_{t_{i-1}}} -1 )\right)^2 a(t_{N_T},\widehat Z^{t_n,z_j}_{ t_{N_T}})\right].
$$
Clearly, $a_n(z_j) \geq \hat a_n(z_j)$. On the other hand, by the
Cauchy-Schwarz inequality,
\begin{align*}
a_n(z_j) -\hat a_n(z_j) &\leq \sup_{\pi_i \in[-\bar \Pi, \bar \Pi]}\mathbb E\Bigg[\mathbf 1_{\hat \beta^{t_n,z_j} <
  N_T}\prod_{i=n+1}^{ N_T }\left(1+
\pi_i (e^{\widehat Z^{t_n,z_j}_{t_i}-\widehat Z^{t_n,z_j}_{t_{i-1}}} -1 )\right)^2
\Bigg] \|q^a\|_\infty. \\
&\leq \|q^a\|_\infty \mathbb P[\hat \beta^{t_n,z_j} <
  N_T]^{1\over2} \sup_{\pi_i \in[-\bar \Pi, \bar \Pi]}\mathbb E\Bigg[\prod_{i=n+1}^{ N_T }\left(1+
\pi_i (e^{\widehat Z^{t_n,z_j}_{t_i}-\widehat Z^{t_n,z_j}_{t_{i-1}}} -1 )\right)^4
\Bigg]^{1\over2}
\end{align*}
The second factor can be bounded using the fact, that by Lemma
\ref{cons.lm}, for every $\pi\in [-\bar \Pi,\bar \Pi]]$
\begin{align*}
&\mathbb E[(1+\pi (e^{\widehat Z^{z,t_n}_{t_{n+1}}-z}-1))^4] = 1+ \Delta
t\Big[4\pi\Big\{\mu(t_n,z) + \int_{\mathbb R} (e^{\gamma(t_n,z,y)}-1 -
\gamma(t_n,z,y)) \nu(dz)\Big\}\\
& + 6\pi^2 \int_{\mathbb R} (e^{\gamma(t_n,z,y)}-1)^2 \nu(dz) + 4\pi^3
\int_{\mathbb R} (e^{\gamma(t_n,z,y)}-1)^3 \nu(dz) + \pi^4
\int_{\mathbb R} (e^{\gamma(t_n,z,y)}-1)^4 \nu(dz) + \mathcal E
\Big],
\end{align*}
where $\mathcal E$ is the error term. By our assumptions, the terms in
square brackets are bounded, and therefore, by applying iterated
conditional expectations, 
\begin{align}
\sup_{\pi_i \in[-\bar \Pi, \bar \Pi]}\mathbb E\Bigg[\prod_{i=n+1}^{ N_T }\left(1+
\pi_i (e^{\widehat Z^{t_n,z_j}_{t_i}-\widehat Z^{t_n,z_j}_{t_{i-1}}} -1 )\right)^4
\Bigg] \leq (1+C\Delta t)^{N_T - n}\leq e^{C \Delta T N_T} = e^{CT}. \label{expestim}
\end{align}
Together with the estimate of Lemma \ref{exit.lm} for the first factor, this yields
\begin{align}
|a_n(z_j) -\hat a_n(z_j)| \leq \frac{C(1+|z_j|)}{(N - I)\Delta z - \bar
  \mu T},\quad N\Delta z > I\Delta z + \bar
  \mu T.\label{ahatbound}
\end{align}

The remaining error is decomposed as follows:
\begin{align}
&\hat a^n(z_j) - a(t_n,z_j)  \notag\\&= \sum_{k=n+1}^{N_T} \Big\{\inf_{\pi_i
  \in[-\bar \Pi, \bar \Pi],i=n+1,\dots,k} \mathbb E\left[\mathbf 1_{\hat \beta^{t_n,z_j} > k-1}\prod_{i=n+1}^{k } \left(1+
\pi_i (e^{\widehat Z^{t_n,z_j}_{t_i}-\widehat Z^{t_n,z_j}_{t_{i-1}}}
-1 )\right)^2 a(t_k,\widehat Z^{t_n,z_j}_{t_k})\right] \notag\\ &-
\inf_{\pi_i \in[-\bar \Pi, \bar \Pi],i=n+1,\dots,k-1} \mathbb
E\left[\mathbf 1_{\hat \beta^{t_n,z_j} > k-1}\prod_{i=n+1}^{k-1 } \left(1+
\pi_i (e^{\widehat Z^{t_n,z_j}_{t_i}-\widehat Z^{t_n,z_j}_{t_{i-1}}} -1 )\right)^2 a(t_{k-1},\widehat Z^{t_n,z_j}_{t_{k-1}})\right]\Big\}\label{decomperr}
\end{align}
The first term inside the brackets satisfies
\begin{align*}
&\inf_{\pi_i
  \in[-\bar \Pi, \bar \Pi],i=n+1,\dots,k} \mathbb E\left[\mathbf 1_{\hat \beta^{t_n,z_j} > k-1}\prod_{i=n+1}^{k } \left(1+
\pi_i (e^{\widehat Z^{t_n,z_j}_{t_i}-\widehat Z^{t_n,z_j}_{t_{i-1}}}
-1 )\right)^2 a(t_k,\widehat Z^{t_n,z_j}_{t_k})\right] \\
& = \inf_{\pi_i
  \in[-\bar \Pi, \bar \Pi],i=n+1,\dots,k-1} \mathbb E\Bigg[\mathbf 1_{\hat \beta^{t_n,z_j} > k-1}\prod_{i=n+1}^{k-1 } \left(1+
\pi_i (e^{\widehat Z^{t_n,z_j}_{t_i}-\widehat Z^{t_n,z_j}_{t_{i-1}}}
-1 )\right)^2 \\ &\qquad \qquad \times \inf_{\pi_k \in [-\bar \Pi,\bar
\Pi]}\mathbb E\left[\left(1+
\pi_k (e^{\widehat Z^{t_n,z_j}_{t_k}-\widehat Z^{t_n,z_j}_{t_{k-1}}}
-1 )\right)^2 a(t_k,\widehat Z^{t_n,z_j}_{t_k})\Big|
\widehat Z^{t_n,z_j}_{t_{k-1}}\right]\Bigg]
\end{align*}
By Lemma \ref{cons.lm}, for every $\pi\in [-\bar \Pi, \bar \Pi]$, 
\begin{align*}
&\mathbb E\left[\left(1+
\pi (e^{\widehat Z^{t_n,z_j}_{t_k}-\widehat Z^{t_n,z_j}_{t_{k-1}}}
-1 )\right)^2 a(t_k,\widehat Z^{t_n,z_j}_{t_k})\Big|
\widehat Z^{t_n,z_j}_{t_{k-1}}=z\right] \\
&= \mathbb E\left[\left(1+
\pi (e^{\widehat Z^{t_n,z_j}_{t_k}-\widehat Z^{t_n,z_j}_{t_{k-1}}}
-1 )\right)^2 a(t_{k-1},\widehat Z^{t_n,z_j}_{t_k})\Big|
\widehat Z^{t_n,z_j}_{t_{k-1}}=z\right] \\
& + \int_{t_{k-1}}^{t_k} \mathbb E\left[\left(1+
\pi (e^{\widehat Z^{t_n,z_j}_{t_k}-\widehat Z^{t_n,z_j}_{t_{k-1}}}
-1 )\right)^2 \frac{\partial a}{\partial t}(t,\widehat Z^{t_n,z_j}_{t_k})\Big|
\widehat Z^{t_n,z_j}_{t_{k-1}}=z\right] dt\\
& = a(t_{k-1},z) + \Delta t \mathcal L_\pi a(t_{k-1},z) + \Delta t
\int_{t_{k-1}}^{t_k} \mathcal L_\pi \frac{\partial a}{\partial t}(t,z) dt + a(t_k,z) -
a(t_{k-1},z) + \Delta t \tilde{\mathcal E}_k,
\end{align*}
where 
$$
|\tilde{\mathcal E}_k| \leq C \Delta 
z^{3-\alpha}(\kappa^{3-\alpha} + \kappa^{1-\alpha}) + C \int_{|y|\geq \frac{I
    \Delta z}{\|\gamma'_y\|_\infty}} (1+|y| + \tau(y) + \tau^2(y))\nu(dy),
$$
and the operator $\mathcal L_\pi$ is defined by
\begin{align*}
\mathcal L_\pi f(t,z) &= \mu(t,z)f'(z) + \int_{\mathbb R} (f(z+\gamma(t,z,y))-
  f(z)-f'(z)\gamma(t,z,y))\nu(dy) \\
& + 2\pi \left\{\mu(t,z)f(z) + \int_{\mathbb R} ((e^{\gamma(t,z,y)}-1)f(z+\gamma(t,z,y))
-f(z)\gamma(t,z,y))\nu(dy)\right\}\\
& + \pi^2 \int_{\mathbb R} (e^{\gamma(t,z,y)}-1)^2f(z+\gamma(t,z,y))
\nu(dy).
\end{align*}
From the regularity of $a$ and the integrability conditions on the
L\'evy measure, we deduce that $\mathcal L_\pi \frac{\partial a}{\partial t}(t,z)$ is uniformly
bounded on $\pi \in [-\bar\Pi,\bar \Pi]$, which means that 
\begin{align*}
&\mathbb E\left[\left(1+
\pi (e^{\widehat Z^{t_n,z_j}_{t_k}-\widehat Z^{t_n,z_j}_{t_{k-1}}}
-1 )\right)^2 a(t_k,\widehat Z^{t_n,z_j}_{t_k})\Big|
\widehat Z^{t_n,z_j}_{t_{k-1}}=z\right] \\ &\qquad = a(t_{k-1},z) + \Delta t\left\{
\mathcal L_\pi a(t_{k-1},z) + \frac{\partial a}{\partial t}(t_{k-1},z)\right\}
+ \Delta t \mathcal E_k
\end{align*}
with 
\begin{align}
|\mathcal E_k| \leq C \overline{\mathcal E},\quad \overline{\mathcal E}:=  \Delta t + \Delta 
z^{3-\alpha}(\kappa^{3-\alpha} + \kappa^{1-\alpha}) +  \int_{|y|\geq \frac{I
    \Delta z}{\|\gamma'_y\|_\infty}} (1+|y| + \tau(y) + \tau^2(y))\nu(dy),\label{bounderr.eq}
\end{align}
Using the equation \eqref{QL_PIDE_a} satisfied by $a$, we finally get
$$
\left|\inf_{\pi \in [-\bar \Pi, \bar \Pi]} \mathbb E\left[\left(1+
\pi (e^{\widehat Z^{t_n,z_j}_{t_k}-\widehat Z^{t_n,z_j}_{t_{k-1}}}
-1 )\right)^2 a(t_k,\widehat Z^{t_n,z_j}_{t_k})\Big|
\widehat Z^{t_n,z_j}_{t_{k-1}}=z\right] - a(t_{k-1},z) \right|
\leq C\Delta t \overline{\mathcal E}.
$$
Plugging this estimate back into \eqref{decomperr} yields
\begin{align*}
&\hat a^n(z_j) - a(t_n,z_j)  \notag\\&\leq \sum_{k=n+1}^{N_T} \Big\{\inf_{\pi_i
  \in[-\bar \Pi, \bar \Pi],i=n+1,\dots,k-1} \mathbb E\left[\mathbf 1_{\hat \beta^{t_n,z_j} > k-1}\prod_{i=n+1}^{k-1 } \left(1+
\pi_i (e^{\widehat Z^{t_n,z_j}_{t_i}-\widehat Z^{t_n,z_j}_{t_{i-1}}}
-1 )\right)^2 (a(t_{k-1},\widehat Z^{t_n,z_j}_{t_{k-1}})+ C\Delta t
\overline{\mathcal E})\right] \notag\\ &-
\inf_{\pi_i \in[-\bar \Pi, \bar \Pi],i=n+1,\dots,k-1} \mathbb
E\left[\mathbf 1_{\hat \beta^{t_n,z_j} > k-1}\prod_{i=n+1}^{k-1 } \left(1+
\pi_i (e^{\widehat Z^{t_n,z_j}_{t_i}-\widehat Z^{t_n,z_j}_{t_{i-1}}} -1 )\right)^2 a(t_{k-1},\widehat Z^{t_n,z_j}_{t_{k-1}})\right]\Big\},
\end{align*}
which implies:
$$
\hat a^n(z_j) - a(t_n,z_j)  \leq C\Delta t \overline{\mathcal E} \sum_{k=n+1}^{N_T} \sup_{\pi_i
  \in[-\bar \Pi, \bar \Pi],i=n+1,\dots,k-1} \mathbb E\left[\prod_{i=n+1}^{k-1 } \left(1+
\pi_i (e^{\widehat Z^{t_n,z_j}_{t_i}-\widehat Z^{t_n,z_j}_{t_{i-1}}}
-1 )\right)^2 \right]
$$
The expectation can be estimated as in \eqref{expestim}, and we
finally get
$$
\hat a^n(z_j) - a(t_n,z_j) \leq C \overline{\mathcal E}.
$$
The upper bound can be obtained in a similar manner. 
\end{proof}
%%%%%%%%%%%%%%%%%% COROLLARY %%%%%%%%%%%%%%%%%%%%
\begin{proof}[Proof of Corollary \ref{hedging.cor}]
We first estimate the error of approximating the numerator and the
denominator in \eqref{defpin}. For the numerator, we get:
\begin{align*}
&\left|  \mathbb E\left[\left(e^{Z^{t_n,z_j}_{t_{n+1}}-z_j}-1\right)a_{n+1}(Z^{t_n,z_j}_{t_{n+1}})\right]
  - \Delta t\mathcal Q_t a(t_n,z_j)\right|\\
&\leq \left|
  \mathbb E\left[\left(e^{Z^{t_n,z_j}_{t_{n+1}}-z_j}-1\right)(a_{n+1}(Z^{t_n,z_j}_{t_{n+1}})-a(t_{n+1},Z^{t_n,z_j}_{t_{n+1}}))\right]\right|\\
&+ \left|
  \mathbb E\left[\left(e^{Z^{t_n,z_j}_{t_{n+1}}-z_j}-1\right)(a(t_{n+1},Z^{t_n,z_j}_{t_{n+1}})
    - a(t_n,Z^{t_n,z_j}_{t_{n+1}}))\right]\right|\\
&+\left|  \mathbb E\left[\left(e^{Z^{t_n,z_j}_{t_{n+1}}-z_j}-1\right)a(t_n,Z^{t_n,z_j}_{t_{n+1}})\right]
  - \Delta t\mathcal Q_t a(t_n,z_j)\right|
\end{align*}
Using Lemma \ref{cons.lm}, we can show that the second term is bounded
by $C \Delta t^2$ and the third term is bounded by $C \Delta t
\overline{\mathcal E}$, with $\overline{\mathcal E}$ defined in
\eqref{bounderr.eq}. The first term makes the main contribution to the
error, which can be estimated using Theorem \ref{maina.thm} and the
Cauchy-Schwarz inequality:
$$
\left|
 \mathbb  E\left[\left(e^{Z^{t_n,z_j}_{t_{n+1}}-z_j}-1\right)(a_{n+1}(Z^{t_n,z_j}_{t_{n+1}})-a(t_{n+1},Z^{t_n,z_j}_{t_{n+1}}))\right]\right|\leq
C \overline{\mathcal E}
 \mathbb E\left[\left(e^{Z^{t_n,z_j}_{t_{n+1}}-z_j}-1\right)^2
  \right]^{\frac{1}{2}} \leq C \sqrt{\Delta t} \overline{\mathcal E}
$$
Similarly, for the denominator we have the estimate 
$$
\left|  \mathbb E\left[\left(e^{Z^{t_n,z_j}_{t_{n+1}}-z_j}-1\right)^2 a_{n+1}(Z^{t_n,z_j}_{t_{n+1}})\right]
  - \Delta t\mathcal G_t a(t_n,z_j)\right| \leq C \Delta t \overline{\mathcal
  E}. 
$$
Using the fact that for $b,b'>0$ and all $a,a'$, 
\begin{align*}
  \left|\frac{a'}{b'} - \frac{a}{b}\right|\leq \frac{|a'-a|}{b} +
  \frac{|a'| |b'-b|}{bb'},
\end{align*}
we obtain the estimate
\begin{align*}
&\left|\frac{\mathbb
  E\left[\left(e^{Z^{t_n,z_j}_{t_{n+1}}-z_j}-1\right)a_{n+1}(Z^{t_n,z_j}_{t_{n+1}})\right]}{\mathbb
  E\left[\left(e^{Z^{t_n,z_j}_{t_{n+1}}-z_j}-1\right)^2a_{n+1}(Z^{t_n,z_j}_{t_{n+1}})\right]}
- \frac{\mathcal Q_t a(t_n,z_j)}{\Delta t\mathcal G_t
  a(t_n,z_j)}\right| \\ &\leq
\frac{C\sqrt{\Delta T}\overline{\mathcal
  E}}{\Delta t \mathcal G_t a(t_n,z_j)} +  \frac{C \Delta t^{\frac{3}{2}} \overline{\mathcal
  E}}{\Delta t^2\mathcal G_t a(t_n,z_j) (\mathcal G_t a(t_n,z_j) - C \overline{\mathcal
  E})}\leq C \Delta t^{-\frac{1}{2}}\overline{\mathcal E}
\end{align*}
since $\mathcal G_t a$ is bounded from below (this follows from
Theorem \ref{theo_a_lip}). We conclude by observing that projecting
both the optimal strategy and its approximation on the interval
$[-\bar \Pi, \bar \Pi]$ does not increase the error. 
\end{proof}

\begin{proof}[Proof of Theorem \ref{mainb.thm}]
As in the proof of Theorem \ref{maina.thm}, we may and will assume
that $\upsilon$ and $\chi$ are positive and given by
\eqref{centra}--\eqref{centrb}. To deal with the domain truncation, we
also, similarly to the proof of Theorem \ref{maina.thm}, take $\hat
\beta^{z,t_n}$ as in \eqref{defbetahat}, and use the representation
\begin{align*}
b^n(z_j) &= \mathbb E\Bigg[\mathbf 1_{\hat \beta^{t_n,z_j} <
  N_T}\prod_{i=n+1}^{ \beta^{t_n,z_j}\wedge N_T }\left(1+
\pi^{*i}(\widehat Z^{t_n,z_j}_{t_i}) (e^{\widehat Z^{t_n,z_j}_{t_i}-\widehat Z^{t_n,z_j}_{t_{i-1}}} -1 )\right) q^b(t_{\beta^{t_n,z_j}\wedge
N_T},\widehat Z^{t_n,z_j}_{t_{\beta^{t_n,z_j} \wedge N_T}}) \\
 &+ \mathbf 1_{\hat \beta^{t_n,z_j} \geq N_T}\prod_{i=n+1}^{N_T }\left(1+
\pi^{*i}(\widehat Z^{t_n,z_j}_{t_i}) (e^{\widehat Z^{t_n,z_j}_{t_i}-\widehat Z^{t_n,z_j}_{t_{i-1}}} -1 )\right) f(\widehat Z^{t_n,z_j}_{ t_{N_T}})\Bigg]
\end{align*}
Letting 
$$
\hat b^n(z_j) = \mathbf 1_{\hat \beta^{t_n,z_j} > N_T-1}\prod_{i=n+1}^{N_T }\left(1+
\pi^{*i}(\widehat Z^{t_n,z_j}_{t_i}) (e^{\widehat Z^{t_n,z_j}_{t_i}-\widehat Z^{t_n,z_j}_{t_{i-1}}} -1 )\right) f(\widehat Z^{t_n,z_j}_{ t_{N_T}})\Bigg],
$$
the remainder can be estimated as follows:
\begin{align*}
|b_n(z_j) - \hat b_n(z_j)| &\leq \|q^b\|_\infty \mathbb P[\hat \beta^{t_n,z_j} <
  N_T]^{1\over2} \mathbb E\Bigg[\prod_{i=n+1}^{ N_T }\left(1+
\pi^{*i}(\widehat Z^{t_n,z_j}_{t_i}) (e^{\widehat Z^{t_n,z_j}_{t_i}-\widehat Z^{t_n,z_j}_{t_{i-1}}} -1 )\right)^2
\Bigg]^{1\over2} \\&\leq \|q^b\|_\infty \mathbb P[\hat \beta^{t_n,z_j} <
  N_T]^{1\over2} ,
\end{align*}
where the last inequality uses the fact that $\pi^*$ is the minimizer
of the expectation in the first line, and that the expectation equals
one when substituting the value $\pi^*=0$. Therefore, $|b_n(z_j) -
\hat b_n(z_j)|$ admits the same bound as $|a_n(z_j) - \hat a_n(z_j)|$,
given by \eqref{ahatbound}. 

The remaining error is decomposed as follows:
\begin{align*}
&\hat b^n(z_j) - b(t_n,z_j) = \sum_{k=n+1}^{N_T} \Bigg\{\mathbb E\Bigg[\mathbf 1_{\hat \beta^{t_n,z_j} >k-1}\prod_{i=n+1}^{k-1}\left(1+
\pi^{*i}(\widehat Z^{t_n,z_j}_{t_i}) (e^{\widehat
  Z^{t_n,z_j}_{t_i}-\widehat Z^{t_n,z_j}_{t_{i-1}}} -1 )\right)\\
&\times \mathbb E\left[\left(1+
\pi^{*k}(\widehat Z^{t_n,z_j}_{t_k}) (e^{\widehat
  Z^{t_n,z_j}_{t_k}-\widehat Z^{t_n,z_j}_{t_{k-1}}} -1 )\right)
b(t_k,\widehat Z^{t_n,z_j}_{t_k}) - b(t_{k-1},\widehat
Z^{t_n,z_j}_{t_{k-1}})\Big|\widehat
Z^{t_n,z_j}_{t_{k-1}} \right]\Bigg]\Bigg\}
\end{align*}
Similarly to the proof of Theorem \ref{maina.thm}, by Lemma
\ref{cons.lm}, we deduce that 
\begin{align*}
&\mathbb E\left[\left(1+
\pi^{*k}(z) (e^{\widehat
  Z^{t_n,z_j}_{t_k}-\widehat Z^{t_n,z_j}_{t_{k-1}}} -1 )\right)
b(t_k,\widehat Z^{t_n,z_j}_{t_k}) - b(t_{k-1},\widehat
Z^{t_n,z_j}_{t_{k-1}})\Big|\widehat
Z^{t_n,z_j}_{t_{k-1}} =z\right]\\
& = \Delta t \left\{(\mathcal B_t - \mathcal A_t) b(t_{k-1},z) +
\pi^{*k}(z)\mathcal Q_t b(t_{k-1},z) + \frac{\partial b}{\partial
  t}(t_{k-1},z) \right\} + \Delta t \mathcal E_k\\
& = \Delta t (\pi^{*k}(z)-\pi^{*}(t_{k-1},z))\mathcal Q_t b(t_{k-1},z)  + \Delta t \mathcal E_k
\end{align*}
with $\mathcal E_k$ satisfying \eqref{bounderr.eq}, where in the last
equality we used equation \eqref{PIDE_b}. We conclude by substituting
the estimate for $|\pi^{*k}(z)-\pi^{*}(t_{k-1},z)|$ given in corollary
\eqref{hedging.cor}, and using the fact that $\mathcal Q_t
b(t_{k-1},z)$ is bounded due to the regularity of $b$. 
\end{proof}
\paragraph{Auxiliary lemmas} 
\begin{lemma}\label{cons.lm}
Let $f$ be 4 times continuously differentiable with bounded
derivatives. Then,
\begin{align}
&\left|\frac{\mathbb E[f(\widehat Z^{z,t_n}_{t_{n+1}})] - f(z)}{\Delta
  t} -\mu(t_n,z)f'(z) - \int_{\mathbb R} (f(z+\gamma(t_n,z,y))-
  f(z)-f'(z)\gamma(t_n,z,y))\nu(dy) \right|\notag\\&\leq C \Delta 
z^{3-\alpha}(k^{3-\alpha} + k^{1-\alpha}) + C \int_{|y|\geq \frac{I
    \Delta z}{\|\gamma_y\|_\infty}} (1+|y|)\nu(dy)
\label{cons1}\\
&\left|\frac{\mathbb E[(e^{\widehat Z^{z,t_n}_{t_{n+1}}-z}-1)
    f(\widehat Z^{z,t_n}_{t_{n+1}})] }{\Delta t} -\mu(t_n,z)f(z) - \int_{\mathbb R} ((e^{\gamma(t_n,z,y)}-1)f(z+\gamma(t_n,z,y))
-f(z)\gamma(t_n,z,y))\nu(dy)\right|  \notag\\&\leq C \Delta 
z^{3-\alpha}(k^{3-\alpha} + k^{1-\alpha}) + C \int_{|y|\geq \frac{I
    \Delta z}{\|\gamma_y\|_\infty}} \tau(y)\nu(dy)\label{cons2}\\
&\left|\frac{\mathbb E[(e^{\widehat Z^{z,t_n}_{t_{n+1}}-z}-1)^p
    f(\widehat Z^{z,t_n}_{t_{n+1}})]}{\Delta t}  -\int_{\mathbb R} (e^{\gamma(t_n,z,y)}-1)^pf(z+\gamma(t_n,z,y))
\nu(dy) \right|\notag\\&\leq C \Delta 
z^{3-\alpha}(k^{3-\alpha} + k^{1-\alpha}) + C \int_{|y|\geq \frac{I
    \Delta z}{\|\gamma_y\|_\infty}} \tau^p(y)\nu(dy),\quad p=2,3,4,\label{cons3}
\end{align}
where $\tau$ is defined in Assumption \ref{assumptions_1}-I. 
\end{lemma}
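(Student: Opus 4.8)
The plan is to identify the left-hand sides of \eqref{cons1}--\eqref{cons3} with the action of the discretized generators and to compare these with the continuous operators region by region, following the decomposition $\mathbb R=\Omega_0\cup\Omega_1\cup\Omega_2\cup\Omega_3$ introduced above (all the extra regularity of Theorem \ref{maina.thm}, in particular $y_0=+\infty$ and the boundedness of the $y$-derivatives of $\gamma$, is in force). By the definition of the transition probabilities, $\frac{1}{\Delta t}(\mathbb E[f(\widehat Z^{z,t_n}_{t_{n+1}})]-f(z))$ equals exactly the discretized generator $\widehat{\mathcal L}^Z f(z)$ of \eqref{gendisc}, while $\frac{1}{\Delta t}\mathbb E[(e^{\widehat Z^{z,t_n}_{t_{n+1}}-z}-1)^p f(\widehat Z^{z,t_n}_{t_{n+1}})]$ is the lattice sum $\sum_{\kappa<|l|\leq I}\omega_l(e^{l\Delta z}-1)^p f(z+l\Delta z)+\chi(e^{\Delta z}-1)^p f(z+\Delta z)+\upsilon(e^{-\Delta z}-1)^p f(z-\Delta z)$ (the $l=0$ term drops out because of the factor $e^{\widehat Z-z}-1$). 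I would treat \eqref{cons1} in detail, the other two estimates being the same argument applied to the integrands $(e^\gamma-1)f(z+\gamma)-f\gamma$ and $(e^\gamma-1)^p f(z+\gamma)$. The structural point to record for \eqref{cons1} is that $\hat\mu=\mu-\sum_{\kappa<|i|\leq I}\omega_i\gamma(t,z,y_i)$ together with $\gamma(t,z,y_i)=i\Delta z$ lets me rewrite the drift contribution $\hat\mu f'$ so that the lattice sum $\sum_{\kappa<|l|\leq I}\omega_l(f(z+l\Delta z)-f(z))$ acquires the missing compensator $-\gamma(t,z,y_l)f'(z)$ and matches the structure of $\mathcal L^Z f$.

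For the small jumps I would Taylor expand the central part. Using $\chi+\upsilon=D/\Delta z^2$ and $\chi-\upsilon=\hat\mu/\Delta z$, a third-order expansion of $f(z\pm\Delta z)$ gives $\chi(f(z+\Delta z)-f(z))+\upsilon(f(z-\Delta z)-f(z))=\hat\mu f'(z)+\tfrac{D}{2}f''(z)+\tfrac{\hat\mu\Delta z^2}{6}f'''(z)+O(D\Delta z^2)$. The diffusion term $\tfrac{D}{2}f''$ is compared with $\int_{\Omega_0}(f(z+\gamma)-f(z)-\gamma f')\nu(dy)$: since on $\Omega_0$ one has $|\gamma|\leq\|\gamma_y\|_\infty|y|$ and $\Omega_0\subseteq\{|y|\leq(\kappa+\tfrac12)\Delta z/m_1\}$, the remainder $\int_{\Omega_0}(f(z+\gamma)-f(z)-\gamma f'-\tfrac{\gamma^2}{2}f'')\nu(dy)=O(\int_{\Omega_0}|\gamma|^3\nu(dy))$ is bounded by $C(\kappa\Delta z)^{3-\alpha}$, which yields the term $\kappa^{3-\alpha}\Delta z^{3-\alpha}$. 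The drift-correction remainder $\tfrac{\hat\mu\Delta z^2}{6}f'''$ is controlled with the a priori bound $|\hat\mu|\leq\bar\mu+C(\kappa\Delta z)^{1-\alpha}$ established in the proof of Theorem \ref{maina.thm}, giving $\kappa^{1-\alpha}\Delta z^{3-\alpha}$, while $O(D\Delta z^2)=O(\kappa^{2-\alpha}\Delta z^{4-\alpha})$ is of higher order and absorbed.

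The decisive step is the quadrature error on $\Omega_1\cup\Omega_2$. On $\Omega_1$, where the weights match the second moment, $\omega_i\gamma^2(t,z,y_i)=\int_{y_{i-1/2}}^{y_{i+1/2}}\gamma^2\nu\,dy$, I would write $f(z+\gamma)-f(z)-\gamma f'=\gamma^2 G(y)$ with $G(y)=\int_0^1(1-s)f''(z+s\gamma)\,ds$ smooth and bounded together with its $y$-derivative, so that the error becomes $\sum_i\int_{y_{i-1/2}}^{y_{i+1/2}}(G(y_i)-G(y))\gamma^2\nu\,dy$. The key is that $y_i$ is the midpoint of $[y_{i-1/2},y_{i+1/2}]$ in the coordinate $w=\gamma(t,z,\cdot)$, because $\gamma(t,z,y_{i\pm1/2})=(i\pm\tfrac12)\Delta z$; changing variables to $w$ turns this into a midpoint rule for $\int G\,d\mu$ with $d\mu=w^2\tilde\nu(w)\,dw\sim w^{1-\alpha}dw$, whose leading (first-order) error per cell is $G'(w_i)\int_{I_i}(w-w_i)\,d\mu=O(\Delta z^3\,w_i^{-\alpha})$, the $G''$-contribution being a subdominant $O(\Delta z^2)$. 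Summing $\sum_i\Delta z^3 w_i^{-\alpha}=\Delta z^{3-\alpha}\sum_{i>\kappa}i^{-\alpha}\sim\kappa^{1-\alpha}\Delta z^{3-\alpha}$ then reproduces the required order. This midpoint cancellation, which is exactly what prevents the naive bound $O(\Delta z)$, is the main technical obstacle; away from the origin, on $\Omega_2$ (and for the analogous sums in \eqref{cons2}--\eqref{cons3}), the measure $\nu$ is integrable and $\gamma$ is smooth with bounded derivatives, so the quadrature error is $O(\Delta z^2)$ against finite moments and is dominated by $\Delta z^{3-\alpha}$.

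Finally, the jumps in $\Omega_3$ are simply discarded by the scheme, so they contribute the continuous integral $\int_{\Omega_3}(f(z+\gamma)-f(z)-\gamma f')\nu(dy)$; since $\gamma(t,z,y_{I+1/2})=(I+\tfrac12)\Delta z$ and $|\gamma|\leq\|\gamma_y\|_\infty|y|$ force $\Omega_3\subseteq\{|y|\geq I\Delta z/\|\gamma_y\|_\infty\}$, and $f,f'$ are bounded, this is bounded by $C\int_{|y|\geq I\Delta z/\|\gamma_y\|_\infty}(1+|y|)\nu(dy)$, the last term of \eqref{cons1}. For \eqref{cons2} and \eqref{cons3} the same four-region analysis applies verbatim to the integrands $(e^\gamma-1)f(z+\gamma)-f\gamma$ and $(e^\gamma-1)^p f(z+\gamma)$, with $\hat\mu$ again absorbing the compensator in \eqref{cons2} and no compensator needed for $p\geq2$; the only change is that in the large-jump tail the bounds $|(e^\gamma-1)f(z+\gamma)-f\gamma|\leq C\tau(y)$ and $|(e^\gamma-1)^p f(z+\gamma)|\leq C\tau^p(y)$ replace $C(1+|y|)$, producing the tails $\int_{|y|\geq I\Delta z/\|\gamma_y\|_\infty}\tau(y)\nu(dy)$ and $\int_{|y|\geq I\Delta z/\|\gamma_y\|_\infty}\tau^p(y)\nu(dy)$, which are finite by Assumption \ref{assumptions_1}-$\mathbf{[I]}$ since $\tau\in\mathbb L^4$ away from the origin and $\nu$ has finite mass there.
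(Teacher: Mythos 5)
Your proposal is correct and follows essentially the same route as the paper's proof: you identify the expectations with the discretized generator, use the same four-region decomposition $\Omega_0\cup\Omega_1\cup\Omega_2\cup\Omega_3$, Taylor-expand the central-difference/diffusion part against the $\Omega_0$ integral exactly as the paper does in its terms \eqref{line1}--\eqref{line3}, exploit the same midpoint cancellation in the coordinate $w=\gamma(t,z,\cdot)$ (with the error controlled by the derivative of the transformed L\'evy density) that the paper uses in \eqref{line5}--\eqref{line6}, and bound the discarded $\Omega_3$ tail by $C\int(1+|y|)\nu(dy)$, resp.\ $C\int\tau(y)\nu(dy)$ and $C\int\tau^p(y)\nu(dy)$, as in \eqref{line4}. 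The only cosmetic difference is bookkeeping: you sum per-cell midpoint errors $O(\Delta z^3 w_i^{-\alpha})$ over $i>\kappa$, while the paper packages the same cancellation into integrals of $|\nu'|$ (via the functions $F,\widetilde F$, $G,\widetilde G$, $H,\widetilde H$), yielding identical bounds.
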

\begin{proof}
We begin with \eqref{cons1}. By definition of $\widehat Z$, the
expression inside the absolute value on the left-hand side may be
decomposed into the following terms:
\begin{align}
&\frac{f(z+\Delta z) + f(z-\Delta z) - 2f(z) - f^{\prime\prime}(z)
  \Delta z^2}{2\Delta z^2}D(t_n,z) \label{line1}\\
&- \int_{ \Omega_0(t_n,z)} \left\{f(z + \gamma(t_n,z,y)) - f(z)
  - f'(z)\gamma(t_n,z,y) - \frac{1}{2} f^{\prime\prime}(z)
  \gamma^2(t_n,z,y)\right\}\nu(dy) \label{line2}\\
& + \frac{f(z+\Delta z) - f(z-\Delta z) - 2f'(z)\Delta z }{2\Delta
  z}\hat \mu(t_n,z) \label{line3}\\
& - \int_{\Omega_3(t_n,z)} (f(z+\gamma(t_n,z,y))-
  f(z) - \gamma(t_n,z,y)f'(z))\nu(dy) \label{line4}\\
& + \sum_{i:y_i(t_n,z) \in \Omega_2}
\int_{y_{i-1/2}(t,z)}^{y_{i+1/2}(t,z) }(F(z,\gamma(t_n,z,y_i)) -
F(z,\gamma(t_n,z,y)))\nu(dy) \label{line5}\\
& + \sum_{i:y_i(t_n,z) \in \Omega_1}
\int_{y_{i-1/2}(t,z)}^{y_{i+1/2}(t,z) }(\widetilde
F(z,\gamma(t_n,z,y_i))-\widetilde F(z,\gamma(t_n,z,y)))y^2\nu(dy), \label{line6}
\end{align}
where we denote
\begin{align*}
&F(z,y) = f(z+y) - f(z) - yf'(z)\\
&\text{and}\quad \widetilde F(z,y) = \frac{f(z+y) - f(z) - yf'(z)}{\gamma^{-1}(t,z,y)^2},
\end{align*}
The function $F$ clearly satisfies
$$
\|F'_y\|_\infty \leq 2 \|f'\|_\infty,\quad \|F^{\prime\prime}_y\|_\infty \leq  \|f^{\prime\prime}\|_\infty.
$$
As for the function $\widetilde F$, we can write it as $\widetilde F =
u(z,y)v^2(t,z,y)$ with 
\begin{align*}
&u(z,y) = \frac{f(z+y) - f(z) - yf'(z)}{y^2} = \int _0^1
f^{\prime\prime}(z + \theta y) (1-\theta)d\theta,\\
& \text{and}\quad
v(t,z,y) = \frac{y}{\gamma^{-1}(t,z,y)} = \int_0^1 \gamma_y(\theta\gamma^{-1}(y))d\theta.
\end{align*}
It is easy to see that under our assumptions the functions $u$ and $v$
are bounded together with their first and second derivatives. It
follows that $\|\widetilde F'_y\|_\infty <\infty$ and $
\|\widetilde F^{\prime\prime}_y\|_\infty <\infty$.

Remark also that by the mean value theorem and our assumptions on $\gamma$,
$$
\left|\sum_{i:y_i(t,z)\in \Omega_1 \cup
  \Omega_2} \omega_i(t,z) \gamma(t,z,y_i(t,z))\right|\leq
C\int_{\Omega_1 \cup \Omega_2} |\gamma(t,z,y)|
\nu(dy) 
$$
for some constant $C$ which does not depend on truncation /
discretization parameters. 

The terms in (\ref{line1}-\ref{line6}) admit the following bounds:
\begin{align*}
|\eqref{line1}| &\leq \frac{\|f^{(4)}\|_\infty}{24} \Delta z^2
\int_{\Omega_0}\gamma^2(t_n,z,y)\nu(dy) \leq
\frac{\|f^{(4)}\|_\infty}{24} \Delta z^2 \|\gamma_y\|^2_\infty
\int_{y \leq \frac{(k+1/2)\Delta z}{m_1}} y^2\nu(dy)\\ & \leq C \Delta z^2
(k \Delta z)^{2-\alpha},
\end{align*}
\begin{align*}
|\eqref{line2}| &\leq \frac{\|f^{(3)}\|}{6} \int_{\Omega_0}
\gamma^3(t_n,z,y)\nu(dy)\leq C(k\Delta z)^{3-\alpha},\\
|\eqref{line3}| &\leq \frac{\|f^{(3)}\|}{6} \Delta z^2 (\bar \mu  +
\int_{\Omega_1 \cup \Omega_2}
|\gamma(t_n,z,y)|\nu(dy)) \\ &\leq \frac{\|f^{(3)}\|}{6} \Delta z^2 (\bar \mu  +\|\gamma_y\|
\int_{ |y|\geq \frac{(k+1/2)\Delta z}{\|\gamma_y\|_\infty}}
|y|\nu(dy))\leq C\Delta z^2 (k\Delta z)^{1-\alpha},\\
|\eqref{line4}| & \leq 2\|f\|_\infty \nu(\Omega_3)
+\|f'\|_\infty \int_{\Omega_3} |\gamma(t,z,y)|\nu(dy)  \leq C
\int_{|y|\geq \frac{I\Delta z}{\|\gamma_y\|_\infty} }(1+|y|)\nu(dy),\\
|\eqref{line5}| & = \left|\sum_{i:y_i(t_n,z) \in \Omega_2}
\int_{(i-1/2)\Delta z}^{(i+1/2)\Delta z }(F(z,\zeta) -
F(z,i\Delta
z))\frac{\nu(\gamma^{-1}(\zeta))}{\gamma_y(\gamma^{-1}(\zeta))}d\zeta\right|\\
&\leq \sum_{i:y_i(t_n,z) \in \Omega_2} \left\{\frac{\Delta z^2}{4}
\|F_{yy}\|_\infty \int_{y_{i-1/2}}^{y_{i+1/2}} \nu(dy) +
\|F_{y}\|_\infty \left|\int_{(i-1/2)\Delta z}^{(i+1/2)\Delta z }(\zeta -i\Delta
z)\frac{\nu(\gamma^{-1}(\zeta))}{\gamma_y(\gamma^{-1}(\zeta))}d\zeta\right|\right\}\\
&\leq C \Delta z^2\int_{\Omega_2} \nu(dy)+ \sum_{i:y_i(t_n,z)
  \in \Omega_2} \|F_{y}\|_\infty\frac{\Delta
  z^2}{4}\int_{(i-1/2)\Delta z}^{(i+1/2)\Delta z
}\left|\frac{d}{d\zeta}
  \frac{\nu(\gamma^{-1}(\zeta))}{\gamma_y(\gamma^{-1}(\zeta))}\right|
d\zeta\\
&\leq C\Delta z^2 \left(\int_{\Omega_2} \nu(dy) +
  \int_{\Omega_2} |\nu'(y)| dy\right)\leq C \Delta z^2. 
\end{align*}
Similarly, 
\begin{align*}
|\eqref{line6}| & = \left|\sum_{i:y_i(t_n,z) \in \Omega_1}
\int_{(i-1/2)\Delta z}^{(i+1/2)\Delta z }(\widetilde F(z,\zeta) -
\widetilde F(z,i\Delta
z)) \gamma^{-1}(\zeta)^2\frac{\nu(\gamma^{-1}(\zeta))}{\gamma_y(\gamma^{-1}(\zeta))}d\zeta\right|\\
&\leq \frac{\Delta
  z^2}{4}\sum_{i:y_i(t_n,z) \in \Omega_1} \left\{
\|\widetilde F_{yy}\|_\infty \int_{y_{i-1/2}}^{y_{i+1/2}}
y^2 \nu(dy) + \|\widetilde F_{y}\|_\infty\int_{(i-1/2)\Delta z}^{(i+1/2)\Delta z
}\left|\frac{d}{d\zeta}\left(
  \gamma^{-1}(\zeta)^2\frac{\nu(\gamma^{-1}(\zeta))}{\gamma_y(\gamma^{-1}(\zeta))}\right)\right|
d\zeta\right\}\\
&\leq C\Delta z^2 \left(\int_{\Omega_1} y^2 \nu(dy) +\int_{\Omega_1} |y| \nu(dy) +
  \int_{\Omega_1} |y^2\nu'(y)| dy\right)\leq C \Delta z^2
(k\Delta z)^{1-\alpha}. 
\end{align*}
We finish the proof of \eqref{cons1} by observing that given that $k\Delta z$ is small
(this is the small jump truncation level), the leading contribution is
made by terms \eqref{line2}, \eqref{line3}, \eqref{line4} and
\eqref{line6}. 

We next prove \eqref{cons2}. The
expression inside the absolute value on the left-hand side may be
decomposed into the following terms:
\begin{align}
&\frac{(e^{\Delta z}-1)f(z+\Delta z) + (e^{-\Delta z}-1)f(z-\Delta z) - (f(z)+2f'(z))
  \Delta z^2}{2\Delta z^2}D(t_n,z) \label{line1.2}\\
&- \int_{\Omega_0(t_n,z)} \left\{(e^{\gamma(t,z,y)}-1)f(z + \gamma(t_n,z,y)) 
  - f(z)\gamma(t_n,z,y) - \frac{1}{2} (f(z)+2f^{\prime}(z))
  \gamma^2(t_n,z,y)\right\}\nu(dy) \label{line2.2}\\
& + \frac{(e^{\Delta z}-1)f(z+\Delta z) - (e^{-\Delta z}-1)f(z-\Delta z) - 2f(z)\Delta z }{2\Delta
  z}\hat \mu(t_n,z) \label{line3.2}\\
& - \int_{\Omega_3(t_n,z)} ((e^{\gamma(t_n,z,y)}-1)f(z+\gamma(t_n,z,y))-
  \gamma(t_n,z,y)f(z))\nu(dy) \label{line4.2}\\
& + \sum_{i:y_i(t_n,z) \in \Omega_2}
\int_{y_{i-1/2}(t,z)}^{y_{i+1/2}(t,z) }(G(z,\gamma(t_n,z,y_i)) -
G(z,\gamma(t_n,z,y)))\nu(dy) \label{line5.2}\\
& + \sum_{i:y_i(t_n,z) \in \Omega_1}
\int_{y_{i-1/2}(t,z)}^{y_{i+1/2}(t,z) }(\widetilde
G(z,\gamma(t_n,z,y_i))-\widetilde G(z,\gamma(t_n,z,y)))y^2\nu(dy), \label{line6.2}
\end{align}
where now the functions
\begin{align*}
G(z,y) := (e^{y}-1)f(z+y) -y f(z)\quad \text{and}\quad \widetilde G(z,y) := \frac{(e^y-1)f(z+y) - yf(z)}{\gamma^{-1}(t,z,y)^2}
\end{align*}
satisfy
$$
|G'_y| + |G^{\prime\prime}_y\| \leq C(e^y + 1),\quad y \in \mathbb R
$$
and 
$$
|\widetilde G'_y| + |\widetilde G^{\prime\prime}_y|  \leq C, \quad |y|\leq 1.
$$
The latter estimate, in particular, follows from the Taylor formula representation
$$
\frac{(e^y-1)f(z+y) - yf(z)}{y^2} = \int_0^1 \{e^{\theta y}(f(z+\theta
y)) + 2 e^{\theta y}(f'(z+\theta
y))  + (e^{\theta y}-1)(f^{\prime\prime}(z+\theta
y)) \}(1-\theta)d\theta.
$$
The different terms can be estimated in a manner, similar to the first
part of the proof:
\begin{align*}
&|\eqref{line1.2} | \leq C \Delta z^2 (k\Delta z)^{2-\alpha},\qquad
|\eqref{line2.2} | \leq C  (k\Delta z)^{3-\alpha},\qquad
|\eqref{line3.2}|\leq C \Delta z^2 (k\Delta z)^{1-\alpha},\\
&|\eqref{line4.2}| \leq C
\int_{|y|\geq \frac{I\Delta z}{\|\gamma_y\|_\infty}
}\tau(y)\nu(dy),\\ &|\eqref{line5.2}| \leq C\Delta z^2 \left(\int_{\Omega_2} (e^{\gamma(t_n,z,y)}+1)\nu(dy) +
  \int_{\Omega_2} (e^{\gamma(t_n,z,y)}+1)|\nu'(y)|
  dy\right)\leq C \Delta z^2\\
&|\eqref{line6.2}| \leq C \Delta z^2 (k\Delta z)^{1-\alpha}.
\end{align*}

Finally, \eqref{cons3} can be proven in a similar manner. For example,
for $p=2$, we
decompose the 
expression inside the absolute value on the left-hand side into the
following terms:
\begin{align*}
&\frac{(e^{\Delta z}-1)^2f(z+\Delta z) + (e^{-\Delta z}-1)^2f(z-\Delta z) - 2f(z)
  \Delta z^2}{2\Delta z^2}D(t_n,z) \\
&- \int_{\Omega_0(t_n,z)} \left\{(e^{\gamma(t,z,y)}-1)^2 f(z + \gamma(t_n,z,y)) 
  -f(z)
  \gamma^2(t_n,z,y)\right\}\nu(dy) \\
& + \frac{(e^{\Delta z}-1)^2f(z+\Delta z) - (e^{-\Delta z}-1)^2f(z-\Delta z) }{2\Delta
  z}\hat \mu(t_n,z) \\
& - \int_{\Omega_3(t_n,z)} (e^{\gamma(t_n,z,y)}-1)^2f(z+\gamma(t_n,z,y))\nu(dy) \\
& + \sum_{i:y_i(t_n,z) \in \Omega_2}
\int_{y_{i-1/2}(t,z)}^{y_{i+1/2}(t,z) }(H(z,\gamma(t_n,z,y_i)) -
H(z,\gamma(t_n,z,y)))\nu(dy) \\
& + \sum_{i:y_i(t_n,z) \in \Omega_1}
\int_{y_{i-1/2}(t,z)}^{y_{i+1/2}(t,z) }(\widetilde
H(z,\gamma(t_n,z,y_i))-\widetilde H(z,\gamma(t_n,z,y)))y^2\nu(dy), 
\end{align*}
with 
\begin{align*}
H(z,y) := (e^{y}-1)^2f(z+y)\quad \text{and}\quad \widetilde H(z,y) := \frac{(e^y-1)^2f(z+y) }{\gamma^{-1}(t,z,y)^2}.
\end{align*}
\end{proof}

\begin{lemma}\label{exit.lm}
There exists a constant $C<\infty$ which does not depend on the
truncation / discretization parameters such that for all $n \in
\mathbb N$ and all $K> \bar \mu
n \Delta t$, the approximating Markov chain $\widehat Z$ satisfies
$$
\mathbb P[\max_{0\leq i \leq n} |Z^{z,t_0}_{t_i}| \geq K] \leq \frac{z^2+C
n \Delta t}{ (K-\bar \mu n \Delta t)^2}. 
$$
\end{lemma}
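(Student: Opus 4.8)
The plan is to strip the (bounded) drift off $\widehat Z$, turning it into a discrete-time martingale, control the martingale's $L^2$-norm by its increment variances, and then close with Doob's maximal inequality; the translation back to $\widehat Z$ costs exactly the shift $\bar\mu n\Delta t$ appearing in the statement.

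First I would identify the drift of the chain. Applying $\mathbb E[\varphi(\widehat Z^{z,t_n}_{t_{n+1}})]=\varphi(z)+\Delta t\,\widehat{\mathcal L}^Z\varphi(z)$ to $\varphi(z)=z$, and using that $\chi-\upsilon=\hat\mu/\Delta z$ for \emph{both} the central and the non-central choice of the weights (since $\max(0,a)-\max(0,-a)=a$), together with $\hat\mu=\mu-\sum_{\kappa<|i|\le I}\omega_i\gamma(t,z,y_i)$ and $\gamma(t,z,y_i)=i\Delta z$, one finds $\widehat{\mathcal L}^Z(\mathrm{id})(z)=\mu(t,z)$. Hence
\[
\mathbb E[\widehat Z^{z,t_0}_{t_{i+1}}\mid\mathcal F_{t_i}]=\widehat Z^{z,t_0}_{t_i}+\mu(t_i,\widehat Z^{z,t_0}_{t_i})\Delta t .
\]
Setting $A_i:=\sum_{k=0}^{i-1}\mu(t_k,\widehat Z^{z,t_0}_{t_k})\Delta t$ and $Y_i:=\widehat Z^{z,t_0}_{t_i}-A_i$ makes $(Y_i)$ a martingale with $Y_0=z$, and $\|\mu\|_\infty\le\bar\mu$ gives $|A_i|\le\bar\mu\,i\Delta t\le\bar\mu\,n\Delta t$ for $i\le n$.

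Next I would bound the increments. By orthogonality of martingale increments $\mathbb E[Y_n^2]=z^2+\sum_{i=0}^{n-1}\mathbb E[(\Delta Y_i)^2]$, where $\mathbb E[(\Delta Y_i)^2\mid\mathcal F_{t_i}]$ is the conditional variance of an increment of $\widehat Z$, which by the explicit transition probabilities equals
\[
\Delta t\left[(\chi+\upsilon)\Delta z^2+\sum_{\kappa<|j|\le I}\omega_j(j\Delta z)^2\right].
\]
Here $(\chi+\upsilon)\Delta z^2=D(t,z)=\int_{\Omega_0}\gamma^2\nu(dy)$, while $\gamma(t,z,y_j)=j\Delta z$ turns the $\Omega_1$ part of the sum into $\int_{\Omega_1}\gamma^2\nu(dy)$ and the $\Omega_2$ part into $\sum_{y_j\in\Omega_2}\gamma^2(t,z,y_j)\int_{y_{j-1/2}}^{y_{j+1/2}}\nu(dy)$. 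The first two combine to $\int_{\Omega_0\cup\Omega_1}\gamma^2\nu(dy)$, finite and uniformly bounded since $|\gamma|\lesssim 1$ there and $|\gamma(t,z,y)|\le C|y|$ near $0$ with $\int(1\wedge|y|^2)\nu(dy)<\infty$; for the $\Omega_2$ term, monotonicity of $\gamma$ yields $\gamma^2(t,z,y_j)\le C\gamma^2(t,z,y)$ on the $j$-th interval (the ratio $(|j|+1/2)/(|j|-1/2)$ being bounded for $|j|\ge\zeta$), so it is dominated by $C\int_{|y|\ge1}\gamma^2\nu(dy)\le C\int_{|y|\ge y_0}\tau^2\nu(dy)<\infty$ by Assumption \ref{assumptions_1}-$\mathbf{[I]}$. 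Thus the conditional variance is at most $C\Delta t$ with $C$ independent of the discretization parameters, and $\mathbb E[Y_n^2]\le z^2+Cn\Delta t$.

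Finally I would apply Doob's $L^2$ maximal inequality, $\mathbb P[\max_{0\le i\le n}|Y_i|\ge\lambda]\le\mathbb E[Y_n^2]/\lambda^2$. Since $|\widehat Z^{z,t_0}_{t_i}|\le|Y_i|+|A_i|\le|Y_i|+\bar\mu n\Delta t$, the event $\max_i|\widehat Z^{z,t_0}_{t_i}|\ge K$ forces $\max_i|Y_i|\ge K-\bar\mu n\Delta t$, a positive threshold precisely because $K>\bar\mu n\Delta t$; taking $\lambda=K-\bar\mu n\Delta t$ gives the claim. I expect the only genuinely delicate step to be the uniform-in-discretization control of the conditional variance: the large-jump ($\Omega_2$) contribution must be matched to $\int\gamma^2\nu$ via monotonicity of $\gamma$, and the small-jump/singularity region via the integrability assumptions $\mathbf{[L]}$ and $\mathbf{[I]}$; the martingale decomposition and the Doob estimate are then routine.
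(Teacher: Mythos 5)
Your proof is correct and follows essentially the same route as the paper: subtract the (bounded) drift to obtain a discrete-time martingale, bound $\mathbb E[Y_n^2]$ by $z^2$ plus the sum of conditional variances (uniformly $O(\Delta t)$ per step, via the structure of the weights $\omega_j$, $D$ and integrability of $\gamma^2\nu$), and conclude with Doob's $L^2$ maximal inequality after shifting the threshold by $\bar\mu n\Delta t$. Your variance estimate is in fact more explicit than the paper's (which merely invokes the mean value theorem and boundedness of the derivatives of $\gamma$); the only hairs to split are that your displayed expression is the conditional second moment rather than the variance (an upper bound, so harmless) and that $(\chi+\upsilon)\Delta z^2=D$ holds for the central weights \eqref{centra}--\eqref{centrb}, the case to which the paper's convergence proofs reduce, while the non-central choice adds a uniformly bounded extra term $|\hat\mu|\Delta z$.
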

\begin{proof}
By construction, the approximating chain $\widehat Z$ satisfies
\begin{align*}
\mathbb E[\widehat Z_{t_{i+1}}| \widehat Z_{t_i}] &=
\mu(t_i,Z_{t_i})\Delta t\\ \text{Var}\,[\widehat
Z_{t_{i+1}}| \widehat Z_{t_i}] &= D(t_i,\widehat Z_{t_i})\Delta t -
\mu^2(t_i,Z_{t_i})\Delta t^2 + \Delta t \sum_{l: y_l(t_i,\widehat
  Z_{t_i}) \in \Omega_1 \cup \Omega_2}
\omega_l(t_i,\widehat Z_{t_i}) \gamma^2(t_i,\widehat Z_{t_i},
y_l(t_i,\widehat Z_{t_i})).
\end{align*}
Therefore, the process $M_{t_i} = \widehat Z^{z,t_0}_{t_i} -
\sum_{0\leq 
  j < i} \mu(t_j, \widehat Z^{z,t_0}_{t_j}) \Delta t$
is a discrete-time martingale. By Doob's martingale inequality we then
get
\begin{align*}
\mathbb P[\max_{0\leq i \leq n} |Z^{z,t_0}_{t_i}| \geq K] &\leq \mathbb
P[\max_{0\leq i \leq n} |M_{t_i}| \geq K- \bar \mu \Delta t n]\leq
\frac{\mathbb E[M^2_{t_n}]}{(K- \bar \mu \Delta t n)^2}\\
&\leq \frac{z^2 +\mathbb E[\sum_{0<j\leq n} \text{Var}\, [\widehat
  Z^{z,t_0}_{j+1}|\widehat Z^{z,t_0}_j]]}{(K- \bar \mu \Delta t n)^2}
\\
&\leq \frac{z^2 +\mathbb E[\sum_{0<j\leq n} (D(t_j,\widehat Z_{t_j}) +
  \sum_{l:y_l\in \Omega_1 \cup \Omega_2}
  \gamma^2(t_j,\widehat Z_{t_j},y_l) \omega_l(t_j,\widehat
  Z_{t_j}))]\Delta t}{(K- \bar \mu \Delta t n)^2}.
\end{align*}
We finish the proof by applying the mean value theorem and using the
boundedness of the derivatives of $\gamma$. 
\end{proof}

\section{Proof of Proposition \ref{lem_transf}}\label{proof:trasf}
%%%%%%%%%%%%%%%%%%%%%%%%
\begin{proof}
Before we start, remark that the function $A\mapsto F_{d,T}(A)$ is strictly increasing, so invertible, and infinitely differentiable: in particular 
\begin{align*}
\Phi'( A) =&\frac{\int_T^{T+d}\psi(0,s)l(s) e^{l(s) A}ds}{\int_T^{T+d}\psi(0,s) e^{l(s) A}ds}   \\
\Phi''( A) =&\frac{\left(\int_T^{T+d}\psi(0,s)l^2(s) e^{l(s) A}ds\right)\left(\int_T^{T+d}\psi(0,s) e^{l(s) A}ds\right)-\left(\int_T^{T+d}\psi(0,s)l(s) e^{l(s) A}ds\right)^2}{\left(\int_T^{T+d}\psi(0,s) e^{l(s) A}ds\right)^2}  
\end{align*}
from which we deduce
\begin{align*}
e^{-c(T+d)}\leq \Phi'(A) \leq e^{-cT} && \text{and}&& e^{-2c(T+d)} - e^{-2cT} \leq \Phi''(A) \leq e^{-2cT} - e^{-2c(T+d)} 
\end{align*}
From \itos formula, we obtain
\begin{align*}
 dZ_t = &\left(\Phi'(A_t)e^{c t}\zeta + \int_{\R}\left( \Phi(A_{t-}+e^{c t}y)-\Phi(A_{t-}) -ye^{c t}\Phi'(A_{t-})\right)\nu(dy)\right)dt \\
+&  \int_\R \left( \Phi(A_{t-}+e^{c t}y)-\Phi(A_{t-})\right) \tilde J(dydt) 
\end{align*}
or equivalently 
$$
dZ_t=\mu(t,Z_t)dt + \int\gamma(t,Z_{t-},y)\tilde J(dydt)
$$
We can now prove that $\mu$ and $\gamma$ verify the Assumptions \ref{assumptions_1}. We detail the computations only for the function $\gamma$, since similar computations can be done for $\mu$. First we remark that $z\mapsto \gamma(t,z,y)$ is differentiable and  we can compute this derivative to obtain
\begin{align*}
\partial_z\gamma(t,z,y)  =& -1 + (\Phi'(\Phi^{-1}(z)))^{-1} \Phi'(\Phi^{-1}(z)+ye^{c t}) \\
=& e^{c t}y  (\Phi'(\Phi^{-1}(z)))^{-1} \int_0^1 \Phi'' (\Phi^{-1}(z)+re^{c t})dr
\end{align*}
so that 
\begin{align*}
\left| \partial_z\gamma(t,z,y)\right|  =&\left| e^{c t}y  (\Phi'(\Phi^{-1}(z)))^{-1} \int_0^1 \Phi'' (\Phi^{-1}(z)+re^{c t})dr\right|\\
\leq & |y| e^{c T} (\inf_{A}\left| \Phi (A)\right| )^{-1}\left\|\Phi'' \right\|_\infty \leq e^{c T} e^{-c(T+d)}\left\|\Phi'' \right\|_\infty |y|\\
\leq & |y| e^{c T} e^{c(T+d)} \left(e^{-2cT} - e^{-2c(T+d)} \right) \leq e^{c d} |y|
\end{align*}
From the bounds on the first and second derivative of $\Phi$ we obtain $\sup_{t,z}|\partial_z\gamma(t,z,y)|\leq e^{c d} |y|$, which gives us the function $\rho$ introduced in Assumptions \ref{assumptions_1}. 
Again by the definition of $\Phi$ in \eqref{future} we have 
$$
\exp(e^{-c(T+d)}y)-1\leq e^{\gamma(t,z,y)}-1 \leq  e^y-1 
$$
if $y>0$ and the inverse inequality stands in force if $y<0$ which yield $\sup_{t,z}|e^{\gamma(t,z,y)}-1|\leq |e^y-1|$. According to the definition of the function $\tau$ given in Assumptions \ref{assumptions_1} and the estimations above we deduce that 
$$
\tau\left(y\right):=\max\left(\sup_{t,z}\left( |\gamma\left(t,z,y\right)|,\left|e^{\gamma\left(t,z,y\right)}-1\right|\right),\, \rho(y)\right) =e^{c d}\max\left(|y|,\left|e^y-1\right|\right)
$$
If follows then that Assumptions \ref{assumptions_1}-$\mathbf{[C,I,L]}$ hold true. For Assumption \ref{assumptions_1}-$\mathbf{[ND]}$ we have, from the definition of $\gamma$
$$
\left(e^{\gamma(t,z,y)}-1\right)^2 \geq \left( \exp(e^{-c(T+d)}y)-1\right)^2
$$
so then, for some positive $M>0$ we have
\begin{align*}
\Gamma(y):=&\int_\R\inf_{t,z}\left(e^{\gamma(t,z,y)}-1\right)^2 \nu(dy)\geq \int_\R\inf_{t,z}\left( \exp(e^{-c(T+d)}y)-1\right)^2\nu(dy)\\ 
\geq& M \int_{|y|\leq \epsilon} |y|^{1-\alpha} g(y) dy >0
\end{align*}
since $g(0^+)$ and $g(0^-)$ are strictly positive, we can select $\epsilon$ small enough and obtain
$$
\Gamma(y) \geq M \int_{|y|\leq \epsilon} |y|^{1-\alpha} dy >0
$$
\bigskip
We can derive $\gamma$ w.r.t $y$ to obtain 
\begin{align*}
\gamma_y(t,z,y) =  &e^{c t}\Phi'(\Phi^{-1}(z)+e^{c t}y)\\
\gamma_{yy}(t,z,y)=&e^{2c t}\Phi''(\Phi^{-1}(z)+e^{c t}y)
\end{align*}
so then $e^{-c(T+d)}\leq |\gamma_y(t,z,y)|\leq e^{c T}$ and $|\gamma_{yy}(t,z,y)|\leq e^{2c T}$, which proves that Assumptions \ref{assumptions_1}-$\mathbf{[RG_i]}$ holds true. For Assumption \ref{assumptions_1}-$\mathbf{[RG_{iii}]}$, one can differentiate $\gamma_y$ w.r.t. $z$ and give for it an upper bound to prove that indeed $z\to \gamma_y(t,z,y)$ is \lip continuous uniformly in $t,y$. The Assumption \ref{assumptions_1}-$\mathbf{[RG_{ii}]}$ does not hold true since trivially $\gamma_y(t,z,y) =  e^{c t}\Phi'(\Phi^{-1}(z)+e^{c t}y)\neq 1$ .\\
\noindent The last thing we need to prove is the condition 4 assumed
in Theorem \ref{maina.thm}, i.e. that $\gamma$ is 3 times differentiable w.r.t. $y$ with bounded derivatives. From the definition of $\gamma$, this is equivalent to prove that $\Phi$ is 3 times differentiable with bounded derivatives. Let us introduce
$$
p_i (A) : = \int_T^{T+d}\psi(0,s)l^i(s) e^{l(s) A}ds 
$$
so that $ \Phi'(A) = p_1(A) / p_0(A)$ and $\Phi''(A) = (p_2(A)p_0(A) - p_1(A) ^2)/ p_0(A)^2$. By remarking that $p_i'(A) = p_{i+1}(A)$, we can differentiate $\Phi''$ to obtain
$$
\Phi^{(iii)}(A) = \frac{p_3(A)p_0(A)+ p_2(A)p_1(A) - 2p_1(A)p_2(A)}{p_0(A)^2} - 2 \Phi''(A) \Phi'(A)
$$
with $\Phi'$ and $\Phi''$ bounded as already proved. With the same type of computation it is straightforward to prove that $\Phi^{(iii)}(A)$ is also bounded.

\end{proof}

\end{document}